\newtheorem{theorem}{Theorem}[section]
\definecolor{myYellow}{rgb}{1.0, 0.87, 0.0}
\definecolor{myYtable}{rgb}{1.0, 0.75, 0.0}
\definecolor{umass}{rgb}{0.004, 0.3215, 0.5568}
\definecolor{etonblue}{rgb}{0.59, 0.78, 0.64}
 \definecolor{mycolor}{rgb}{0.12, 0.3, 0.17}
\definecolor{myblue}{rgb}{0.03, 0.27, 0.49}
\definecolor{mediumjunglegreen}{rgb}{0.11, 0.21, 0.18}
\definecolor{asparagus}{rgb}{0.53, 0.66, 0.42}
\definecolor{goldenpoppy}{rgb}{0.99, 0.76, 0.0}
\definecolor{lincolngreen}{rgb}{0.11, 0.35, 0.02}
\definecolor{red(ncs)}{rgb}{0.77, 0.01, 0.2}
\definecolor{darkorange}{rgb}{1.0, 0.55, 0.0}
\definecolor{safetyorange(blazeorange)}{rgb}{1.0, 0.4, 0.0}
\definecolor{smokeytopaz}{rgb}{0.58, 0.25, 0.03}
\definecolor{brightgreen}{rgb}{0.4, 1.0, 0.0}
\definecolor{tue}{rgb}{0.82, 0.14, 0.14}
\newcommand{\EE}{{\mathbb{E}}}
\newcommand*\diff{\mathop{}\!\mathrm{d}}
\newcommand{\ja}{\textcolor{black}}
\journal{arXiv}
\begin{document}

\begin{frontmatter}

\title{Rank-based concordance for zero-inflated data: New representations, estimators, and sharp bounds}

\author{Jasper Arends\footnote{Department of Mathematics and Computer Science, Eindhoven University of Technology, Groene Loper 5, 5612 AZ, Eindhoven, The Netherlands, E-mail: j.r.m.arends@tue.nl},\
       Guanjie Lyu\footnote{ %Corresponding author at: 
		Department of Mathematics and Statistics, University of Windsor, Canada, E-mail: lvg@\,uwindsor.ca} ,\
	Mhamed Mesfioui\footnote{Department of Mathematics and Computer Science,
University of Quebec in Trois-Riv\`{e}res, Canada, E-mail: mhamed.mesfioui@uqtr.ca}, \
     Elisa Perrone\footnote{Department of Mathematics and Computer Science, Eindhoven University of Technology, Groene Loper 5, 5612 AZ, Eindhoven, The Netherlands, E-mail: e.perrone@tue.nl}
	 \ and\
	Julien Trufin\footnote{Department of Mathematics, Universit\'e Libre de Bruxelles (ULB), Belgium,  E-mail: julien.trufin@ulb.be}
}

	%%%%%%    TEXT START    %%%%%%
	\begin{abstract}
 Quantifying concordance between two random variables is crucial in applications. Traditional estimation techniques for commonly used concordance measures, such as Gini's gamma or Spearman's rho, often fail when data contain ties. 
This is particularly problematic for zero-inflated data, characterized by a combination of discrete mass in zero and a continuous component, which frequently appear in insurance, weather forecasting, and biomedical applications. This study provides a new formulation of Gini’s gamma and Spearman’s footrule, two rank-based concordance measures that incorporate absolute rank differences, tailored to zero-inflated continuous distributions. Along the way, we correct an expression of Spearman's rho for zero-inflated data previously presented in the literature. The best-possible upper and lower bounds for these measures in zero-inflated continuous settings are established, making the estimators useful and interpretable in practice. 
We pair our theoretical results with simulations and two real-life applications in insurance and weather forecasting, respectively. Our results illustrate the impact of zero inflation on dependence estimation, emphasizing the benefits of appropriately adjusted zero-inflated measures.
\end{abstract}

%%%%%%KEYWORD%%%%%%%
\begin{keyword}
Gini's gamma \sep Spearman's footrule \sep Spearman's rho \sep zero-inflated data \sep attainable bounds. 
%% keywords here, in the form: keyword \sep keyword
%% PACS codes here, in the form: \PACS code \sep code
%% MSC codes here, in the form: \MSC code \sep code
%% or \MSC[2008] code \sep code (2000 is the default)
\end{keyword}
\end{frontmatter}

	%---------------------------------------------------------------------	
	\section{Introduction}\label{sec:introduction}
	%---------------------------------------------------------------------	
% \linenumbers	
 Measuring association plays a crucial role in statistical analysis, including in applications involving zero-inflated continuous distributions, i.e., continuous distributions with an extra probability mass in zero. Many real-world datasets exhibit a combination of discrete and continuous components, such as those encountered in finance, insurance, weather forecasting, and biomedical studies, where observations contain a substantial proportion of zeros. Traditional association measures, including Kendall’s tau and Spearman’s rho \cite{Nelsen2006}, are widely used to assess concordance in purely continuous settings. 
 However, their direct application to zero-inflated continuous settings can result in misleading conclusions due to zero-inflated structures. To accurately capture dependence in such datasets, refined measures are needed that effectively address the mixed nature of zero-inflated continuous distributions.
 
% For continuously distributed random variables, these measures take values within the interval $[-1, 1]$ and are margin-free, with the extremes representing the strongest negative and positive association, respectively \cite{nelsen_06}. 
% If the variables include a discrete element, their attainable range is restricted to a sub-interval of the range $[-1, 1]$, which changes depending on the unknown discrete margins (see, e.g., \cite{denuit_constraints_2002, Neslehova_07}). 
% This difference might lead to an incorrect interpretation of the value of these concordance measures in relation to the strength of the association.
% In fact, a small positive value of $\rho_S$ might still indicate a significant level of association if the attainable range is much smaller than $[-1, 1]$. 
% As a consequence, to make any estimator interpretable, it is crucial that its attainable range is derived and can be accurately approximated.

Recent studies, grounded in the framework of copula theory \cite{Nelsen2006}, have extended classical measures of association to account for excess zeros. 
Specifically, modified versions of Kendall’s tau and Spearman’s rho that integrate zero inflation into their formulations and their attainable range are presented in the literature \cite{Denuit2017bounds, Mesfioui2022rho, perrone2023,Pimentel2009kendall, Pimentel2015kendall}. 
Expanding on these developments, the present study focuses on Gini’s gamma and Spearman’s footrule, two alternative measures of concordance that offer distinct advantages in rank-based association analysis \cite{Nelsen2006}. Unlike Kendall’s tau and Spearman’s rho, which emphasize global monotonicity, Gini’s gamma and Spearman’s footrule provide a more nuanced characterization of dependence by incorporating absolute rank differences.

In this paper, we propose a new formulation of Gini’s gamma and Spearman’s footrule tailored to zero-inflated continuous cases. Along the way, we also provide a correction for Spearman's rho in such settings, addressing inaccuracies identified in previous formulations. 
In the realm of classical dependence measures, the achievable bounds for zero-inflated continuous distributions are conditional upon the Fréchet-Hoeffding limits, which delineate the strongest and weakest conceivable dependence structures based on specified marginal distributions. 
We expand this theoretical framework and establish precise bounds that encapsulate the range of values that these concordance measures can assume, thereby enabling a more interpretable assessment of the strength of association in these scenarios.

The remainder of the paper is structured as follows. \hyperref[sec:definition]{Section~\ref{sec:definition}} introduces the notation and define Gini’s gamma and Spearman’s footrule as concordance measures. In~\hyperref[sec:results]{Section~\ref{sec:results}}, we focus on the case of zero-inflated continuous distributions and present our theoretical results. Namely, we derive the analytical expressions of Gini's gamma and Spearman's footrule in this setting and provide a correction to Spearman’s rho, addressing inaccuracies in previous formulations. Additionally, we establish the best-possible upper and lower bounds for these measures. \hyperref[sec:simulation]{Section~\ref{sec:simulation}} and ~\hyperref[sec:application]{Section~\ref{sec:application}} illustrate the applicability of the theoretical results for simulated and real-world data, respectively. Finally, we conclude with a discussion of our findings and potential directions for future research in ~\hyperref[sec:conclusion]{Section~\ref{sec:conclusion}}.

	%---------------------------------------------------------------------	
	\section{Background}\label{sec:definition}
    %    Rank-based concordance measures for bivariate zero-inflated distributions
	%---------------------------------------------------------------------	
		
	   Let $(X, Y)$ be a pair of zero-inflated continuous random variables with joint distribution function $J$ and marginal distributions $F$ and $G$, respectively. Similarly as in~\cite{Denuit2017bounds} and~\cite{Mesfioui2022rho}, let $\mathbbm 1(\cdot)$ denote the indicator function and let $p_1$ $(p_2)$ represent the probability mass at zero for $X$ $(Y)$. The distributions of $X$ and $Y$ can then be written as
	\begin{align}\label{eq:distr}
		F(s)&=p_1\mathbbm{1}(s=0)+\left[p_1+(1-p_1)\tilde F(s) \right]\mathbbm{1}(s>0), \notag\\  G(s)&=p_2\mathbbm{1}(s=0)+\left[p_2+(1-p_2)\tilde G(s) \right]\mathbbm{1}(s>0),
	\end{align}
    where $\tilde F$ and $\tilde G$ denote the distribution functions of $X|X>0$ and $Y|Y>0$, respectively. For any $s\geq 0$, let us define the left-continuous versions of the marginal distributions as $F(s^-)=\text{P}(X<s)$ and $G(s^-)=\text{P}(Y<s)$. Then, it follows that
	\begin{eqnarray*}
		F(s^-)=\left[p_1+(1-p_1)\tilde F(s) \right]\mathbbm{1}(s>0), \quad G(s^-)=\left[p_2+(1-p_2)\tilde G(s) \right]\mathbbm{1}(s>0).
	\end{eqnarray*}

	We recall that a copula $C$ is a cumulative joint distribution function with uniform margins in $[0, 1]$ \cite{DurSem15, Joe20141, Nelsen2006}. According to Sklar's theorem \cite{sklar_59}, any joint distribution can be expressed as a copula $C$ that takes as arguments the marginal distributions of the random vector, that is $P[X \leq x, Y \leq y] = C(F(x), G(y))$.
	  Within the copula framework, let $(X_\star, Y_\star)$ be another pair independent of $(X, Y)$, where $X_\star$ and $Y_\star$ are governed by the joint distribution function $J_\star$ with the same marginal distributions as $X$ and $Y$. From~\cite[Chapter 5]{Nelsen2006}, the concordance function is defined as
	\begin{equation}\label{eq:2025-01-26, 3:41PM}
		Q(J, J_\star)=P\left((X-X_\star)(Y-Y_\star)>0\right)-P\left((X-X_\star)(Y-Y_\star)<0\right).
	\end{equation}

For continuous random variables, Gini's gamma ($\gamma$) and Spearman's footrule ($\phi$) are commonly defined as follows
\begin{align}\label{eq:defin_traditional}
    \gamma = Q(J, M) + Q(J, W), \quad \phi = \frac{1}{2} \left[3 Q(J, M) - 1\right],
\end{align}
where $M(x, y)=\min\{F(x), G(y)\}$ and $W(x, y)=\max\{F(x)+G(y)-1, 0\}$ denote the Fr\'echet-Hoeffding bounds of copulas. Although they are equal to zero under independence, this may not necessarily hold for random variables with a discrete component. To account for this, we instead define Gini's gamma and Spearman's footrule as
	\begin{align}\label{eq:gamma_footrule}
		\gamma=Q(J,  M)+Q(J, W) - Q(\Pi, M) - Q(\Pi, W), \quad \phi=\frac{3}{2}\left[Q(J, M) - Q(\Pi, M)\right].
	\end{align}
These definitions simplify to the traditional definitions given in Eq.~\eqref{eq:defin_traditional} in fully continuous settings and preserve the property of zero under independence in zero-inflated continuous random variables.

Based on Eq.~\eqref{eq:2025-01-26, 3:41PM}, various other measures of association can be introduced for zero-inflated data. For example, setting $J_\star\equiv J$ yields
$\tau=Q(J, J)$, which corresponds to Kendall's $\tau$ for zero-inflated continuous distributions. Alternatively, when $J_\star \equiv \Pi$, where $\Pi$ denotes the independence copula, we obtain $\rho=3Q(J, \Pi)$, which defines Spearman's $\rho$. Both measures and their estimates were developed in~\cite{Pimentel2009kendall}. New formulations and, as a consequence, estimators for Gini's gamma and Spearman's footrule for zero-inflated cases can be derived along the same lines as discussed in the next section. Before presenting our results, we introduce the following notation, which we use in the remainder of the paper:
\begin{align}\label{eq:prob0011}
	p_{00}&=P(X=0, Y=0), \quad p_{01}=P(X=0, Y>0),  \notag\\
	p_{10}&=P(X>0, Y=0), \quad p_{11}=P(X>0, Y>0),
\end{align}
and for random pairs $(X, Y)$ and $(X_\star, Y_\star)$, 
\begin{align}\label{eq:probPiCond}
    \pi_{1, J_\star} &= P(\{X\ |\ X > 0, Y > 0\} > \{X_\star\ |\ X_\star > 0, Y_\star > 0\}),\notag\\
    \pi_{2, J_\star} &= P(\{Y\ |\ X > 0, Y > 0\} > \{Y_\star\ |\ X_\star > 0, Y_\star > 0\}),\notag\\
	% \pi_{1, J_\star}&=P(\{Y_\star|X_\star>0, Y_\star>0\} > \{Y|X=0, Y>0\}),\notag\\
	% \pi_{2, J_\star}&=P(\{X_\star|X_\star>0, Y_\star>0\} > \{X|X>0, Y=0\}),\notag\\
	\pi_{3, J_\star}&=P(\{Y|X>0, Y>0\} > \{Y_\star|X_\star=0, Y_\star>0\}),\notag\\
	\pi_{4, J_\star}&=P(\{X|X>0, Y>0\} > \{X_\star|X_\star>0, Y_\star=0\}).
\end{align}

\section{Results}
\label{sec:results}
\subsection{Gini's gamma and Spearman's footrule for zero-inflated continuous settings}
In this section, we present new expressions and corresponding estimators for Gini's gamma and Spearman's footrule tailored to zero-inflated continuous data. 
Our expressions are based on the intuition that the concordance function in Eq.~\eqref{eq:2025-01-26, 3:41PM} can be split into two components, i.e., one part at zero and the other away from zero. A similar strategy has also been used in \cite{perrone2023,Pimentel2015kendall,Pimentel2009kendall} for estimating Kendall's tau, and it is proved to be useful in our context as well.

In fact, our main results are new derivations, for zero-inflated continuous distributions, of the concordance functions with respect to the upper and lower Fr\'echet-Hoeffding copulas, i.e., $Q(J, M)$ and $Q(J, W)$.
These derivations can then be put together in Eq.~\eqref{eq:gamma_footrule} to obtain new expressions of these concordance measures.
%The expressions in the theorem rely on the intuition that the association measure can be split into two components, i.e., one part at zero and the other away from zero. 
 Before stating our main findings in Theorem~\ref{thm:conc_upper} and Theorem~\ref{thm:conc_lower}, we provide a brief explanation of the ideas and notation used.

Without loss of generality, we assume that $p_1 \leq p_2$. We first consider a pair of zero-inflated random variables $(X_\star, Y_\star)$ joined through the upper Fr\'echet-Hoeffding copula $M$. As noted in \cite{Denuit2017bounds}, in this case $X_\star$ given $\{X_\star > 0, Y_\star > 0\}$ or $\{X_\star > 0, Y_\star = 0\}$ cannot attain all positive values.
Specifically, as depicted in Figure~\ref{fig:copula_M}, we have that $$F(\{X_\star\ |\ X_\star > 0, Y_\star > 0\}) > p_2, \quad F(\{X_\star\ |\ X_\star > 0, Y_\star = 0\}) \leq p_2.$$
Therefore, a new expression for $Q(J, M)$ can be derived by separating the cases where $X_1, Y_1$ are zero and/or positive, and satisfy 
\begin{itemize}
\item[$\rm (I)$] $0 < X_1 \leq F^{-1}(p_2)$;
\item[$\rm (II)$] $X_2 > F^{-1}(p_2)$.
\end{itemize} 
We examine the probabilities outlined in Eq.~\eqref{eq:prob0011} and Eq.~\eqref{eq:probPiCond}, where $X$ fulfills either condition $\rm (I)$ or $\rm (II)$ as denoted by a superscript. For instance, $p_{10}^{\rm (I)} = P\bigl(X > 0, Y = 0, X \leq F^{-1}(p_2)\bigr)$. By analyzing each scenario separately, we derive the representation of $Q(J, M)$ as provided below.

\begin{figure}[t]
    \centering
    \begin{subfigure}[t]{0.45\textwidth}
        \centering
        \includegraphics[height=100px]{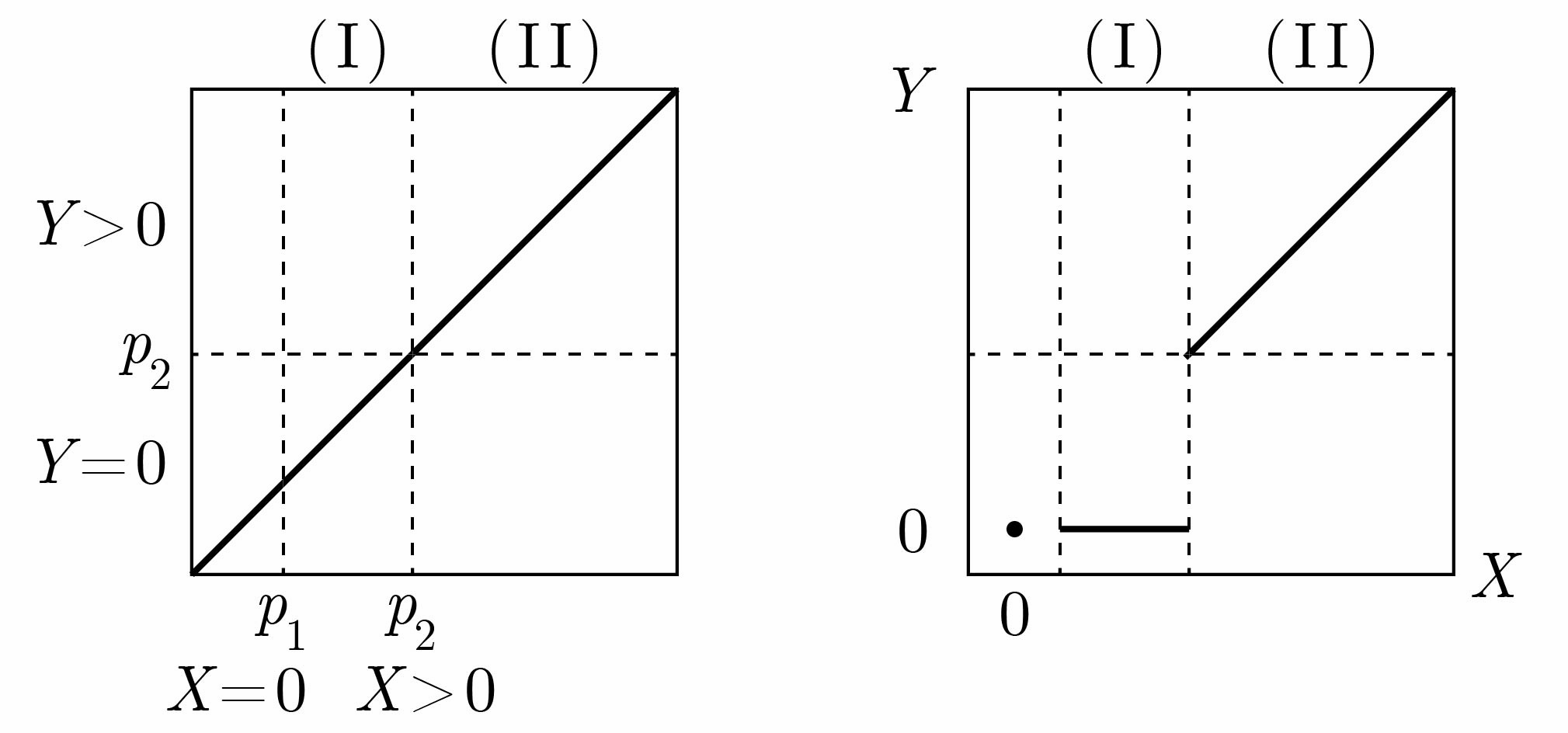}
        \caption{For the upper Fr\'echet-Hoeffding copula bound, the region $\{X > 0\}$ can be partitioned into $\rm (I)$ and $\rm (II)$.}
        \label{fig:copula_M}
    \end{subfigure}
    \hfill
    \begin{subfigure}[t]{0.45\textwidth}
        \centering
        \includegraphics[height=100px]{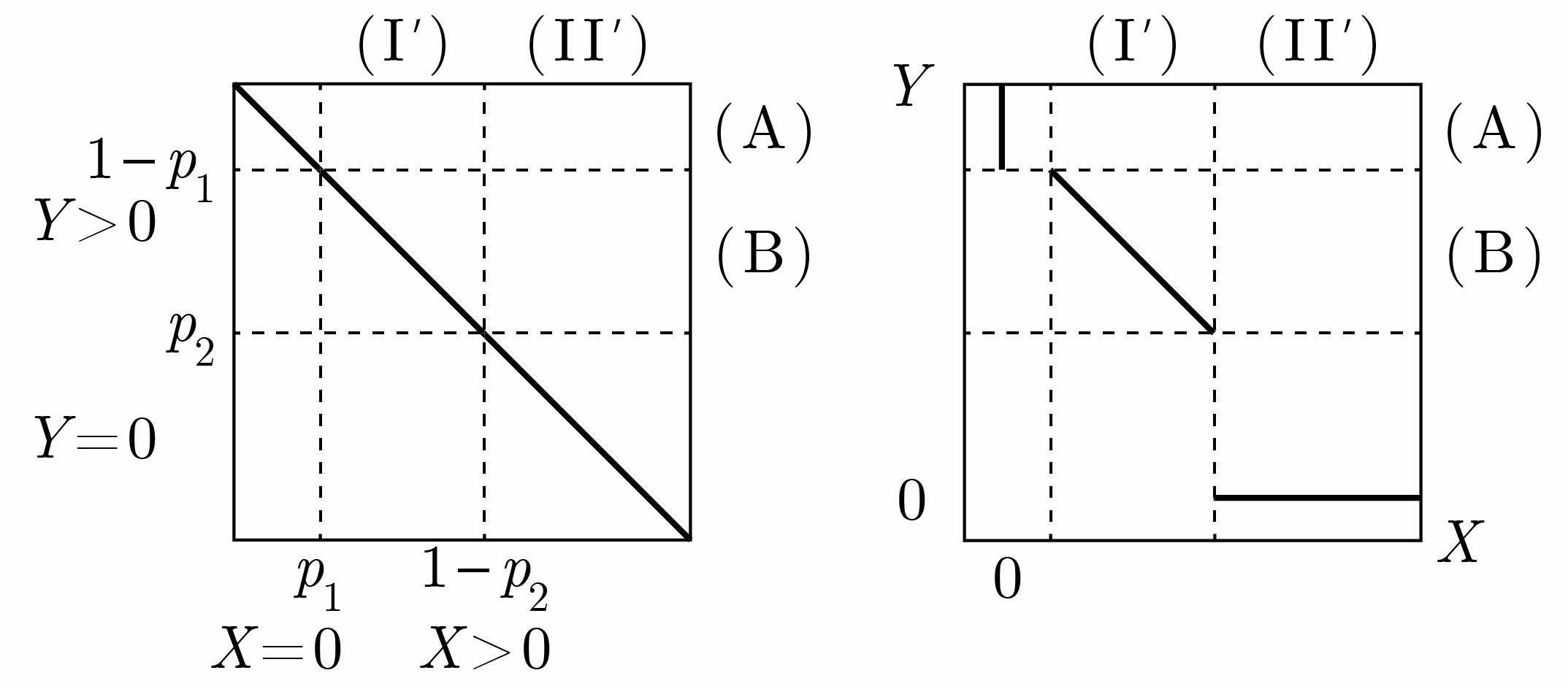}
        \caption{For the lower Fr\'echet-Hoeffding copula bound, the regions $\{X > 0\}$ and $\{Y > 0\}$ can be partitioned into $\rm (I' )$, $\rm (II' )$ and $\rm (A)$, $\rm (B)$ respectively.}
        \label{fig:copula_W}
    \end{subfigure}
    \caption{Copulas (left) and corresponding sample support (right) for zero-inflated random variables $(X, Y)$ joined through the upper and lower Fr\'echet-Hoeffding copula bounds, (a) and (b) respectively, satisfying $p_1 \leq p_2$ and $p_1 + p_2 \leq 1$.}
\end{figure}

% \ja{By separating all these cases, one arrives at the expressions for Gini's gamma and Spearman's footrule reported in the following theorem.}

\begin{theorem}\label{thm:conc_upper}
    Assuming $p_1 \leq p_2$, the concordance function relative to the upper Fr\'echet-Hoeffding copula is expressed as
    \begin{align*}
        Q(J, M) &= (1 - p_2) p_{11}^{\rm (II)} \left(4 C_M - 1\right) + (p_2 - p_1) \bigl(2 \pi_{4, M}^{\rm (I)} - 1\bigr) \\
        &\qquad + (1 - p_2) \bigl(p_{00} + p_{10}^{\rm (I)}\bigr) + p_{11} p_1 + (p_2 - p_1) \bigl(p_{11}^{\rm (II)} - p_{01}\bigr).
    \end{align*}
    $C_M$ represents the concordance where both $X$ and $Y$ are positive and limited to $\rm (I)$, which is specified as
    \begin{align*}
        C_M = P\bigl(X > X_\star, Y > Y_\star \,\bigm|\, X > 0, Y > 0, X \leq F^{-1}(p_2) \bigr).
    \end{align*}
\end{theorem}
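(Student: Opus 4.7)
The plan is to compute $Q(J, M) = P((X-X_\star)(Y-Y_\star) > 0) - P((X-X_\star)(Y-Y_\star) < 0)$ by partitioning the product space of $(X, Y)$ and $(X_\star, Y_\star)$ and exploiting the restricted support of $(X_\star, Y_\star)$ under the comonotonic copula $M$. Since $p_1 \leq p_2$, writing $(X_\star, Y_\star) = (F^{-1}(U), G^{-1}(U))$ for $U$ uniform on $[0, 1]$ shows that $(X_\star, Y_\star)$ is supported on three disjoint pieces: the atom $\{X_\star = Y_\star = 0\}$ of mass $p_1$; the $x$-axis segment $\{0 < X_\star \leq F^{-1}(p_2),\, Y_\star = 0\}$ of mass $p_2 - p_1$; and the strictly monotone curve $\{X_\star > F^{-1}(p_2),\, Y_\star > 0\}$ of mass $1 - p_2$. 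The region $\{X_\star = 0,\, Y_\star > 0\}$ carries no mass, which is precisely what motivates splitting the event $\{X > 0\}$ into regions $\rm (I)$ and $\rm (II)$ as in Figure~\ref{fig:copula_M}.

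Next, I would decompose $(X, Y)$ into the four cells based on whether each coordinate is zero, further subdividing the $\{X > 0\}$ cases into $\rm (I)$ and $\rm (II)$, and condition on each of the three pieces of the support of $(X_\star, Y_\star)$. In most of the resulting sub-cells the sign of $(X - X_\star)(Y - Y_\star)$ is forced by the support constraints: the cell is either strictly concordant (for instance, $(X, Y)$ with both coordinates positive paired with $(X_\star, Y_\star)$ at the origin), strictly discordant, or tied (one of the differences equals zero, contributing $0$ to $Q$). Collecting all deterministic contributions, weighted by their probabilities, yields the fixed terms $(1 - p_2)(p_{00} + p_{10}^{\rm (I)}) + p_{11} p_1 + (p_2 - p_1)(p_{11}^{\rm (II)} - p_{01})$ in the claimed formula.

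Only two sub-cells produce genuinely random signs, namely those in which $X$ and $X_\star$ lie on the same side of $F^{-1}(p_2)$. In the first, $X$ lies in $\rm (I)$ with $Y > 0$ and $(X_\star, Y_\star)$ lies on the segment: then $Y > Y_\star = 0$ is automatic, the sign of the product reduces to the sign of $X - X_\star$ on $(0, F^{-1}(p_2)]$, and since the conditional distributions are atomless this cell contributes $(p_2 - p_1)(2\pi_{4, M}^{\rm (I)} - 1)$ via the identity $P(>) - P(<) = 2P(>) - 1$. In the second, $X$ lies in $\rm (II)$ with $Y > 0$ and $(X_\star, Y_\star)$ lies on the curve: both coordinate comparisons are nontrivial, the conditional distributions are atomless, and the conditional copula of $(X_\star, Y_\star)$ is still the upper Fr\'echet--Hoeffding bound, so the classical continuous-copula identity $Q(C_1, C_2) = 4\int C_2\, dC_1 - 1$ applied on this sub-event produces the contribution $(1 - p_2) p_{11}^{\rm (II)}(4 C_M - 1)$. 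The main obstacle will be the careful bookkeeping of the many sub-cells and the correct treatment of positive-probability ties (such as $X = X_\star = 0$ or $Y = Y_\star = 0$), each of which contributes $0$ to $Q$ and must be explicitly identified; once the cell structure is laid out, matching the two emerging conditional probabilities to the defined $\pi_{4, M}^{\rm (I)}$ and $C_M$ and summing everything is a routine algebraic check.
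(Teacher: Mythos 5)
Your overall strategy (partition the support of the comonotone pair $(X_\star,Y_\star)$ into the atom at the origin, the segment $\{0<X_\star\leq F^{-1}(p_2),\,Y_\star=0\}$, and the curve $\{X_\star>F^{-1}(p_2),\,Y_\star>0\}$, then cross with the zero/positive/(I)/(II) cells of $(X,Y)$) is essentially the paper's decomposition, and all the deterministic contributions you list are correct. However, there is a genuine gap in the accounting of the "curve" cells. Your claim that only two sub-cells produce random signs is false: the cells $\{X=0,Y>0\}$ vs.\ curve, $\{X>0,Y>0\}\cap{\rm(I)}$ vs.\ curve, and $\{X>0,Y=0\}\cap{\rm(II)}$ vs.\ curve each have exactly one coordinate comparison forced and the other genuinely random (e.g.\ in the first, $X<X_\star$ is automatic but the sign of $Y-Y_\star$ is not), so each contributes a nonzero term of the form $P(\cdot<\cdot)-P(\cdot>\cdot)$ that does not vanish cell by cell, since the conditional law of $Y$ given $\{X=0,Y>0\}$ (say) is not $\tilde G$ in general.

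Relatedly, the identity $Q=4\int C_2\,dC_1-1$ cannot be applied within the $\rm(II)$-vs.-curve cell: it requires common margins, and the margins of $(X,Y)$ conditioned on $\{X>0,Y>0,F(X)>p_2\}$ do not coincide with those of $(X_\star,Y_\star)$ on the curve (the former depend on $J$ through the conditioning on $Y>0$). The correct cell-level expansion is $Q_{\rm cell}=4P(X>X_\star,Y>Y_\star)-1+\bigl[P(X<X_\star)-P(X>X_\star)\bigr]+\bigl[P(Y<Y_\star)-P(Y>Y_\star)\bigr]$, and the two bracketed terms are nonzero within the cell. The actual content of the paper's proof is that these leftovers cancel only \emph{in aggregate}: since $p_{01}^\star=0$, the coordinate $Y_\star$ on the curve is distributed as $\tilde G$, so summing the $Y$-comparison terms over all of $\{Y>0\}=\{X=0,Y>0\}\cup\{X>0,Y>0\}$ pairs two identically distributed continuous variables and gives zero; the analogous aggregation over $\{F(X)>p_2\}=\bigl(\{Y=0\}\cup\{Y>0\}\bigr)\cap{\rm(II)}$ kills the $X$-comparison terms. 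Without this aggregate cancellation your bookkeeping does not close, so you would either miss terms or be unable to complete the "routine algebraic check." Adding this cancellation step would repair the argument and essentially reproduce the paper's proof.
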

\begin{proof}
	The proof is provided in the~\hyperref[app]{Appendix}. 
\end{proof}

In a similar manner, we now examine the scenario where the pair $(X_\star, Y_\star)$ is linked via the lower Fr\'echet-Hoeffding copula with the condition $p_1 + p_2 \leq 1$. \cite{Denuit2017bounds} highlighted that $\{X_\star\ |\ X_\star > 0\}$ cannot cover the entire range of positive values when $Y_\star = 0$ or $Y_\star > 0$, and the reverse is also true. These specific intervals are depicted in Figure~\ref{fig:copula_W}.
Thus, it is essential to identify additional scenarios beyond $X$ and/or $Y$ merely being non-negative or zero, and to further divide $\{X > 0\}$ and $\{Y > 0\}$ into these specific cases, which must then be combined:
\begin{itemize}
\item[${\rm (I')}$] $0 < X \leq F^{-1}(1 - p_2)$;
\item[${\rm (II')}$] $F^{-1}(1 - p_2) < X$;
\item[${\rm (A)}$] $Y > G^{-1}(1 - p_1)$;
\item[${\rm (B)}$] $0 < Y \leq G^{-1}(1 - p_1)$.
\end{itemize}
We revisit the probabilities outlined in Eq.~\eqref{eq:prob0011} and Eq.~\eqref{eq:probPiCond}, which are applicable to $X$ and $Y$ under specified conditions. These are denoted using a superscript, for instance, $p_{11}^{\rm (AI')} = P\bigl(0 < X \leq F^{-1}(1 - p_2), Y > G^{-1}(1 - p_1)\bigr)$.
Ultimately, when the joint distribution of $(X_\star, Y_\star)$ corresponds to the lower Fr\'echet-Hoeffding copula with $p_1 + p_2 > 1$, \cite{Denuit2017bounds} noted that $P(X_\star > 0, Y_\star > 0) = 0$. Consequently, further distinctions between cases are unnecessary.

\begin{theorem}\label{thm:conc_lower}
    Suppose that $p_1 + p_2 \leq 1$, then the concordance function with respect to the lower Fr\'echet-Hoeffding copula is given by
    \begin{align*}
        Q(J, W) &= (1 - p_1 - p_2) p_{11}^{\rm (BI')} (4 C_W - 1 + 4D_1) + 2 D_2 - p_1 (1 - p_1) - p_2 (1 - p_2).
    \end{align*}
    Here, $C_W$ represents the concordance of $X$ and $Y$ being positive and is defined as
    \begin{align*}
        C_W = P\bigl(X > X_\star, Y > Y_\star \,|\, 0 < X \leq F^{-1}(1 - p_2), 0 < Y \leq G^{-1}(1 - p_1)\bigr).
    \end{align*}
    Moreover, $D_1$ and $D_2$ measure the extent to which $X$ and $Y$ can exceed $X_\star$ and $Y_\star$ respectively, provided that they are positive, and they are specified by
    \begin{align*}
        D_1 &= p_{11}^{\rm (AI')} \pi_{1, W}^{\rm (AI')} + p_{11}^{\rm (BII')} \pi_{2, W}^{\rm (BII')} + p_{11}^{\rm (AII')}, \qquad D_2 = p_1 p_{11}^{\rm (A)} + p_2 p_{11}^{\rm (II')} \pi_{4, W}^{\rm (II')}.
    \end{align*}
    Finally, when $p_1 + p_2 > 1$, the concordance function is
    \begin{align*}
        Q(J, W) &= p_{11} (1 - p_2) (2 \pi_{3, W} - 1) + p_{11} (2 \pi_{4, W} - 1) \\
        &\qquad + p_{11} (p_1 + p_2 - 1) - p_{01} (1 - p_1) - p_{10} (1 - p_2).
    \end{align*}

\end{theorem}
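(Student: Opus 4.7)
The plan is to follow the same strategy as in Theorem~\ref{thm:conc_upper}: first characterise the support of $(X_\star, Y_\star) \sim W$, then partition the support of $(X, Y)$ into matching regions, and finally enumerate all combinations to evaluate
\begin{align*}
Q(J, W) = P\bigl((X-X_\star)(Y-Y_\star) > 0\bigr) - P\bigl((X-X_\star)(Y-Y_\star) < 0\bigr).
\end{align*}
A convenient parametrisation of $W$ is $X_\star = F^{-1}(U)$, $Y_\star = G^{-1}(1-U)$ with $U \sim \text{Uniform}[0,1]$; this makes Figure~\ref{fig:copula_W} quantitative and immediately identifies which of the four quadrants of $(X_\star, Y_\star)$ carry positive mass in each sub-case.

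For $p_1 + p_2 \leq 1$, only three quadrants have positive mass under $W$: $\{X_\star = 0, Y_\star > 0\}$ with mass $p_1$ (where $Y_\star \in \rm (A)$); $\{X_\star > 0, Y_\star = 0\}$ with mass $p_2$ (where $X_\star \in \rm (II')$); and $\{X_\star > 0, Y_\star > 0\}$ with mass $1 - p_1 - p_2$ (where $X_\star \in \rm (I')$ and $Y_\star \in \rm (B)$ are antitone-coupled). I would then partition $(X, Y)$ into the nine regions obtained by crossing $\{X = 0, X \in \rm (I'), X \in \rm (II')\}$ with $\{Y = 0, Y \in \rm (A), Y \in \rm (B)\}$ and examine each of the $27$ resulting combinations. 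Most combinations are ties (whenever $X = X_\star = 0$ or $Y = Y_\star = 0$) or produce deterministic signs for both $X - X_\star$ and $Y - Y_\star$ (e.g.\ $X \in \rm (II')$ against $X_\star \in \rm (I')$ forces $X > X_\star$), each contributing $\pm$ the product of the two probabilities. Combinations with exactly one coordinate comparison undetermined give rise to the conditional probabilities $\pi_{i, W}^{(\cdot)}$ that assemble $D_1$ and $D_2$, while the single combination in which both coordinates require joint comparison, namely $\{X \in \rm (I'), Y \in \rm (B)\}$ paired with the third $W$-quadrant, contributes the Kendall-tau-like term $4 C_W - 1$, in direct analogy with $4 C_M - 1$ in Theorem~\ref{thm:conc_upper}. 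The final simplification uses the additive identities among the sub-probabilities, e.g.\ $p_{11} = p_{11}^{\rm (AI')} + p_{11}^{\rm (AII')} + p_{11}^{\rm (BI')} + p_{11}^{\rm (BII')}$, $p_{10} = p_{10}^{\rm (I')} + p_{10}^{\rm (II')}$, and the obvious analogues.

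For $p_1 + p_2 > 1$ the argument is much shorter, because $\{X_\star > 0, Y_\star > 0\}$ has zero mass under $W$, so no sub-partitioning of $\{X > 0\}$ or $\{Y > 0\}$ is needed, and the conditional laws of $X_\star \mid X_\star > 0$ and $Y_\star \mid Y_\star > 0$ under $W$ coincide with those of $X \mid X > 0$ and $Y \mid Y > 0$. Splitting $(X, Y)$ into its four canonical quadrants and evaluating the twelve combinations against $(X_\star, Y_\star) \in \{(0,0), (0, Y_\star > 0), (X_\star > 0, 0)\}$, with respective masses $p_1 + p_2 - 1$, $1 - p_2$, and $1 - p_1$, produces the stated formula directly; only the pairings of $\{X > 0, Y > 0\}$ with a mixed $W$-quadrant need the conditionals $\pi_{3, W}$ and $\pi_{4, W}$ of Eq.~\eqref{eq:probPiCond}. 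The main obstacle throughout is not conceptual but bookkeeping: classifying each of the $27$ combinations in the first case as tie, deterministic, or conditional, attaching the correct superscript decorations to each $\pi^{(\cdot)}_{i,W}$, and collapsing the signed contributions algebraically into the compact form of the theorem.
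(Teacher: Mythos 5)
Your proposal is correct in outline and is essentially the paper's own argument: the paper likewise applies the law of total probability over a cross-classification of the supports of $(X,Y)$ and $(X_\star,Y_\star)$, with the regions $\rm (I')$, $\rm (II')$, $\rm (A)$, $\rm (B)$ dictated by the support of $W$ exactly as your parametrisation $X_\star=F^{-1}(U)$, $Y_\star=G^{-1}(1-U)$ identifies it, and handles $p_1+p_2>1$ by direct substitution of $p_{11}^\star=0$, $p_{10}^\star=1-p_1$, $p_{01}^\star=1-p_2$, $p_{00}^\star=p_1+p_2-1$. The only organisational difference is that the paper first derives a generic four-quadrant formula for $Q(J,J_\star)$ and then refines the $p_{11}^\star$-terms, rather than enumerating all $27$ cells at once.

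One point in your description is loose enough to flag. The cell $\{X\in{\rm (I')},\,Y\in{\rm (B)}\}$ paired with the third $W$-quadrant does \emph{not} by itself contribute $4C_W-1$: its raw concordance-minus-discordance is $4C_W+1-2P(X>X_\star\mid\cdot)-2P(Y>Y_\star\mid\cdot)$, and the marginal comparison terms do not vanish within that cell because conditioning on $Y\in{\rm (B)}$ distorts the law of $X$. The $-1$ and the disappearance of these marginal terms only emerge after aggregating across all cells sharing the same $X$-region (respectively $Y$-region), using that $X\mid X\in{\rm (I')}$, taken unconditionally on $Y$, is equidistributed with $X_\star\mid X_\star\in{\rm (I')}$ under $W$ (and analogously for $Y$ on $\rm (B)$); this cross-cell cancellation is the one non-mechanical step in the paper's proof and the one your bookkeeping must reproduce. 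Provided you carry the marginal terms through each cell and cancel them in aggregate rather than cell by cell, your enumeration yields the stated formula.
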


\begin{proof}
	The complete proof is reported in the~\hyperref[app]{Appendix}. 
\end{proof}

New formulations and estimators for Gini's gamma and Spearman's footrule can be mostly derived from Thereoms~\ref{thm:conc_upper} and \ref{thm:conc_lower}, though it remains necessary to obtain the values of $Q(M, \Pi)$ and $Q(W, \Pi)$. As we will observe in Section~\ref{sec:bounds}, these coincide with the bounds of Spearman's rho. It can be noted that when excess zeros are absent, meaning $p_1=p_2=0$, the expressions $Q(J, M)$ and $Q(J, W)$ simplify to their standard versions for continuous random variables. 

The concordance measures formulated in the previous theorems can be easily estimated by replacing $\{p_{00}, p_{10}, p_{01}, p_{11}\}$ and their analogous variants involving the corresponding regions with their relative frequencies. For the remaining entries, one could use the approximation below.
\begin{align*}
    C_M &\approx \frac{1}{|\mathcal C_{11}^{\rm (II)}|} \sum_{(x_i, y_i) \in \mathcal C_{11}^{\rm (II)}} P(x_i > X_2, y_i > Y_2\ |\ X_2 > 0, Y_2 > 0) \\
    &= \frac{1}{|\mathcal C_{11}^{\rm (II)}|} \sum_{(x_i, y_i) \in \mathcal C_{11}^{\rm (II)}} \min\{\tilde F_{\rm (II)}(x_i), \tilde G(y_i)\},
\end{align*}
and
\begin{align*}
    \pi_{4, M}^{\rm (I)} &\approx \frac{1}{|\mathcal C_{11}^{\rm (I)}|} \sum_{(x_i, y_i) \in \mathcal C_{11}^{\rm (I)}} P(x_i > X_2\ |\ J_{1111}^{\rm (I)}) \\
    &= \frac{1}{| \mathcal C_{11}^{\rm (I)} |} \sum_{(x_i, y_i) \in C_{11}^{\rm (I)}} \tilde F_{\rm (I)}(x_i), 
\end{align*}
where we define the sets $\mathcal C_{11}^{\rm (I)}$ and $\mathcal C_{11}^{\rm (II)}$ of observations where both $x$ and $y$ are strictly positive and $x$ satisfies $\rm (I)$ and $\rm (II)$ respectively. Furthermore, we have the distribution functions $\tilde F_{\rm (I)}$ and $\tilde F_{\rm (II)}$ of the variable $X$ given that $X$ satisfies $\rm (I)$ and $\rm (II)$ respectively, which, along with $\tilde G$, can be estimated using their respective empirical distributions. The remaining terms can be estimated in an analogous manner.

In the next section, we follow the same arguments presented in Theorem~\ref{thm:conc_upper} and \ref{thm:conc_lower} to derive an expression of Spearman's $\rho$.

%---------------------------------------------------------------------	
\subsection{Correction to Spearman's rho for zero-inflated data}\label{sec:rho}
%---------------------------------------------------------------------	

 As noted in the introduction, the expression for Spearman's rho presented in \cite{Pimentel2009kendall} contains inaccuracies. In the following theorem, we use the results of the previous section to correct it.

\begin{theorem}\label{thm:rho}
    Spearman's rho for $(X, Y)$ is
    \begin{align}\label{eq:spm_rho}
        \rho &= 3 p_{11}^3 Q(J, \Pi)_{11} + 3 p_{11}^2 p_{10} Q(J, \Pi)_{10} + 3 p_{11}^2 p_{01} Q(J, \Pi)_{01} + 3 p_{11} p_{10} p_{01} Q(J, \Pi)_{00} \notag\\
        &\quad\quad + 3 (p_{11} p_{00} - p_{10} p_{01}) + 3 p_{11} \left\{ p_{10} \left[2 \pi_{4, J} - 1\right] + p_{01} \left[2 \pi_{3, J} - 1\right] \right\},
    \end{align}
    where $Q(J, \Pi)_{ab}$ is the difference between the conditional probability of concordance and discordance given $S_{ab} = \{X_1 > 0, Y_1 > 0, X_2 > 0, Y_3 > 0, {\rm sign}\ Y_2 = a, {\rm sign}\ X_3 = b \}$ for $a, b \in \{0, 1\}$.
\end{theorem}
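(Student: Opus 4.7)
The plan is to represent $Q(J, \Pi)$ using three independent copies $(X_1, Y_1), (X_2, Y_2), (X_3, Y_3)$ of $(X, Y) \sim J$. Since the pair $(X_2, Y_3)$ has independent components with the correct marginals $F$ and $G$, it realizes a draw from the independence copula, so that
\begin{equation*}
\rho \;=\; 3 Q(J, \Pi) \;=\; 3\bigl[P((X_1 - X_2)(Y_1 - Y_3) > 0) - P((X_1 - X_2)(Y_1 - Y_3) < 0)\bigr].
\end{equation*}
The strategy is then to partition the probability space according to the sign pattern of the full six-tuple $(X_1, Y_1, X_2, Y_2, X_3, Y_3) \in \{0, +\}^6$, each iid pair contributing a factor from $\{p_{00}, p_{01}, p_{10}, p_{11}\}$. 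The essential point, which motivates conditioning on all six signs rather than only on the four entering the product, is that the conditional law of $X_2 \mid X_2 > 0$ depends on whether $Y_2 > 0$ or $Y_2 = 0$, and analogously for $Y_3 \mid Y_3 > 0$ versus $\mathrm{sign}(X_3)$; these are exactly the auxiliary signs indexing $S_{ab}$.

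I would then separate the 64 resulting patterns into three groups. \textbf{(A)} Patterns in which at least one of $X_1, Y_1, X_2, Y_3$ is zero in such a way that the sign of the product is forced (or identically zero). Summing the weighted contributions over the remaining coordinates, and invoking the identities $p_1 = p_{00} + p_{01}$, $p_2 = p_{00} + p_{10}$, these collapse to the constant term $p_{11} p_{00} - p_{10} p_{01}$. \textbf{(B)} Patterns with exactly one non-trivial ranking, e.g.\ $X_1 = 0 < X_2$ with $Y_1, Y_3 > 0$: further conditioning on $\mathrm{sign}(X_3)$ shows that the sub-case $X_3 > 0$ vanishes because $Y_1$ and $Y_3$ then share the same conditional law, while the sub-case $X_3 = 0$ isolates $\pi_{3, J}$; pairing with the mirror pattern $X_1 > 0 = X_2$ and summing out $\mathrm{sign}(Y_2)$ telescopes the prefactor into $p_{11} p_{01}$, producing $p_{11} p_{01}(2\pi_{3, J} - 1)$. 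A symmetric argument in the other coordinate, where the $Y$-difference is forced but the $X$-ranking is live, produces $p_{11} p_{10}(2\pi_{4, J} - 1)$. \textbf{(C)} Patterns with all of $X_1, Y_1, X_2, Y_3$ strictly positive: both rankings are live, and the four sub-populations obtained by stratifying on $(\mathrm{sign}(Y_2), \mathrm{sign}(X_3)) = (a, b) \in \{0, 1\}^2$ are precisely the events $S_{ab}$, so each contributes $p_{11}\,p_{1 a}\,p_{b 1}$ times $Q(J, \Pi)_{ab}$.

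Multiplying the sum of the three groups by $3$ reproduces~\eqref{eq:spm_rho}. The main obstacle is the bookkeeping of the 64 sign configurations: one must carefully recognize every cancellation induced by two compared coordinates sharing a conditional distribution (for instance both of type $(+, +)$ or both of type $(0, +)$), so that only the genuinely asymmetric comparisons survive and combine into the $\pi_{3, J}$, $\pi_{4, J}$, and $Q(J, \Pi)_{ab}$ pieces. The inaccuracy in the original formulation of \cite{Pimentel2009kendall} can be traced to the need to stratify on the auxiliary signs $\mathrm{sign}(Y_2), \mathrm{sign}(X_3)$ in group (C), which is precisely the correction that our argument makes explicit.
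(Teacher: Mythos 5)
Your proposal follows essentially the same route as the paper: decompose $Q(J,\Pi)$ over the sign pattern of the six-tuple $(X_1,Y_1,X_2,Y_2,X_3,Y_3)$, observe that the conditional laws of $X_2\mid X_2>0$ and $Y_3\mid Y_3>0$ depend on the auxiliary signs $\mathrm{sign}(Y_2)$ and $\mathrm{sign}(X_3)$, and collect the surviving terms into the constant $p_{11}p_{00}-p_{10}p_{01}$, the $\pi_{3,J}$, $\pi_{4,J}$ corrections, and the four $Q(J,\Pi)_{ab}$ blocks — exactly the paper's refinement of its generic formula for $Q(J,J_\star)$. One transposition in your group (B): for the pattern $X_1=0<X_2$ with $Y_1,Y_3>0$ it is the sub-case $X_3>0$ that isolates $\pi_{3,J}$ (there $Y_3$ follows the law of $Y\mid X>0,Y>0$ while $Y_1$ follows that of $Y\mid X=0,Y>0$) and the sub-case $X_3=0$ that vanishes by exchangeability of identically distributed continuous variables — you state the reverse — though the combined contribution $p_{11}p_{01}(2\pi_{3,J}-1)$ is unaffected.
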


\begin{proof}
	The full proof is reported to the~\hyperref[app]{Appendix}. 
\end{proof}	

The formulation of Spearman's $\rho$ reported in Theorem~\ref{thm:rho} varies significantly from what was proposed by \cite{Pimentel2009kendall}. In fact, \cite{Pimentel2009kendall} suggested that
\begin{equation*}
    \rho = 3 p_{11} (1 - p_1)(1 - p_2) Q(J, \Pi)_P + 3 (p_{11} p_{00} - p_{10} p_{01}).
\end{equation*}
In this context, $Q(J, \Pi)_P$ represents the difference between the conditional probabilities of concordance and discordance when $\{X_1 > 0, Y_1 > 0, X_2 > 0, Y_3 > 0\}$ is satisfied, and it is considered to indicate Spearman's rho away from zero. 
This interpretation is mistaken because $X_3$ and $Y_2$ might still be zero. This fact is clear in the derived Eq.~\eqref{eq:spm_rho}, where it was important to differentiate among the cases $Q(J, \Pi)_{11}$, $Q(J, \Pi)_{10}$, $Q(J, \Pi)_{01}$, and $Q(J, \Pi)_{00}$ to accurately formulate $\rho$.

An estimator of $\rho$ can be obtained by replacing the probabilities with their respective relative frequencies. Moreover, $3 Q(J, \Pi)_{11}$, that is, Spearman's rho away from zero, can be estimated using the standard rank-based estimator on all observations where both $X$ and $Y$ are strictly positive. We notice that there exist other estimators of Spearman's $\rho$ that can handle both continuous, discrete, and mixed cases \citep{Remillard2024}. However, our proposed estimator performs well as highlighted in Section~\ref{sec:simulation} and Section~\ref{sec:application}.

As highlighted in the introduction, the presence of ties in the original data makes the estimation of the corresponding level of association insufficient to draw a conclusion about the strength of association. 
This is due to the attainable range being smaller than the usual range $[-1,1]$. Therefore, to assess the strength of association, it is also necessary to determine the attainable bounds of the corresponding measures, which we establish in the following section. 

%%%%%%%%%%%%%%%%%%%%%%%%%%%%%%%%%%%%%%%%%%%%%%%%%%%%%

	%---------------------------------------------------------------------	
\subsection{Best-possible bounds}\label{sec:bounds}
%---------------------------------------------------------------------	

 Based on~\cite{Mesfioui2005properties}, as $\gamma$, $\phi$, and $\rho$ can be characterized using concordance functions, therefore the optimal upper and lower limits of these measures are attained at the Fr\'{e}chet-Hoeffding bounds of $C$. Utilizing this property, we establish attainable bounds for $\gamma$ and $\phi$, detailed in the following theorem.
	
	\begin{theorem}\label{thm:boundsgamma}
		The best-possible upper bound $\gamma_{\max}$ and lower bound $\gamma_{\min}$ for $\gamma$ are
        \begin{align*}
            \gamma_{\max} &= \begin{cases}
                \frac{2}{3} + \frac{1}{3} \max\{p_1, p_2\} \left(3 - 6 \max\{p_1, p_2\} + \max\{p_1, p_2\}^2\right), & p_1 + p_2 \geq 1, \\
                1 - \max\{p_1, p_2\}^2 - \frac{1}{3} \min\{p_1, p_2\}^3, & p_1 + p_2 < 1,
            \end{cases} \\
            \gamma_{\min} &= \begin{cases}
                (p_1 + p_2 + \min\{p_1, p_2\} - 2) (1 - \max\{p_1, p_2\}) - \frac{1}{3} (1 - \max\{p_1, p_2\}^3), & p_1 + p_2 \geq 1, \\
                (1 - p_1 - p_2)^2 - 2 (1 - p_1)(1 - p_2) - \frac{1}{3} \min\{p_1, p_2\}^3, & p_1 + p_2 < 1.
            \end{cases}
        \end{align*}
The best-possible upper bound $\phi_{\max}$ and lower bound $\phi_{\min}$ for $\phi$ are
\begin{align*}
    \phi_{\max} &= 1 - \frac{3}{2} \max\{p_1, p_2\}^2 + \frac{1}{2} \max\{p_1, p_2\}^3, \\
    \phi_{\min} &= \frac{3}{2} \max\{0, p_1 + p_2 - 1\} (1 - \max\{p_1, p_2\}) - \frac{1}{2} \left(1 - \max\{p_1, p_2\}^3\right).
\end{align*}	
\end{theorem}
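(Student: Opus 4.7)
The plan is to exploit the monotonicity of $\gamma$ and $\phi$ with respect to the concordance order on joint distributions with fixed marginals $F$ and $G$. Both measures, as defined in Eq.~\eqref{eq:gamma_footrule}, are built from the concordance function $Q(J, \cdot)$ with $\cdot \in \{M, W\}$, which is non-decreasing in its first argument with respect to the pointwise order on joint CDFs. Since the Fr\'echet–Hoeffding bounds $M$ and $W$ constructed from the zero-inflated marginals in Eq.~\eqref{eq:distr} are themselves admissible joint distributions that sandwich every $J$, the extrema of $\phi$ and $\gamma$ are attained at $J = M$ (maxima) and $J = W$ (minima):
\[
\phi_{\max} = \tfrac{3}{2}\bigl[Q(M,M) - Q(\Pi,M)\bigr], \qquad \phi_{\min} = \tfrac{3}{2}\bigl[Q(W,M) - Q(\Pi,M)\bigr],
\]
and analogously $\gamma_{\max} = Q(M,M) + Q(M,W) - Q(\Pi,M) - Q(\Pi,W)$ and $\gamma_{\min} = Q(W,M) + Q(W,W) - Q(\Pi,M) - Q(\Pi,W)$.

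Second, I would evaluate the six required concordance values by feeding $J \in \{M, W, \Pi\}$ into Theorems~\ref{thm:conc_upper} and~\ref{thm:conc_lower}. For $J = M$, the joint support of $(X,Y)$ is the comonotonic curve depicted in Figure~\ref{fig:copula_M}: the conditional concordance $C_M$ collapses to $1$, the masses $p_{01}$, $p_{10}^{\rm (I)}$, $p_{00}$, $p_{11}^{\rm (II)}$ read off directly from the picture, and $\pi_{4,M}^{\rm (I)}$ is deterministic. For $J = W$ the support is the countermonotonic locus of Figure~\ref{fig:copula_W}; the analysis splits into $p_1 + p_2 \le 1$ versus $p_1 + p_2 > 1$, matching the case split in the statement, and in each branch $C_W$, $D_1$, $D_2$ and the $\pi_{k, W}^{(\cdot)}$ degenerate to explicit functions of $p_1, p_2$. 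For $J = \Pi$, independence reduces every conditional probability to a product of marginals, and the cubic terms $\tfrac{1}{3}\min\{p_1,p_2\}^3$ and $\max\{p_1,p_2\}^3$ in the final expressions emerge from integrals of the form $\int_0^p u\,du$ and $\int_0^p u^2\,du$ that arise from $Q(\Pi,M)$ and $Q(\Pi,W)$.

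The third step is a purely algebraic substitution; the WLOG convention $p_1 \le p_2$ used in Theorems~\ref{thm:conc_upper}–\ref{thm:conc_lower} explains the $\max$ and $\min$ in the final formulas, which reappear upon dropping the convention by symmetry in $(p_1,p_2)$. The main obstacle will be the bookkeeping of subregions, especially in the $p_1 + p_2 \le 1$ branch of Theorem~\ref{thm:conc_lower}, where $C_W$, $D_1$, and $D_2$ must be recomputed separately under each of $J = M$, $W$, and $\Pi$, and where the interaction between the partitions $\rm (I'), \rm (II')$ and $\rm (A), \rm (B)$ under the countermonotonic support requires careful geometric reasoning to avoid sign errors. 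A useful sanity check is the continuous limit $p_1 = p_2 = 0$, in which the formulas must reduce to the classical bounds $\gamma \in [-1, 1]$ and $\phi \in [-\tfrac{1}{2}, 1]$; these can be read off immediately from the stated expressions.
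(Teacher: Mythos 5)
Your proposal follows essentially the same route as the paper: the extrema are attained at the Fr\'echet--Hoeffding copulas, so the bounds reduce to evaluating $Q(M,M)$, $Q(W,M)$, $Q(W,W)$, $Q(\Pi,M)$ and $Q(\Pi,W)$, with the only genuinely new computation being $Q(W,M)$ via the case split on $p_1+p_2$ (the paper further splits $p_1+p_2\le 1$ at $\max\{p_1,p_2\}=1/2$, and simply imports $Q(M,M)$, $Q(W,W)$ from \cite{Denuit2017bounds} and $Q(\Pi,M)$, $Q(\Pi,W)$ from Theorem~\ref{thm:sper-rho-bounds} rather than recomputing them from Theorems~\ref{thm:conc_upper}--\ref{thm:conc_lower}). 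One slip to fix before the algebra: under comonotonicity the quantity $C_M = P(X>X_\star,\,Y>Y_\star\mid\cdots)$ degenerates to $1/2$, not $1$ --- it is $4C_M-1$ that collapses to $1$ --- and substituting $C_M=1$ would already fail your own sanity check at $p_1=p_2=0$.
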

	
\begin{proof}
    The proof is presented in the~\hyperref[app]{Appendix}.
\end{proof}

Several methods exist to derive the upper and lower bounds on $\gamma$ and $\phi$. The simplest involves specifying the joint distribution $J$ of $(X, Y)$ within the framework of Theorems~\ref{thm:conc_upper} and \ref{thm:conc_lower} and calculating each term accordingly. An in-depth proof using this approach is provided in the ~\hyperref[app]{Appendix}.
Another approach is to utilize the findings from \cite{Mesfioui2005properties}, which indicate that a concordance function can be represented as
\begin{equation}\label{eq:conc_general}
    Q(J, J_\star) = \mathbb E_J[J_\star(X, Y)] + \mathbb E_J[J_\star (X, Y^-)] + \mathbb E_J[J_\star(X^-, Y)] + \mathbb E_J[J_\star(X^-, Y^-)] - 1.
\end{equation}
This approach enables the derivation of the bounds through the distribution functions of $X$ and $Y$, as outlined in Eq.~\eqref{eq:distr}. Additional details can be found in the supplementary file.

%%%%%%%%%%%%%%%%%%%%%%%%%%%%%%%%%%%%%%%%%%%%%%%%%%%%%
\begin{figure}[p]
\vspace*{-1cm}
	\begin{subfigure}{.52\textwidth}
		\centering
		\includegraphics[width=2.8in]{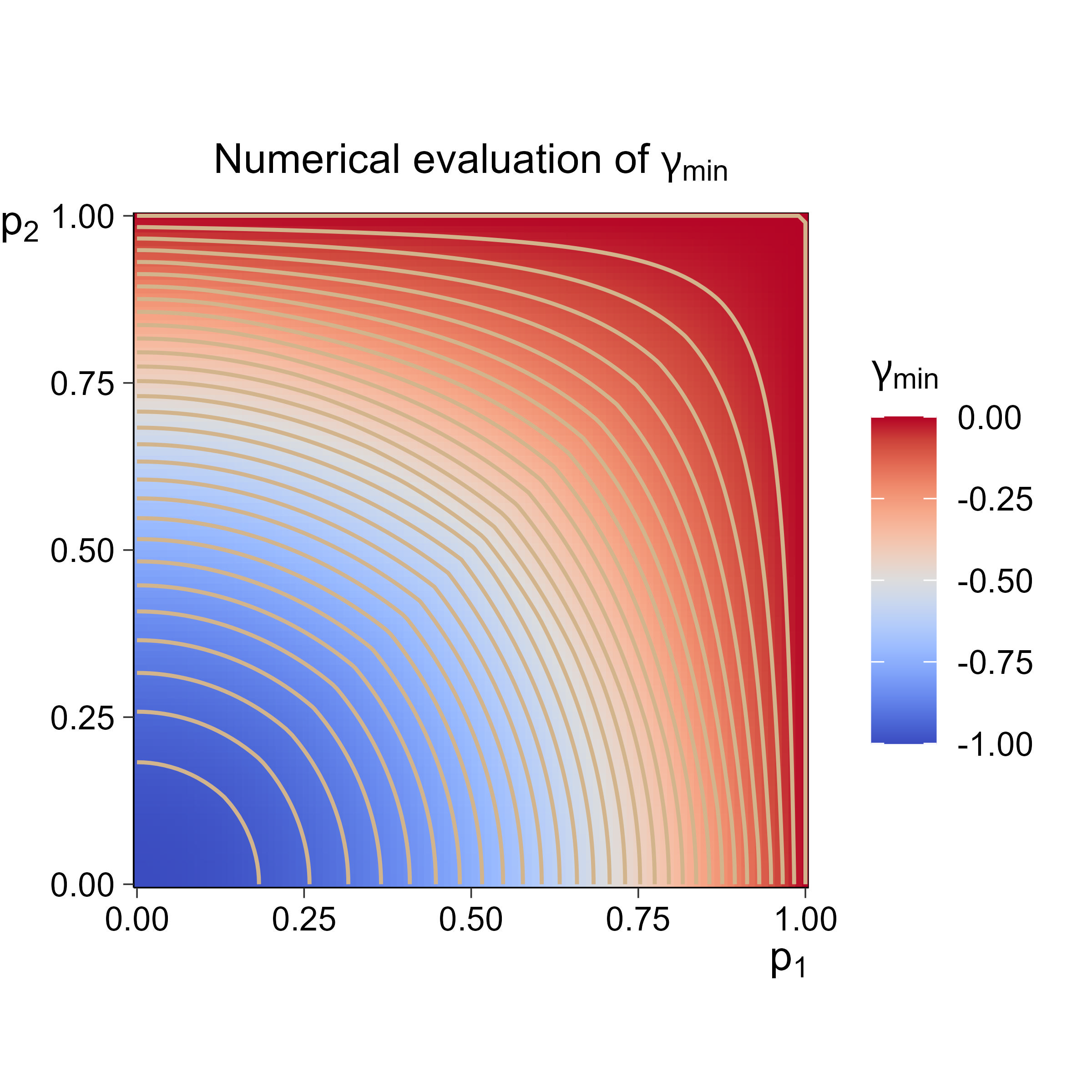}
		\label{fig:gammalower}
		\vspace{0cm}
	\end{subfigure}%
	\begin{subfigure}{.52\textwidth}
		\centering
		\includegraphics[width=2.8in]{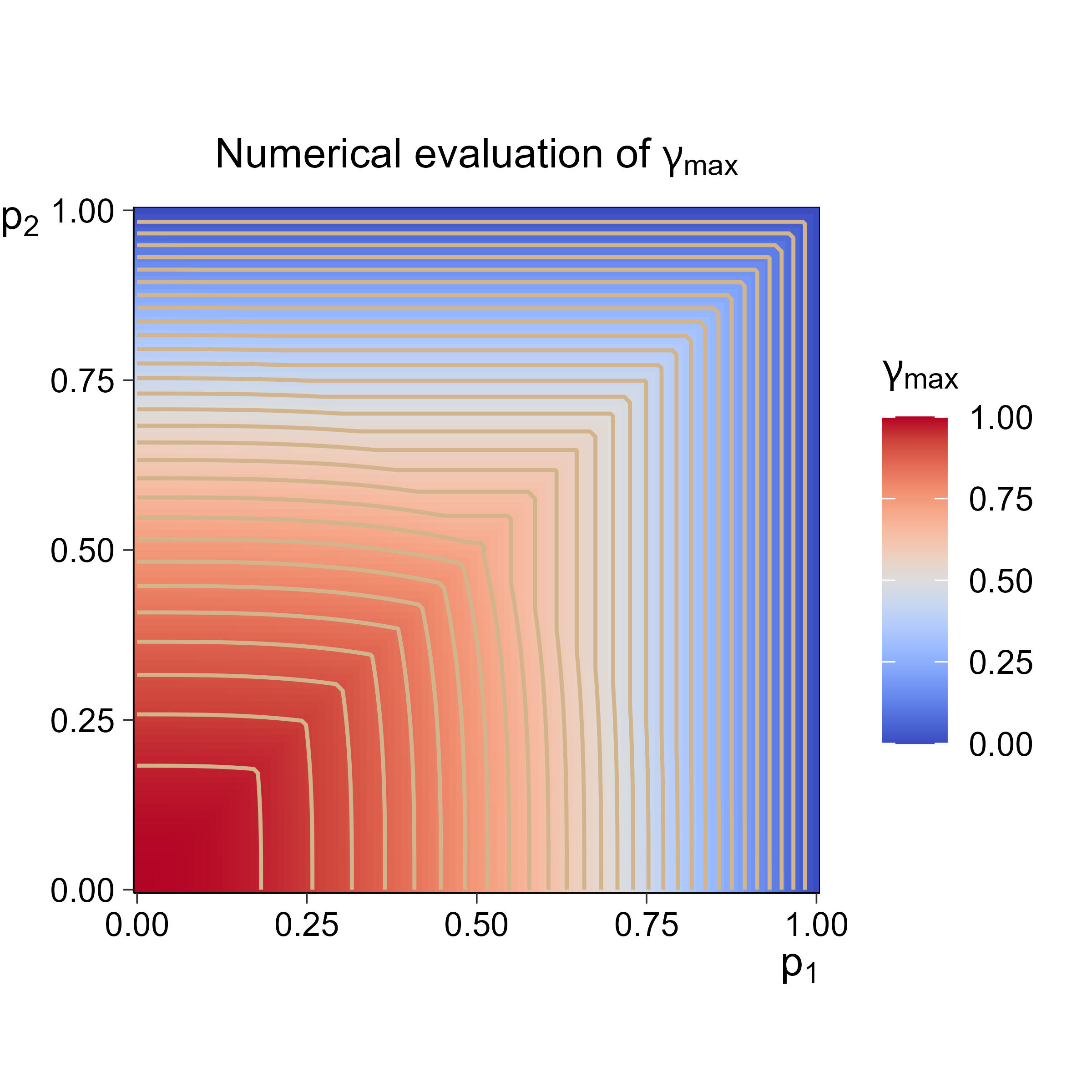}
		\label{fig:gammaupper}
		\vspace{0cm}
	\end{subfigure}
	\vspace{-1.5cm}
	\caption{Numerical evaluation of the lower and upper bounds for Gini's gamma.} 
	\label{fig:gamma}
\end{figure}
%%%%%%%%%%%%%%%%%%%%%%%%%%%%%%%%%%%%%%%%%%%%%%%%%%%%%

%%%%%%%%%%%%%%%%%%%%%%%%%%%%%%%%%%%%%%%%%%%%%%%%%%%%%
\begin{figure}[p]
	\begin{subfigure}{.52\textwidth}
		\centering
		\includegraphics[width=2.8in]{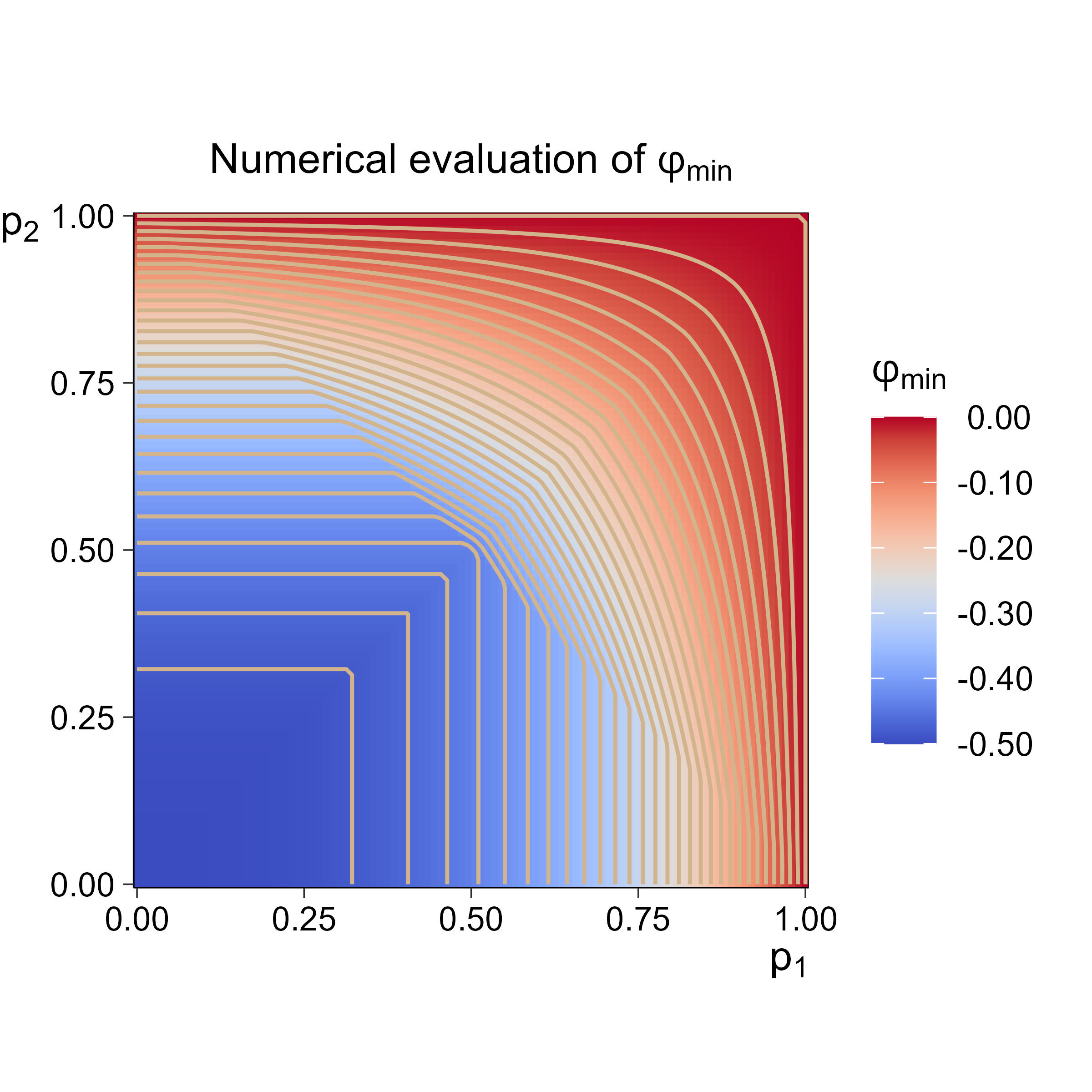}
		\label{fig:philower}
		\vspace{0cm}
	\end{subfigure}%
	\begin{subfigure}{.52\textwidth}
		\centering
		\includegraphics[width=2.8in]{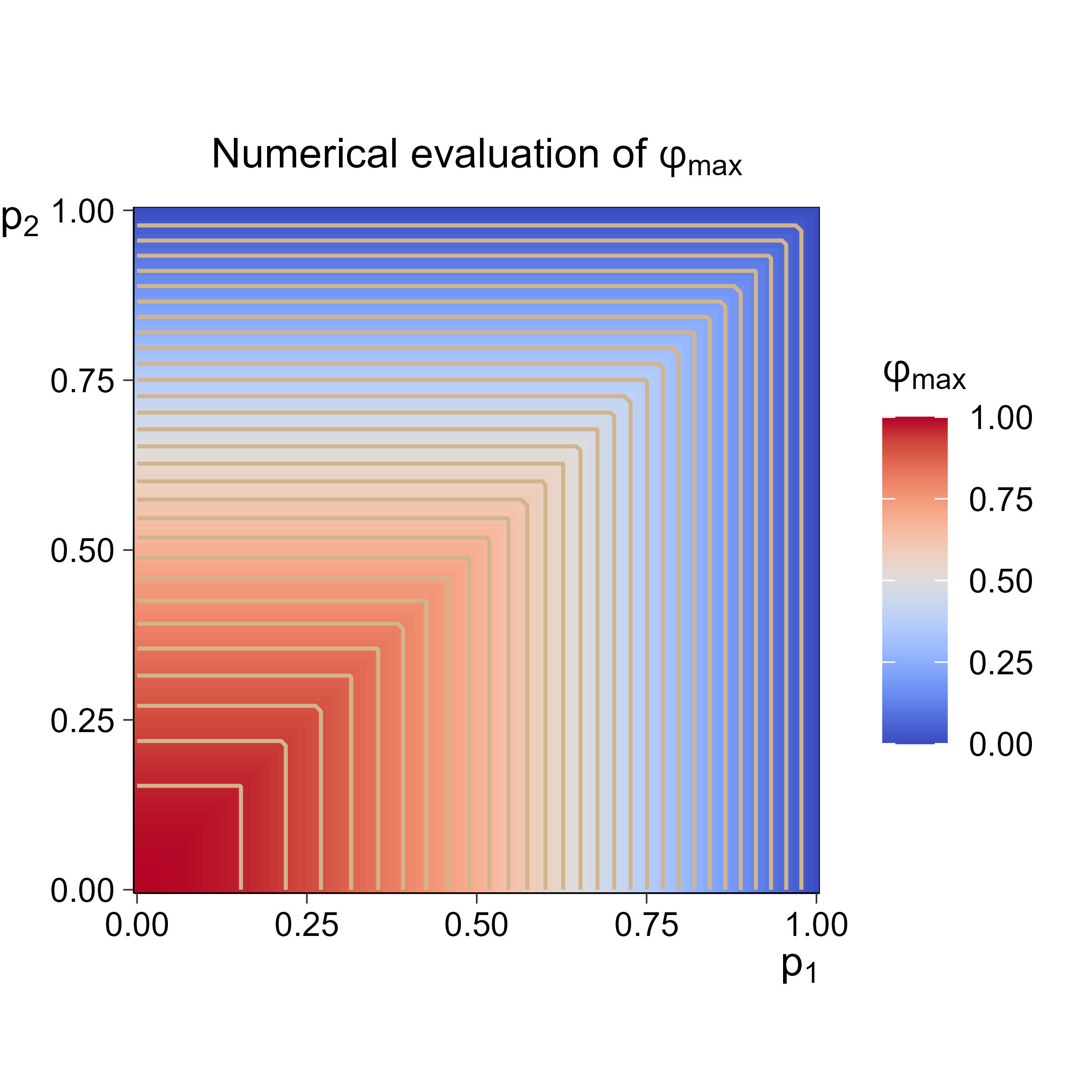}
		\label{fig:phiupper}
		\vspace{0cm}
	\end{subfigure}
	\vspace{-1.5cm}
	\caption{Numerical evaluation of the lower and upper bounds for Spearman's footrule.} 
	\label{fig:phi}
\end{figure}
%%%%%%%%%%%%%%%%%%%%%%%%%%%%%%%%%%%%%%%%%%%%%%%%%%%%%

%%%%%%%%%%%%%%%%%%%%%%%%%%%%%%%%%%%%%%%%%%%%%%%%%%%%%
\begin{figure}[p]
	\begin{subfigure}{.52\textwidth}
		\centering
		\includegraphics[width=2.8in]{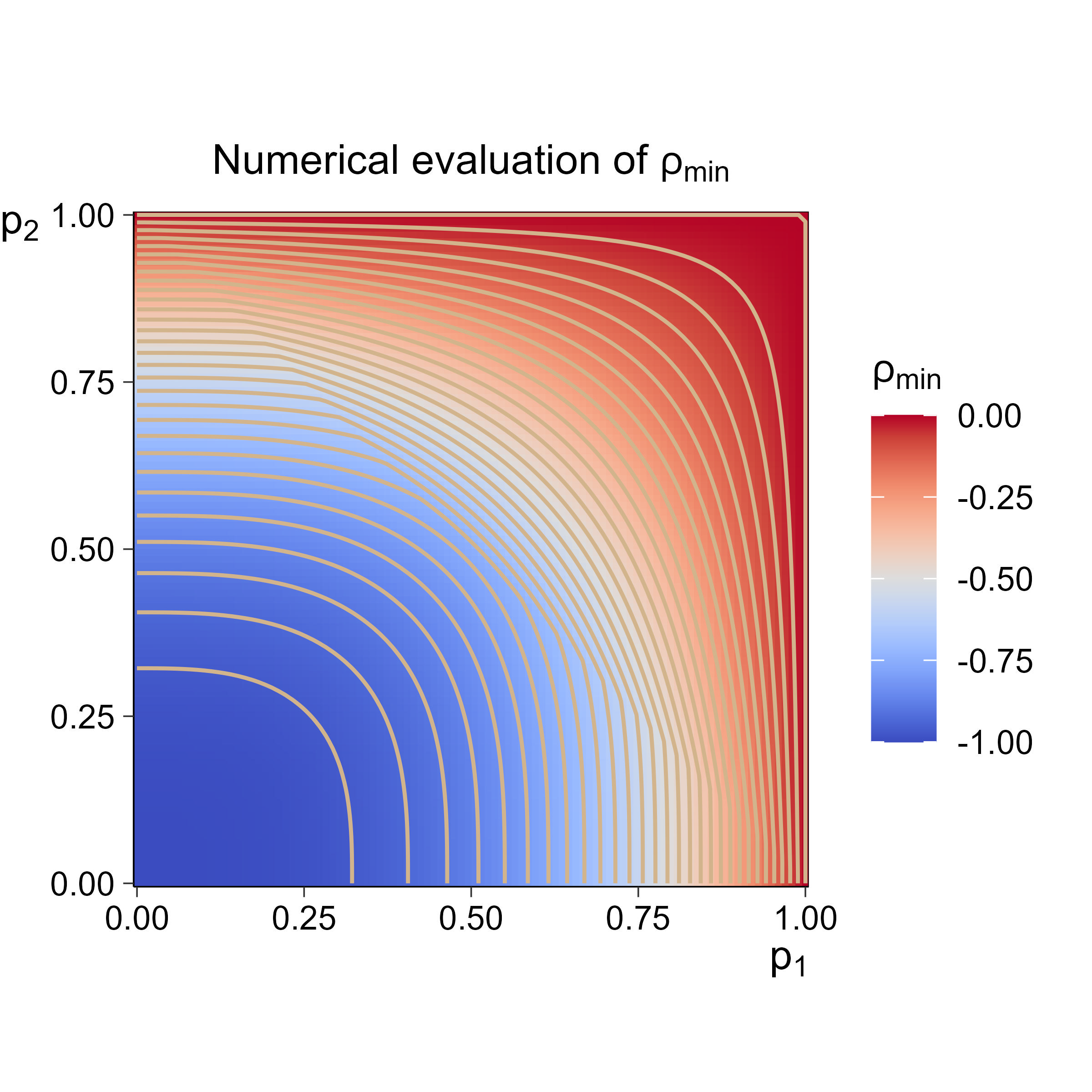}
		\label{fig:rholower}
		\vspace{0cm}
	\end{subfigure}%
	\begin{subfigure}{.52\textwidth}
		\centering
		\includegraphics[width=2.8in]{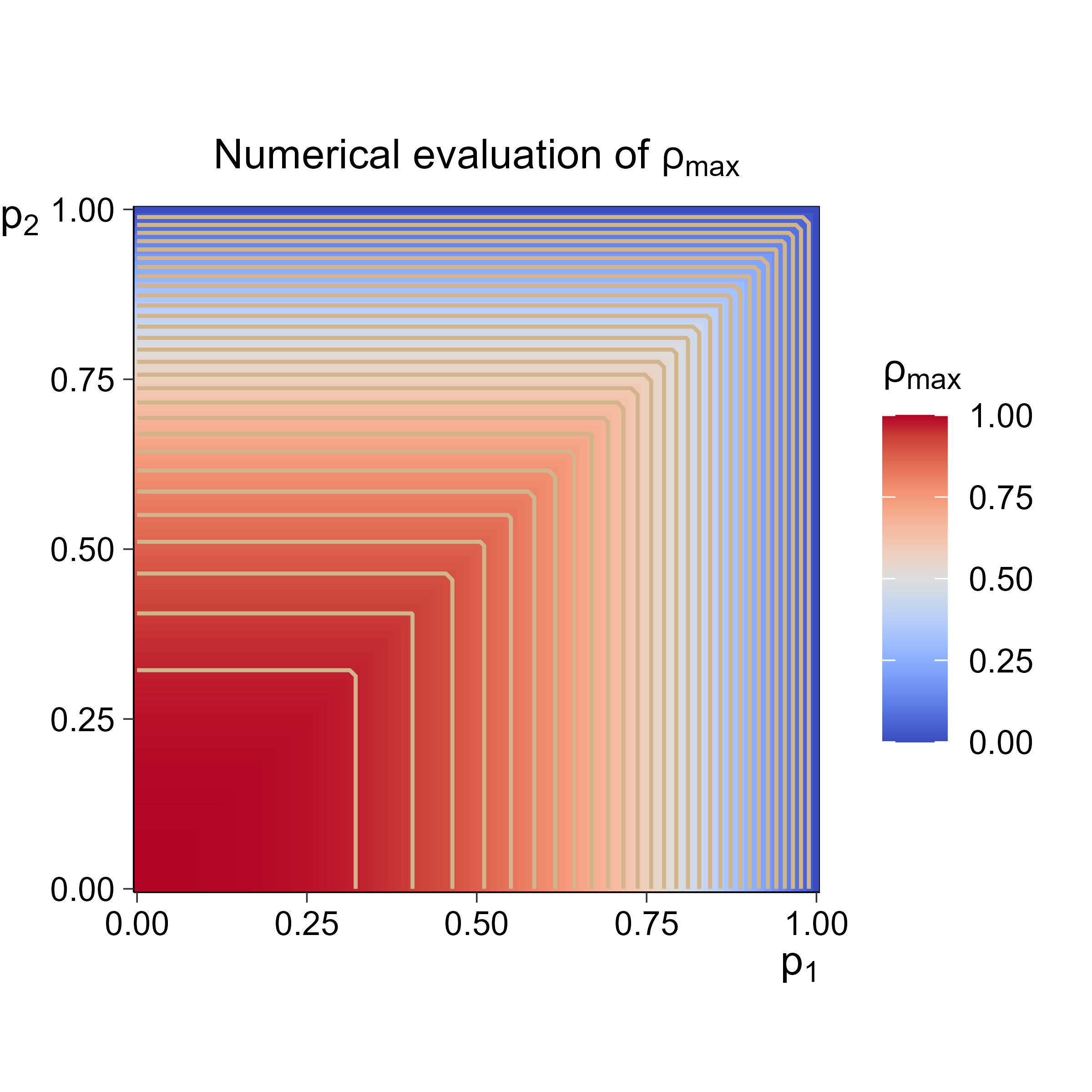}
		\label{fig:rhoupper}
		\vspace{0cm}
	\end{subfigure}
	\vspace{-1.5cm}
	\caption{Numerical evaluation of the lower and upper bounds for Spearman's rho.} 
	\label{fig:rho}
\end{figure}
%%%%%%%%%%%%%%%%%%%%%%%%%%%%%%%%%%%%%%%%%%%%%%%%%%%%%

As outlined earlier, \cite{Mesfioui2022rho} calculated the bounds of Spearman's $\rho$ using the formula provided by \cite{Pimentel2009kendall}. Due to inaccuracies within that formula, the boundaries given in \cite{Mesfioui2022rho} are likewise flawed. Here, we establish the accurate bounds for Spearman's $\rho$ through the expression detailed in Theorem~\ref{thm:rho}.

\begin{theorem}\label{thm:sper-rho-bounds}
	The best possible upper and lower bounds of Spearman's $\rho$ for $(X,Y)$ are
\begin{align*}
		\rho_{\max}&= 1- \max\{p_1^3, p_2^3\},\\
	\rho_{\min}&=\begin{cases}
	p_1^3+p_2^3-1,\quad & p_1+p_2 \le 1,\\
	-3(1-p_1)(1-p_2), \quad & p_1+p_2> 1.
\end{cases}
\end{align*}	
\end{theorem}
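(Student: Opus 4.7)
The plan is to exploit the monotonicity of $\rho = 3Q(J,\Pi)$ with respect to the concordance order on copulas. As invoked at the start of Section~\ref{sec:bounds} and justified in~\cite{Mesfioui2005properties}, any joint distribution $J$ compatible with the fixed zero-inflated marginals~\eqref{eq:distr} satisfies $J_W \le J \le J_M$ pointwise, and the concordance function $Q(\cdot,\Pi)$ preserves this order. Consequently, $\rho_{\max} = 3\,Q(J_M,\Pi)$ is attained by the comonotone joint distribution $J_M(x,y) = \min\{F(x),G(y)\}$ and $\rho_{\min} = 3\,Q(J_W,\Pi)$ by the countermonotone one $J_W(x,y) = \max\{F(x)+G(y)-1,0\}$. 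I assume WLOG that $p_1 \le p_2$.

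\textbf{Computing the extremes.} To evaluate these two concordance functions, I would apply the integral representation~\eqref{eq:conc_general} with $J_\star=\Pi$, so that $\Pi(X,Y) = F(X)G(Y)$, combined with the standard uniform parametrization $(X,Y) = (F^{-1}(U), G^{-1}(U))$ under $J_M$ and $(X,Y)=(F^{-1}(U), G^{-1}(1-U))$ under $J_W$, with $U \sim U(0,1)$. Because the quantile functions of the zero-inflated marginals~\eqref{eq:distr} are constant at $0$ on $[0,p_1]$ and $[0,p_2]$ respectively and strictly increasing elsewhere, each of $F(X)$, $G(Y)$, $F(X^-)$, $G(Y^-)$ takes one of the explicit values in $\{0,U,1-U,p_1,p_2\}$ on the natural partition of $[0,1]$. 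The four expectations in~\eqref{eq:conc_general} then reduce to sums of elementary polynomial integrals over these subintervals.

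\textbf{Upper and lower bounds.} For $J=J_M$, the unit interval splits as $[0,p_1] \cup (p_1,p_2] \cup (p_2,1]$, corresponding to $\{X=Y=0\}$, $\{X>0,Y=0\}$, and the continuous region $\{X>0,Y>0\}$. Summing the four expectations gives $Q(J_M,\Pi) = (1-p_2^3)/3$, hence $\rho_{\max}=1-p_2^3=1-\max\{p_1^3,p_2^3\}$. For $J=J_W$ the computation splits in two. When $p_1+p_2\le 1$, the partition is $[0,p_1]\cup(p_1,1-p_2)\cup[1-p_2,1]$, corresponding to $\{X=0,Y>0\}$, $\{X>0,Y>0\}$, $\{X>0,Y=0\}$, and the four integrals add to $Q(J_W,\Pi)=(p_1^3+p_2^3-1)/3$. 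When $p_1+p_2>1$, the continuous region vanishes; the partition becomes $[0,1-p_2]\cup(1-p_2,p_1]\cup(p_1,1]$, corresponding to $\{X=0,Y>0\}$, $\{X=0,Y=0\}$, $\{X>0,Y=0\}$, and the analogous sum collapses to $Q(J_W,\Pi)=-(1-p_1)(1-p_2)$. Multiplying by $3$ yields the two branches of $\rho_{\min}$.

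\textbf{Main obstacle.} The hard part will be keeping track of the qualitative change in the support of $J_W$ at $p_1+p_2=1$: in one regime there is a diagonal strip on which both coordinates are strictly positive, while in the other all joint mass concentrates on the coordinate axes, so the partition and the values of $F(X),G(Y),F(X^-),G(Y^-)$ must be reorganized. Apart from this structural bookkeeping, the remaining work is routine polynomial integration and algebraic simplification to recognize the closed-form expressions stated in the theorem.
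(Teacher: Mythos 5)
Your proposal is correct: I verified that the single-uniform parametrization of the extremal couplings yields $Q(J_M,\Pi)=(1-p_2^3)/3$ and the two stated values of $Q(J_W,\Pi)$, and the reduction to sharp bounds via the concordance order is the same monotonicity argument the paper invokes at the start of Section~\ref{sec:bounds}. However, your route differs from the proof the paper actually gives in the Appendix. There, the bounds are extracted from the zero-inflated decomposition of Theorem~\ref{thm:rho}: one identifies the cell probabilities $p_{00},p_{01},p_{10},p_{11}$ under $M$ and $W$, uses the support restrictions from \cite{Denuit2017bounds} to pin down the conditional terms $Q(J,\Pi)_{ab}$, $\pi_{3,J}$, $\pi_{4,J}$ (most of which collapse to $0$, $1$, or $1/2$), and substitutes. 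That argument doubles as a consistency check on the corrected formula for $\rho$, which is part of the paper's point. Your approach instead instantiates the integral representation~\eqref{eq:conc_general} with $J_\star=\Pi$ and integrates $F(X)G(Y)$ and its left-limit variants under the comonotone/countermonotone coupling; this is essentially the first alternative proof the paper sketches in its supplementary material, except that the supplement evaluates $Q(M,\Pi)$ as $\mathbb{E}_\Pi[\min\{F(X_\star),G(Y_\star)\}]+\cdots$ with two independent uniforms, whereas you exploit the symmetry of $Q$ to integrate under the extremal coupling with a single uniform — a genuine simplification, since everything reduces to one-dimensional polynomial integrals over the partition of $[0,1]$ induced by $p_1$, $p_2$, and $1-p_2$. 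What your route buys is brevity and self-containedness (no need for the conditional concordance bookkeeping); what the paper's route buys is that the bound computation exercises and validates the new representation of $\rho$. The one structural subtlety you flag — the change of support of $J_W$ at $p_1+p_2=1$ — is handled correctly in your partitions.
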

\begin{proof}
    The proof is delegated to the \hyperref[app]{Appendix}
\end{proof}

Similarly to Theorem~\ref{thm:boundsgamma}, there are multiple ways to prove this result. 
Comprehensive derivations based on the equation from Theorem~\ref{thm:rho} are available in \hyperref[app]{Appendix}.
Furthermore, replacing the margins in Eq.~\eqref{eq:distr} within the general expression of the concordance function presented in Eq.~\eqref{eq:conc_general} produces the same results. 
\cite{Mesfioui2022rhodisc} recently developed a general formula for determining the precise upper and lower bounds of Spearman's rho in the context of discrete variables. The bounds discussed in this paper can also be derived by integrating their structure with the margins of $X$ and $Y$ outlined in Eq.~\eqref{eq:distr}. A more comprehensive explanation of these two methods is provided in the supplementary file.

Figures~\ref{fig:gamma},~\ref{fig:phi}, and~\ref{fig:rho} depict the effects caused by zero-inflation on the bounds of Gini's gamma, Spearman's footrule, and Spearman's rho, respectively, using numerical visualization. Upon examining these figures, we notice that
\begin{enumerate}
\item when $p_1$ or $p_2$ equals 1, \ja{either one of the random variables $X$ and $Y$ is zero with probability one}, resulting in $\gamma = \phi = \rho =0$;

\item when $p_1 = p_2 = 0$, the attainable bounds are $\gamma_{\min}= \rho_{\min} = -1$, $\phi_{\min}=-1/2$, and $\gamma_{\max}=\phi_{\max} = \rho_{\max} = 1$. \ja{This is consistent with the known bounds for continuous random variables.}
\end{enumerate}
It is important to note that the bounds of all three concordance measures follow a similar pattern.

		%%%%%%%%%%%%%%%%%%%%%	

%---------------------------------------------------------------------
\section{Simulation study}
\label{sec:simulation}
%---------------------------------------------------------------------

We now conduct a simulation study to evaluate the effectiveness of the proposed theory in controlled settings. An implementation of this study in \textbf{R}~\citep{Rsoftware} is available on GitHub at \url{https://github.com/JasperArends/ConcZID}.
Here, we investigate the performance of the results in Section~\ref{sec:results} in a setting similar to \cite{perrone2023}. 
In particular, we compute the values of the estimators of $\gamma$, $\phi$ and $\rho$ for $N$ pairs generated from two random variables joined through the Fr\'echet copula $C_\alpha = (1 - \alpha) u v + \alpha \min\{u, v\}$, where $u, v, \alpha \in [0, 1]$ \citep{Nelsen2006}. 
We compare the estimates derived with the true values of $\gamma$, $\phi$ and $\rho$ calculated using the closed formula $$Q(C_\alpha, J_\star) = (1 - \alpha) Q(\Pi, J_\star) + \alpha Q(M, J_\star),$$ where $J_\star$ coincides with one of the Fr\'echet-Hoeffding copula bounds or the independence copula. Within this setting, a total of three parameters need to be chosen. We consider various levels of zero inflation by setting $p_1, p_2$ in $\{0.2, 0.8\}$.
The copula parameter $\alpha$ primarily governs the level of association, with higher values corresponding to stronger association. In the study, we consider three different values for $\alpha$, namely $\alpha \in \{0.2, 0.5, 0.8\}$ to depict different strength of association. 

\begin{figure}[t]
    \centering
    \includegraphics[width=.3\textwidth]{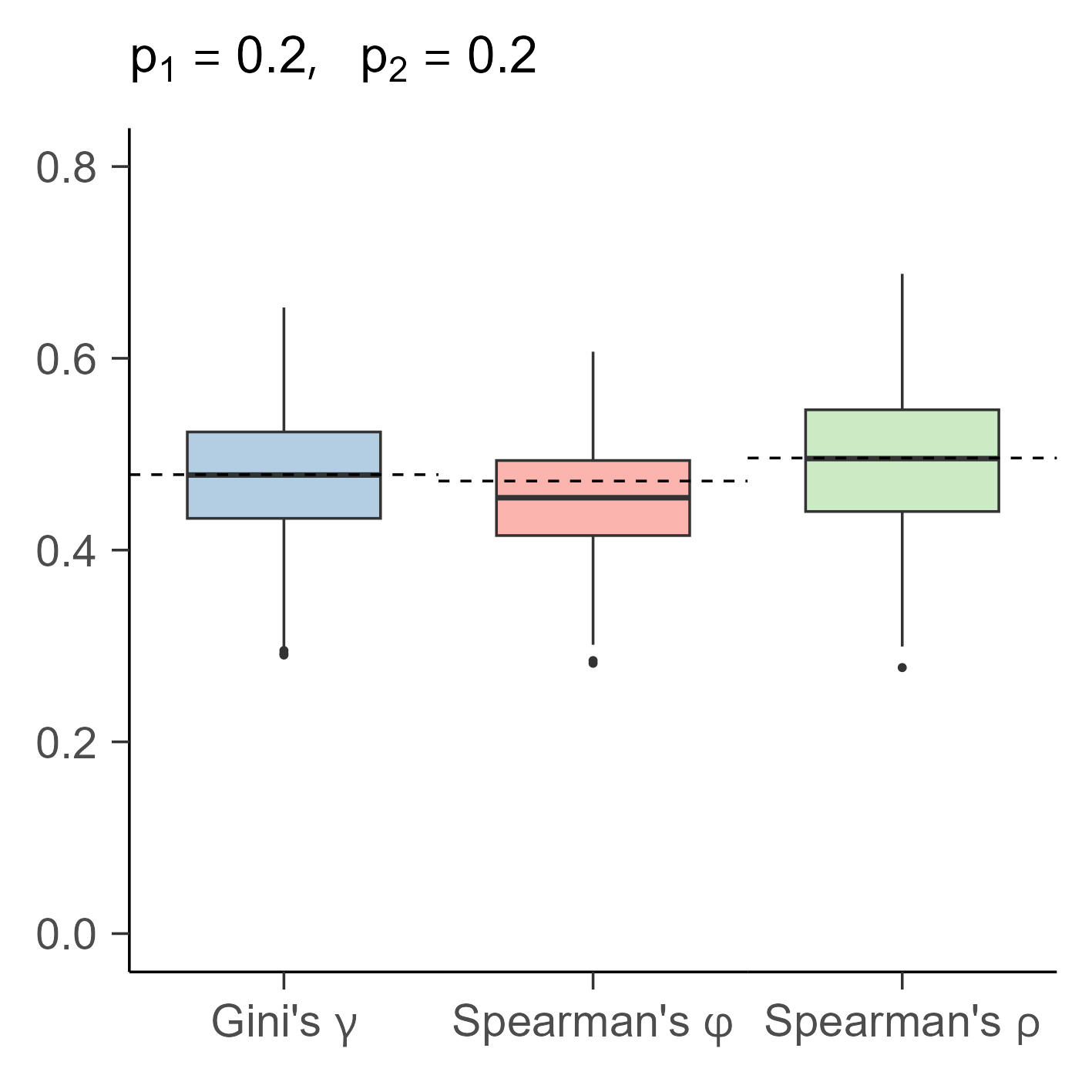}
    \includegraphics[width=.3\textwidth]{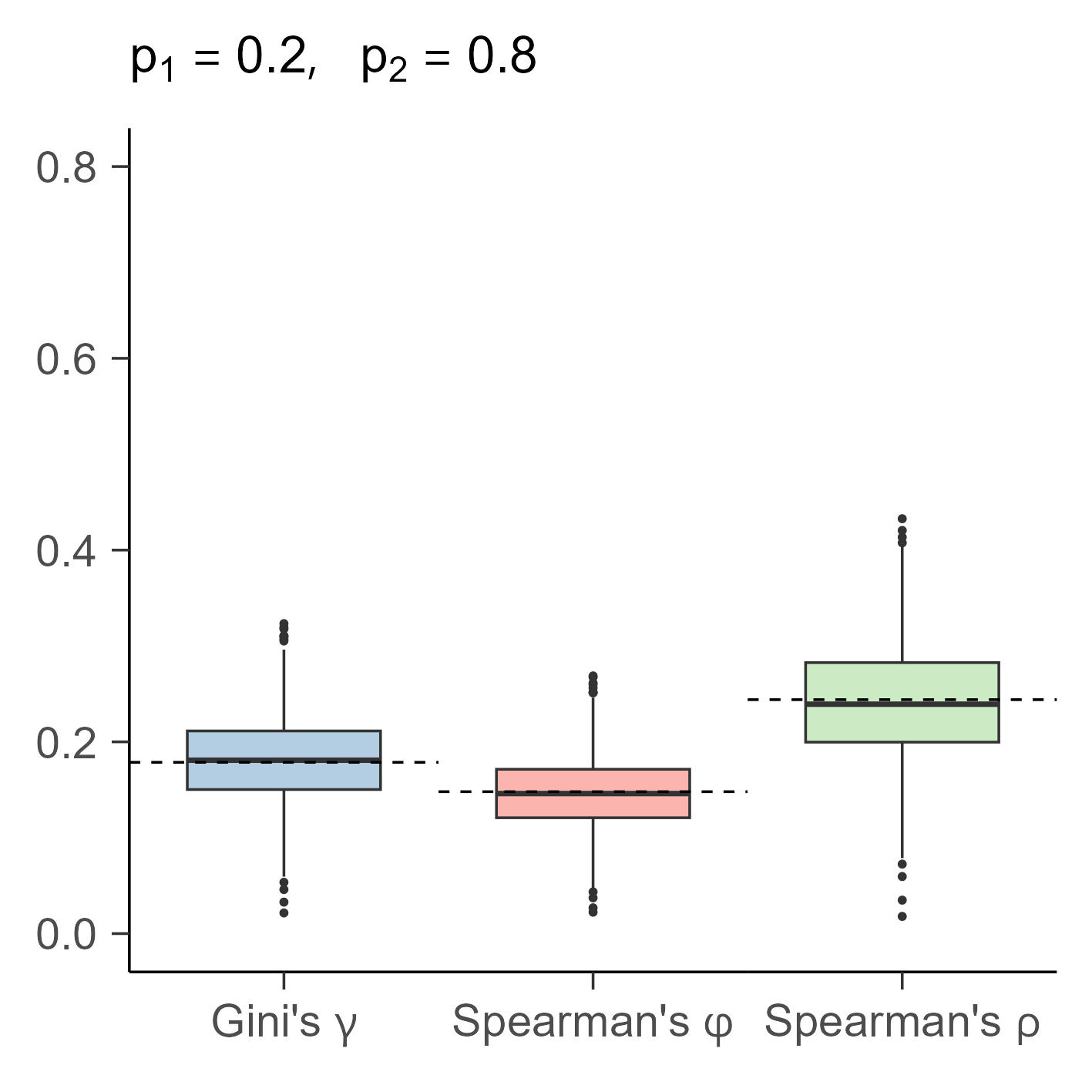}
    \includegraphics[width=.3\textwidth]{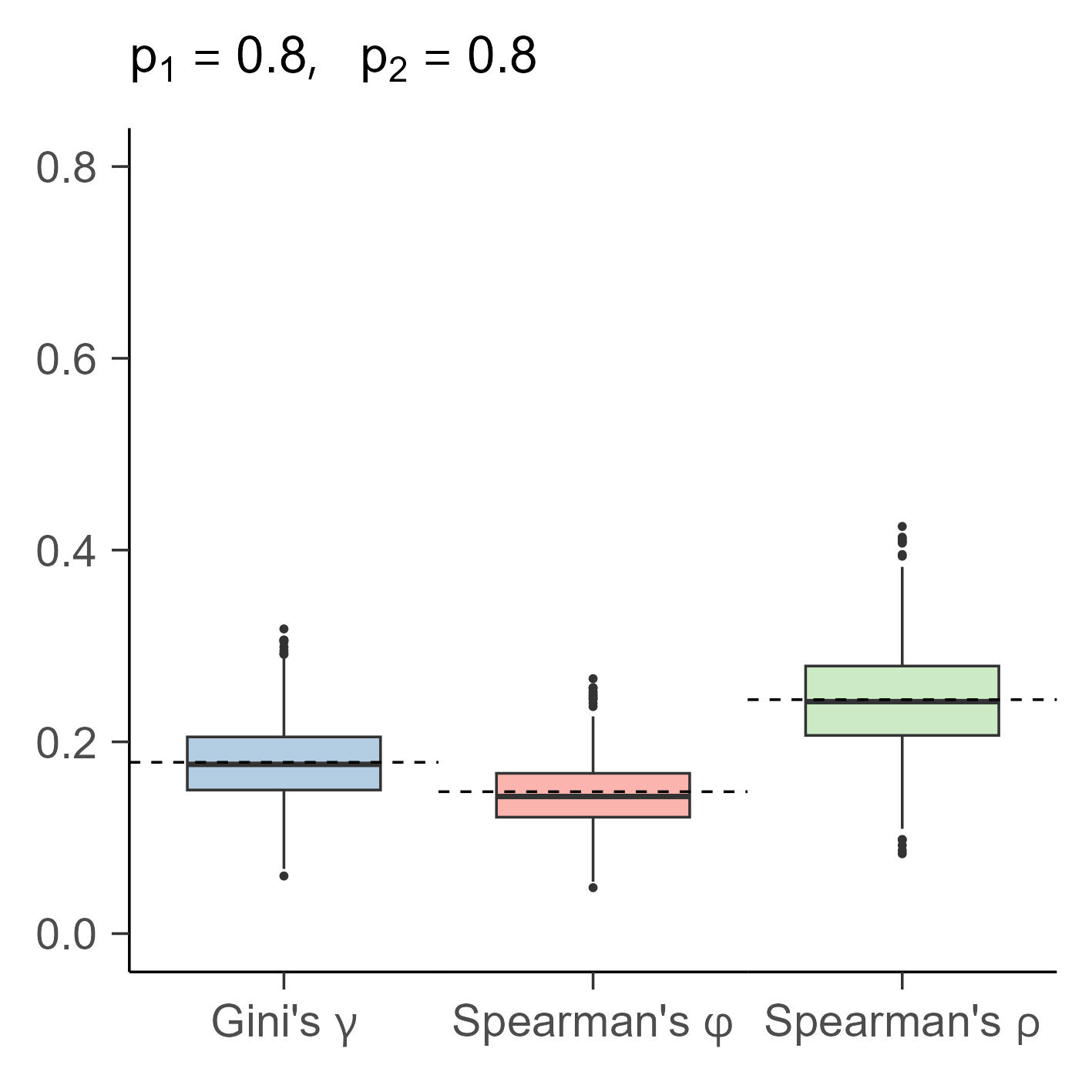}
    \caption{Simulation results for the Fr\'echet copula with $\alpha = 0.5$ and $N = 150$, averaged over 1000 runs. The dashed lines mark the true values of the concordance measures.}
    \label{fig:boxplots}
\end{figure}

\begin{table}[t!]
\centering
\caption{Simulation results for a sample size $N = 150$ and 1000 repetitions. The reported MSE is multiplied by a factor $10^2$.}
\begin{tabular}{lllllllllllllll}
\toprule[1.5pt]
$p_1$ & $p_2$ & $\alpha$ &  & \multicolumn{3}{l}{Gini's gamma}                   &  & \multicolumn{3}{l}{Spearman's   footrule}          &  & \multicolumn{3}{l}{Spearman's rho}               \\ \cline{5-7} \cline{9-11} \cline{13-15} 
      &       &          &  & True $\gamma$  & \multicolumn{2}{l}{$\hat \gamma$} &  & True $\phi$      & \multicolumn{2}{l}{$\hat \phi$} &  & True $\rho$    & \multicolumn{2}{l}{$\hat \rho$} \\ \cline{6-7} \cline{10-11} \cline{14-15} 
      &       &          &  &                & Mean            & MSE*            &  &                  & Mean            & MSE*          &  &                & Mean           & MSE*           \\ \hline
0.2   & 0.2   & 0.2      &  & 0.191          & 0.196           & 0.466           &  & 0.189            & 0.182           & 0.314         &  & 0.198          & 0.189          & 0.704          \\
      &       & 0.5      &  & 0.479          & 0.478           & 0.399           &  & 0.472            & 0.454           & 0.347         &  & 0.496          & 0.493          & 0.557          \\
      &       & 0.8      &  & 0.766          & 0.765           & 0.241           &  & 0.755            & 0.738           & 0.251         &  & 0.794          & 0.791          & 0.300          \\ \hline
0.2   & 0.8   & 0.2      &  & 0.071          & 0.074           & 0.172           &  & 0.059            & 0.059           & 0.106         &  & 0.098          & 0.094          & 0.348          \\
      &       & 0.5      &  & 0.179          & 0.181           & 0.226           &  & 0.148            & 0.147           & 0.154         &  & 0.244          & 0.241          & 0.408          \\
      &       & 0.8      &  & 0.286          & 0.288           & 0.241           &  & 0.237            & 0.236           & 0.180         &  & 0.390          & 0.385          & 0.384          \\ \hline
0.8   & 0.8   & 0.2      &  & 0.071          & 0.072           & 0.109           &  & 0.059            & 0.058           & 0.070         &  & 0.098          & 0.098          & 0.212          \\
      &       & 0.5      &  & 0.179          & 0.178           & 0.174           &  & 0.148            & 0.145           & 0.126         &  & 0.244          & 0.244          & 0.308          \\
      &       & 0.8      &  & 0.286          & 0.285           & 0.241           &  & 0.237            & 0.233           & 0.191         &  & 0.390          & 0.388          & 0.387          \\ \bottomrule[1.5pt]
\end{tabular}
\label{tab:simulations}
\end{table}

In Table~\ref{tab:simulations}, we present both the mean and the mean square error (MSE) for each of the estimators for a sample size of $N=150$ and $1000$ repetitions. Additionally, Figure~\ref{fig:boxplots} illustrates these outcomes through boxplots.
Examining Table~\ref{tab:simulations}, we notice that our estimators consistently perform well across all scenarios. 
It is particularly significant that their accuracy improves notably in scenarios of substantial zero-inflation, like when $p_1 = p_2 = 0.8$. 
The improvement results largely from the fact that the estimators are specifically designed to handle zero-inflation, demonstrating the method's efficacy for the intended scenarios.

In addition to assessing how well the estimators for $\gamma$, $\phi$, and $\rho$ perform, we investigate the estimation of their respective bounds. We focus on a similar situation by forming a sample of $N$ pairs originating from two random variables where each variable is zero-inflated with probabilities $p_1$ and $p_2$, where $p_1, p_2 \in \{0.2, 0.8\}$.
Table~\ref{tab:sim_bounds} presents the average results calculated from 1000 iterations. The estimated bounds match the theoretical bounds in almost all cases, confirming that our proposed method works well.

\begin{table}[t!]
\caption{Simulation results of the bounds estimation for $N = 150$ samples, averaged over 1000 runs.}
\begin{tabular}{lllllllllllll} \toprule[1.5pt]
$p_1$ & $p_2$ &  & \multicolumn{3}{l}{Gini's gamma}                             &  & \multicolumn{3}{l}{Spearman's   footrule}                       &  & \multicolumn{2}{l}{Spearman's rho}     \\ \cline{4-6} \cline{8-10} \cline{12-13} 
      &       &  & True                 & \multicolumn{2}{l}{Estimated}         &  & True                    & \multicolumn{2}{l}{Estimated}         &  & True               & Estimated         \\ \hline
0.2   & 0.2   &  & $[-0.92, 0.96]$    & \multicolumn{2}{l}{$[-0.92, 0.96]$} &  & $[-0.50, 0.94]$       & \multicolumn{2}{l}{$[-0.50, 0.94]$} &  & $[-0.98, 0.99]$, & $[-0.98, 0.99]$ \\
0.2   & 0.8   &  & $[-0.32, 0.36]$    & \multicolumn{2}{l}{$[-0.32, 0.36]$} &  & $[-0.25, 0.30]$       & \multicolumn{2}{l}{$[-0.24, 0.30]$} &  & $[-0.48, 0.49]$  & $[-0.47, 0.49]$ \\
0.8   & 0.8   &  & $[-0.08, 0.36]$    & \multicolumn{2}{l}{$[-0.08, 0.36]$} &  & $[-0.06, 0.30]$       & \multicolumn{2}{l}{$[-0.07, 0.30]$} &  & $[-0.12, 0.49]$  & $[-0.12, 0.49]$ \\ \bottomrule[1.5pt]
\end{tabular}
\label{tab:sim_bounds}
\end{table}

%---------------------------------------------------------------------	
\section{Case study}\label{sec:application}
%---------------------------------------------------------------------	

In this section, we illustrate how the theory proposed here can be used to analyze real data. 
To demonstrate the broad applicability of our theory, we consider two different datasets from weather forecasting and insurance, respectively. The first dataset groups the total precipitation recorded between 12:00 and 13:00 hours over the period from 2020 until 2023 at two weather stations in the Netherlands, that is, Schiphol and De Bilt. 
The data is freely available through the Royal Netherlands Meteorological Institute \citep{KNMI2024}, and it is depicted in Figure \ref{fig:KNMI}. 
The absence of rain on many days induces the inflation to zero, making the data set fit the scope of this work. 
Although the precipitation is zero-inflated continuous in nature, the measurements are rounded to 0.1 millimeters and are therefore technically discrete. We solve the issue of ties away from zero by ranking the tied observations at random. Since the inflation at zero forms a substantial amount of the data, the estimators will still yield proper approximations.
The estimated probabilities of no rain at Schiphol and De Bilt are $\hat p_1 = 0.884$ and $\hat p_2 = 0.868$, respectively, showing a heavy zero inflation as expected. 
The estimates of the concordance measures are given in Table~\ref{table:application}, and are small and positive. 
However, the upper bounds suggest that the association between precipitation at both weather stations appears much stronger than previously thought when interpreting the measure estimates on their own.

 The second dataset originates from insurance and consists in the Danish fire losses dataset, which was analyzed by~\cite{Denuit2017bounds} in the context of Kendall's tau for zero-inflated settings. 
 This dataset is available in the \textbf{R}~\citep{Rsoftware} package \texttt{fitdistrplus}~\citep{FitdistriplusPackage}. The data was collected by Copenhagen Reinsurance and consists of 2167 fire loss records from the period 1980 to 1990, adjusted for inflation.
In this study, we focus on content-related losses, where the proportion of zero values is $0.225$ for losses and $0.716$ for profits. 
These proportions indicate that a substantial portion of the observations are zero, reinforcing the need for zero-inflated continuous modeling techniques. 
Table~\ref{table:application} presents the estimates of concordance measures along with their attainable bounds.

By comparing the estimated measures with their attainable bounds, we observe that all the estimated values of $\rho$, $\gamma$ and $\phi$ lie well within their respective lower and upper limits. 
Like the precipitation dataset, the measures here reveal a positive association that seems considerably stronger within the narrow range of attainable values, thus highlighting the effectiveness of the methods suggested in this study.

\begin{figure}[t]
    \centering
    \includegraphics[width=.7\linewidth]{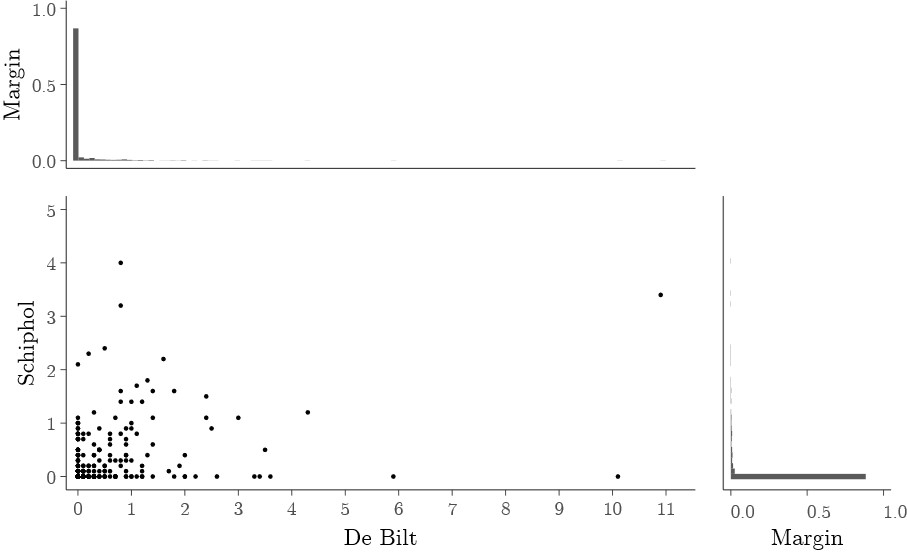}
    \caption{Scatter plot of historical records for precipitation (in mm) at De Bilt and Schiphol with a strong indication of zero inflation.}
    \label{fig:KNMI}
\end{figure}

\begin{table}[t]
	\centering
	\caption{Estimates of measures of association and their attainable bounds for the precipitation dataset.}
		\begin{tabular}{lcccccccc}
			\toprule[1.5pt]
            && \multicolumn{3}{l}{Precipitation dataset} && \multicolumn{3}{l}{Insurance dataset} \\ \cline{3-5} \cline{7-9}
			                     && $\rho$ & $\gamma$ & $\phi$ && $\rho$ & $\gamma$ & $\phi$  \\ \hline
			Estimate              && 0.183  & 0.126    & 0.098  && 0.273  & 0.183    &  0.153  \\
            Estimated lower bound && -0.047 & -0.031   & -0.024 && -0.622 & -0.441   &  -0.317 \\
            Estimated upper bound && 0.309  & 0.218    & 0.173  && 0.633  & 0.484    & 0.415   \\
			\bottomrule[1.5pt]
		\end{tabular}
	\label{table:application}
\end{table}

%%%%%%%%%%%%%%%%%%%%%%%%%%%%%%%%%%%%%%%%%%%%%%%%%%%%%%%%%%%%%%%%%%%%%%	
\section{{Conclusions}}\label{sec:conclusion}
In this paper, we present new formulations for Gini's gamma, Spearman's footrule, and Spearman's rho in the context of zero-inflated distributions. 
Based on these formulations, we propose novel estimators for these concordance measures. 
In addition, we derive the range of attainable values for the concordance measures and suggest a method to estimate them, making the proposed estimators interpretable and practically applicable. 
We evaluate how the estimators perform with both simulated and real-world data under different conditions. Our results confirm the effectiveness of the methods introduced in this work.

In future research, we aim to extend our approach to derive new formulations of Gini's gamma and Spearman's footrule for zero-inflated discrete settings. This direction builds on prior work for Kendall's tau by \cite{perrone2023}, who proposed a new formulation and estimator tailored to zero-inflated discrete distributions, along with corresponding bounds.

Another promising avenue, partially explored in \cite{Arends2025}, involves further investigating the relationship between Kendall's tau and Spearman's rho in zero-inflated contexts. This connection has been well established in the fully continuous case \cite{Schreyer2017}, but remains largely unexamined under zero-inflation. 
Understanding it in this setting can uncover deeper structural relationships among concordance measures, enabling more coherent interpretations and potentially streamlining estimation strategies across different measures.

Overall, our contributions lay the groundwork for a more robust understanding and estimation of concordance measures under zero-inflation. 
By addressing both theoretical and practical challenges, this line of research enhances the analysis of associations in data with zero inflation, which is a common feature in many applied fields, including economics, insurance, and weather forecasting.
%%%%%%%%%%%%%%%%%%%%%%%%%%%%%%%%%%%%%%%%%%%%%%%%%%%%%%%%%%%%%%%%%%%%%%

\bibliographystyle{plain}
\bibliography{References_TML}

\begin{thebibliography}{10}

\bibitem{Arends2025}
Jasper Arends and Elisa Perrone.
\newblock Association measures for zero-inflated data with application to public transport.
\newblock In Enrico di~Bella, Vincenzo Gioia, Corrado Lagazio, and Susanna Zaccarin, editors, {\em Statistics for Innovation II}, pages 65--71, Cham, 2025. Springer Nature Switzerland.

\bibitem{FitdistriplusPackage}
Marie-Laure Delignette-Muller, Christophe Dutang, Regis Pouillot, Jean-Baptiste Denis, and Aur\'{e}lie Siberchicot.
\newblock {\em fitdistrplus: Help to Fit of a Parametric Distribution to Non-Censored or Censored Data}, 2025.
\newblock R package version 1.2.2.

\bibitem{Denuit2017bounds}
Michel~M Denuit and Mhamed Mesfioui.
\newblock Bounds on {K}endall’s tau for zero-inflated continuous variables.
\newblock {\em Statistics \& Probability Letters}, 126:173--178, 2017.

\bibitem{DurSem15}
F.~Durante and C.~Sempi.
\newblock {\em Principles of Copula Theory}.
\newblock CRC/Chapman \& Hall, Boca Raton, FL, 2015.

\bibitem{Joe20141}
H.~Joe.
\newblock {\em Dependence modeling with copulas}.
\newblock CRC Press, 2014.

\bibitem{KNMI2024}
KNMI.
\newblock Uurgegevens van het weer in {N}ederland, 2025.

\bibitem{Mesfioui2022rhodisc}
M.~Mesfioui, J.~Trufin, and P.~Zuyderhoff.
\newblock Bounds on {S}pearman’s rho when at least one random variable is discrete.
\newblock {\em European Actuarial Journal}, 12:321--348, 2022.

\bibitem{Mesfioui2005properties}
Mhamed Mesfioui and Abdelouahid Tajar.
\newblock On the properties of some nonparametric concordance measures in the discrete case.
\newblock {\em Nonparametric Statistics}, 17(5):541--554, 2005.

\bibitem{Mesfioui2022rho}
Mhamed Mesfioui and Julien Trufin.
\newblock Best upper and lower bounds on {S}pearman’s rho for zero-inflated continuous variables and their application to insurance.
\newblock {\em European Actuarial Journal}, pages 1--7, 2022.

\bibitem{Remillard2024}
Bouchra~R. Nasri and Bruno~N. R\'emillard.
\newblock Tests of independence and randomness for arbitrary data using copula-based covariances.
\newblock {\em Journal of Multivariate Analysis}, 201:105273, 2024.

\bibitem{Nelsen2006}
Roger~B Nelsen.
\newblock {\em An Introduction to Copulas}.
\newblock Springer, New York, NY, USA, second edition, 2006.

\bibitem{perrone2023}
E.~Perrone, E.R. van~den Heuvel, and Z.~Zhan.
\newblock {K}endall's tau estimator for bivariate zero-inflated count data.
\newblock {\em Statistics and Probability Letters}, 199, 2023.

\bibitem{Pimentel2009kendall}
Ronald~S. Pimentel.
\newblock {\em {K}endall's Tau and {S}pearman's Rho for Zero-Inflated Data}.
\newblock PhD thesis, Western Michigan University, 2009.

\bibitem{Pimentel2015kendall}
Ronald~S. Pimentel, Magdalena Niewiadomska-Bugaj, and Jung-Chao Wang.
\newblock Association of zero-inflated continuous variables.
\newblock {\em Statistics \& Probability Letters}, 96:61--67, 2015.

\bibitem{Rsoftware}
{R Core Team}.
\newblock {\em R: A Language and Environment for Statistical Computing}.
\newblock R Foundation for Statistical Computing, Vienna, Austria, 2025.

\bibitem{Schreyer2017}
Manuela Schreyer, Roland Paulin, and Wolfgang Trutschnig.
\newblock On the exact region determined by {K}endall's $\tau$ and {S}pearman's $\rho$.
\newblock {\em Journal of the Royal Statistical Society. Series B: Statistical Methodology}, 79:613--633, 2017.

\bibitem{sklar_59}
M.~Sklar.
\newblock Fonctions de r\'epartition \`a $n$ dimensions et leurs marges.
\newblock {\em Publications de l’Institut Statistique de l’Universit\'e de Paris}, 8:229 -- 231, 1959.

\end{thebibliography}

%%%%%%%%%%%%%%%%%%%%%%%%%%%%%%%%%%%%%%%%%%%%%%%%%%%%%%%%%%%%%%%%%%%%%%
\section*{Appendix: Proofs of the results} \label{app}% Use section* to prevent numbering
\addcontentsline{toc}{section}{Appendix: Proofs} % Add to the table of contents
\renewcommand{\thesubsection}{A.\arabic{subsection}} % Customize subsection numbering

\subsection{Proof of Theorems ~\ref{thm:conc_upper}, ~\ref{thm:conc_lower} and ~\ref{thm:rho}}
For Gini's gamma and Spearman's footrule, expressions for $Q(M, \Pi)$ and $Q(W, \Pi)$ follow from the upper and lower bounds on Spearman's rho respectively. 
Therefore, we are only interested in finding formulations for $Q(J, J_\star)$ where $J_\star$ where $J_\star$ amounts to the Fr\'echet-Hoeffding bounds or independence copula. 
We will first derive a general formulation for $Q(J, J_\star)$ without specifying $J_\star$ and then further investigate the distributions of $X, Y$ considering each of these joint distributions separately.

The proof is along the same lines as the method presented by \cite{Pimentel2015kendall}. The strategy is to consider four distinct scenarios of $(X, Y)$, i.e., $\{X = 0, Y = 0\}$, $\{X > 0, Y = 0\}$, $\{X = 0, Y > 0\}$ and $\{X > 0, Y > 0\}$. To that end, define
\begin{align*}
	p^\star_{00}&=P(X_2=0, Y_2=0), \quad p^\star_{01}=P(X_2=0, Y_2>0),  \notag\\
	p^\star_{10}&=P(X_2>0, Y_2=0), \quad p^\star_{11}=P(X_2>0, Y_2>0)
\end{align*}
and consider the set $J_{a_1b_1a_2b_2}=\{\rm{sign}\,X=a_1,\,\rm{sign}\,Y=b_1,\ \rm{sign}\,X_\star=a_2,\,\rm{sign}\, Y_\star=b_2\}$,
where $a_1, b_1, a_2, b_2 \in \{0, 1\}$. Then, using the law of total probability, one may find that
\begin{align*}
    P(X < X_\star, Y < Y_\star) &= P(X < X_\star, Y_1 < Y_\star\ |\ J_{1111}) p_{11} p_{11}^\star +  P(X < X_\star\ |\ J_{1011}) p_{10} p_{11}^\star \\
	&\quad +  P(Y < Y_\star\ |\ J_{0111}) p_{01} p_{11}^\star + p_{00} p_{11}^\star, \\
    P(X > X_\star, Y > Y_\star) &= P(X > X_\star, Y > Y_\star\ |\ J_{1111}) p_{11} p_{11}^\star +  P(Y > Y_\star\ |\ J_{1101}) p_{11} p_{01}^\star \\
	&\quad +  P(X > X_\star\ |\ J_{1110}) p_{11} p_{10}^\star + p_{11} p_{00}^\star, \\
	P(X < X_\star, Y > Y_\star) &= P(X < X_\star, Y > Y_\star\ |\ J_{1111}) p_{11} p_{11}^\star +  P(Y > Y_\star\ |\ J_{0111}) p_{01} p_{11}^\star \\
	&\quad + P(X < X_\star\ |\ J_{1110}) p_{11} p_{10}^\star + p_{01} p_{10}^\star, \\
	P(X > X_\star, Y < Y_\star) &= P(X_1 > X_\star, Y < Y_\star\ |\ J_{1111}) p_{11} p_{11}^\star +  P(Y < Y_\star\ |\ J_{1101}) p_{11} p_{01}^\star \\
	&\quad + P(X > X_\star\ |\ J_{1011}) p_{10} p_{11}^\star + p_{10} p_{01}^\star.
\end{align*}
We can substitute these into the definition of the concordance function and obtain
\begin{align*}
    Q(J, J_\star) &= P(X > X_\star, Y > Y_\star) + P(X < X_\star, Y < Y_\star) \\
    &\quad - P(X > X_\star, Y < Y_\star) - P(X < X_\star, Y > Y_\star) \\
    &= p_{11} p_{11}^\star \bigl\{P(X > X_\star, Y > Y_\star\ |\ J_{1111}) + P(X < X_\star, Y < Y_\star\ |\ J_{1111}) \\
    &\qquad\qquad - P(X > X_\star, Y < Y_\star\ |\ J_{1111}) - P(X < X_\star, Y > Y_\star\ |\ J_{1111})\bigr\} \\
    &\quad + p_{10} p_{11}^\star \left\{P(X < X_\star\ |\ J_{1011}) - P(X > X_\star\ |\ J_{1011})\right\} \\
    &\quad + p_{01} p_{11}^\star \left\{P(Y < Y_\star\ |\ J_{0111}) - P(Y > Y_\star\ |\ J_{0111})\right\} \\
    &\quad + p_{11} p_{10}^\star \left\{P(X > X_\star\ |\ J_{1110}) - P(X < X_\star\ |\ J_{1110})\right\} \\
    &\quad + p_{11} p_{01}^\star \left\{P(Y > Y_\star\ |\ J_{1101}) - P(Y < Y_\star\ |\ J_{1101})\right\} \\
    &\quad + p_{00} p_{11}^\star + p_{11} p_{00}^\star - p_{01} p_{10}^\star - p_{10} p_{01}^\star,
\end{align*}
from which it follows that
\begin{align} \label{eq:conc_zid_general}
    Q(J, J_\star) &= p_{11} p_{11}^\star Q(J, J_\star)_{11} \notag\\
    &\quad + p_{10} p_{11}^\star \{2 \pi_{5, J_\star} - 1\} + p_{01} p_{11}^\star \{2 \pi_{6, J_\star} - 1\} \notag\\
    &\quad + p_{11} p_{10}^\star \{2 \pi_{4, J_\star} - 1\} + p_{11} p_{01}^\star \{2 \pi_{3, J_\star} - 1\} \notag\\
    &\quad + p_{00} p_{11}^\star + p_{11} p_{00}^\star - p_{01} p_{10}^\star - p_{10} p_{01}^\star.
\end{align}
Here, we define
\begin{align*}
    \pi_{5, J_\star} &= P(\{X_\star\ |\ X_\star > 0, Y_\star > 0\} > \{X\ |\ X > 0, Y = 0\}), \\
    \pi_{6, J_\star} &= P(\{Y_\star\ |\ X_\star > 0, Y_\star > 0\} > \{Y\ |\ X = 0, Y > 0\}),
\end{align*}
and $Q(J, J_\star)_{11}$ is the difference between the conditional probabilities of concordance and discordance given that all random variables, i.e., $X$, $Y$, $X_\star$ and $Y_\star$ to be specific, are strictly positive. In the following, we will prove Theorem~\ref{thm:conc_upper}, ~\ref{thm:conc_lower} and \ref{thm:rho} by elaborating upon this expression and specifying the joint distribution $J_\star$.

\subsubsection{Proof of Theorem \ref{thm:conc_upper}}
We first consider the case where $(X_\star, Y_\star)$ is governed by the upper Fr\'echet-Hoeffding copula defined by $M(u, v) := \min\{u, v\}$ for all $u, v \in [0, 1]$. Without loss of generality, we assume that $p_1 \leq p_2$. As recalled in the main text, \cite{Denuit2017bounds} derived that $F(\{X_\star\,|\,X_\star > 0,\,Y_\star > 0\}) > p_2$ and $F(\{X_\star\,|\,X_\star > 0,\,Y_\star = 0\}) \leq p_2$. Therefore, we will elaborate upon the previous results in Eq.~\eqref{eq:conc_zid_general} by distinguishing cases where $X$ is either smaller or larger than $F^{-1}(p_2)$. To that end, let
\begin{align*}
    J_{a_1b_1a_2b_2}^{\rm (I)} = \{{\rm sign}\ X = a_1,  {\rm sign}\ Y = b_1, {\rm sign}\ X_\star = a_2, {\rm sign}\ Y_\star = b_2, F(X) \leq p_2\}, \\
    J_{a_1b_1a_2b_2}^{\rm (II)} = \{{\rm sign}\ X = a_1,  {\rm sign}\ Y = b_1, {\rm sign}\ X_\star = a_2, {\rm sign}\ Y_\star = b_2, F(X) > p_2\},
\end{align*}
for $a_1, b_1, a_2, b_2 \in \{0, 1\}$. Here, $\rm (I)$ indicates the region where $F(X) \leq p_2$ and $\rm (II)$ the part where $F(X) > p_2$. Additionally, consider the probabilities
\begin{align*}
	p^{\rm (I)}_{11} &= P(X > 0, F(X) \leq p_2, Y_1 > 0), \quad p^{\rm (II)}_{11} = P(F(X) > p_2, Y > 0),  \notag\\
	p^{\rm (I)}_{10} &= P(X > 0, F(X) \leq p_2, Y = 0), \quad p^{\rm (II)}_{10}=P(F(X) > 0, Y_\star = 0).
\end{align*}
We will first elaborate upon each of the terms in Eq.~\eqref{eq:conc_zid_general} individually by
\begin{align*}
    p_{11} p_{11}^\star Q(J, M)_{11} &= p_{11} p_{11}^\star \bigl\{P(X > X_\star, Y > Y_\star\ |\ J_{1111}) + P(X < X_\star, Y < Y_\star\ |\ J_{1111}) \\
    &\qquad\qquad - P(X > X_\star, Y < Y_\star\ |\ J_{1111}) - P(X < X_\star, Y > Y_\star\ |\ J_{1111})\bigr\} \\
    &= p_{11} p_{11}^\star \bigl\{4 P(X > X_\star, Y > Y_\star\ |\ J_{1111}) - 1 \\
    &\qquad\qquad + P(X < X_\star\ |\ J_{1111}) - P(X > X_\star\ |\ J_{1111}) \\
    &\qquad\qquad + P(Y < Y_\star\ |\ J_{1111}) -  P(Y > Y_\star\ |\ J_{1111})\bigr\} \\
    &= p_{11}^\star \left\{p_{11}^{\rm (II)} 4 P(X > X_\star, Y > Y_\star\ |\ J_{1111}^{\rm (II)}) - p_{11}\right\} \\
    &\quad + p_{11}^\star \bigl\{p_{11}^{\rm (I)} + p_{11}^{\rm (II)} \left[P(X < X_\star\ |\ J_{1111}^{\rm (II)}) - P(X > X_\star\ |\ J_{1111}^{\rm (II)})\right] \bigr\} \\
    &\quad + p_{11} p_{11}^\star \bigl\{P(Y < Y_\star\ |\ J_{1111}) -  P(Y > Y_\star\ |\ J_{1111})\bigr\}, \\
\end{align*}
and
\begin{align*}
    p_{10} p_{11}^\star \{2 \pi_{2, M} - 1\} &= p_{11}^\star \Bigl\{p_{10}^{\rm (II)} \left[ P\bigl(X < X_\star\ |\ J_{1011}^{\rm (II)}\bigr) - P\bigl(X > X_\star\ |\ J_{1011}^{\rm (II)}\bigr)\right] + p_{10}^{\rm (I)}\Bigr\}, \\
    p_{11} p_{10}^\star \{2 \pi_{4, M} - 1\} &= p_{10}^\star \Bigl\{p_{11}^{\rm (I)} \left[2 \pi_{4, M}^{\rm (I)} - 1\right] + p_{11}^{\rm (II)}\Bigr\}.
\end{align*}
We can substitute these results back to find that
\begin{align*}
    Q(J, M) &= p_{11}^\star \bigl\{p_{11}^{\rm (II)} 4 P\bigl(X > X_\star, Y > Y_\star\ \bigm|\ J_{1111}^{\rm (II)}\bigr) - p_{11} + p_{11}^{\rm (I)} + p_{10}^{\rm (I)}\bigr\} \\
    &\quad + p_{11}^{\rm (II)} p_{11}^\star \bigl\{P\bigl(X < X_\star\ |\ J_{1111}^{\rm (II)}\bigr) - P\bigl(X > X_\star\ |\ J_{1111}^{\rm (II)}\bigr)\bigr\} \\
    &\quad + p_{10}^{\rm (II)} p_{11}^\star \bigl\{P\bigl(X < X_\star\ |\ J_{1011}^{\rm (II)}\bigr) - P\bigl(X > X_\star\ |\ J_{1011}^{\rm (II)}\bigr)\bigr\} \\
    &\quad + p_{11} p_{11}^\star \bigl\{ P(Y < Y_\star\ |\ J_{1111}) - P(Y > Y_\star\ |\ J_{1111})\bigr\} \\
    &\quad + p_{01} p_{11}^\star \bigl\{P(Y < Y_\star\ |\ J_{0111}) - P(Y > Y_\star\ |\ J_{1111})\bigr\} \\
    &\quad+ p_{10}^\star \bigl\{p_{11}^{\rm (I)} \bigl[2 \pi_{4, M}^{\rm (I)} - 1\bigr] + p_{11}^{\rm (II)}\bigr\} + p_{11} p_{01}^\star \{2 \pi_{3, M} - 1\} \\
    &\quad + p_{00} p_{11}^\star + p_{11} p_{00}^\star - p_{01} p_{10}^\star - p_{10} p_{01}^\star.
\end{align*}
This expression can be simplified by first computing
\begin{align*}
    p_{11}^\star p_{11}^{\rm (II)} &P(X < X_\star\ |\ J_{1111}^{\rm (II)}) + p_{11}^\star p_{10}^{\rm (II)} P(X < X_\star\ |\ J_{1011}^{\rm (II)}) \\
    &\quad = P(X < X_\star, Y > 0, F(X) > p_2, X_\star > 0, Y_\star > 0) \\
    &\qquad + P(X < X_\star, Y = 0, F(X) > p_2, X_\star > 0, Y_\star > 0) \\
    % &\quad = P(X < X_\star, F(X) > p_2, X_\star > 0, Y_\star > 0) \\
    &\quad = p_{11}^\star\ P(F(X) > p_2)\  P(X < X_\star\ |\ F(X) > p_2, X_\star > 0, Y_\star > 0) \\
    &\quad = p_{11}^\star (1 - p_2)/2.
\end{align*}
By a similar argument, it follows that
\begin{align*}
    p_{11}^\star p_{11}^{\rm (II)} P(X > X_\star\ |\ J_{1111}^{\rm (II)}) + p_{11}^\star p_{10}^{\rm (II)} P(X > X_\star\ |\ J_{1011}^{\rm (II)}) = p_{11}^\star (1 - p_2)/2.
\end{align*}
Observe that the two terms cancel out. \cite{Denuit2017bounds} also showed that $\{Y_\star\ |\ X_\star > 0, Y_\star > 0\}$ can attain any value larger than zero, this specifically follows from $p_{01}^\star = 0$. As a result, we may analogously find that
\begin{align*}
    &p_{11} p_{11}^\star \Bigl\{P(Y < Y_\star\ |\ J_{1111}) - P(Y > Y_\star\ |\ J_{1111})\Bigr\} + p_{01} p_{11}^\star \Bigl\{P(Y < Y_\star\ |\ J_{0111}) - P(Y > Y_\star\ |\ J_{0111})\Bigr\} \\
    &\quad = p_{11}^\star (1 - p_2) \Bigl\{P(Y < Y_\star\ |\ Y > 0, Y_\star > 0, X_\star > 0) - P(Y > Y_\star\ |\ Y > 0, X_\star > 0, Y_\star > 0)\Bigr\} \\
    &\quad = 0.
\end{align*}
Now substituting $p_{11}^\star = 1 - p_2$, $p_{00}^\star = p_1$, $p_{10}^\star = p_2 - p_1$ and $p_{01}^\star = 0$ yields
\begin{align}\label{eq:Q.HM}
    Q(J, M) &= (1 - p_2) \Bigl\{ p_{11}^{\rm (II)} \bigl[4 P\bigl(X > X_\star, Y > Y_\star\ |\ J_{1111}^{\rm (II)}\bigr) - 1\bigr] + p_{10}^{\rm (I)}\Bigr\} \notag\\
    &\quad + (p_2 - p_1) \Bigl\{p_{11}^{\rm (I)} \bigl(2 \pi_{4, M}^{\rm (I)} - 1\bigr) + p_{11}^{\rm (II)}\Bigr\} + p_{00} (1 - p_2) + p_{11} p_1 - p_{01} (p_2 - p_1).
\end{align}

\subsubsection{Proof of Theorem~\ref{thm:conc_lower}}
We now turn to the case where $J_\star$ amounts to the lower Fr\'echet-Hoeffding copula $W$ defined by $W(u, v) = \max\{u + v - 1, 0\}$ for all $u, v \in [0, 1]$. We need to make a case distinction between $p_1 + p_2 \leq 1$ and $p_1 + p_2 > 1$.

First, consider the former. We proceed similarly to the upper Fr\'echet-Hoeffding copula and recall the set of attainable values for the variables distributed as $W$. In particular, we have 
\begin{align*}
    p_1 < F(\{X_\star\ |\ X_\star > 0, Y_\star > 0\}) \leq 1 - p_2, \quad
    F(\{X_\star\ |\ X_\star > 0, Y_\star = 0\}) > 1 - p_2, \\
    p_2 < G(\{Y_\star\ |\ Y_\star > 0, X_\star > 0\}) \leq 1 - p_1, \quad
    G(\{Y_\star\ |\ Y_\star > 0, X_\star = 0\}) > 1 - p_1.
\end{align*}
Therefore, we will distinguish all cases where $X$ and $Y$ satisfy these conditions as well. We now write the terms from the generic expression of concordance in Eq.~\eqref{eq:conc_zid_general} as
\begin{align*}
    p_{11} p_{11}^\star Q(H_1, W)_{11} &= p_{11} p_{11}^\star \Bigl\{4 P(X > X_\star, Y > Y_\star\ |\ J_{1111}) - 1 + \\
    &\qquad\qquad + P(X < X_\star\ |\ J_{1111}) - P(X > X_\star\ |\ J_{1111}) \\
    &\qquad\qquad + P(Y < Y_\star\ |\ J_{1111}) - P(Y > Y_\star\ |\ J_{1111})\Bigr\} \\
    &= p_{11}^\star \Bigl\{4 p_{11}^{\rm (BI')} 4 P\bigl(X > X_\star, Y > Y_\star\ |\ J_{1111}^{\rm (BI')}\bigr) + 4 p_{11}^{\rm (AII')} - p_{11} \\
    &\qquad\qquad + 4 p_{11}^{\rm (AI')} P\bigl(X > X_\star\ |\ J_{1111}^{\rm (AI')}\bigr) + 4 p_{11}^{\rm (BII')} P\bigl(Y > Y_\star\ |\ J_{1111}^{\rm (BII')}\bigr) \\
    &\qquad\qquad + p_{11}^{\rm (I')} \bigl[ P\bigl(X < X_\star\ |\ J_{1111}^{\rm (I')}\bigr) - P\bigl(X > X_\star\ |\ J_{1111}^{\rm (I')}\bigr) \bigr] - p_{11}^{\rm (II')} \\
    &\qquad\qquad + p_{11}^{\rm (B)} \bigl[ P\bigl(Y < Y_\star\ |\ J_{1111}^{\rm (B)}\bigr) - P\bigl(Y > Y_\star\ |\ J_{1111}^{\rm (B)}\bigr) \bigr] - p_{11}^{\rm (A)} \Bigr\},
\end{align*}
and
\begin{align*}
    p_{11} p_{01}^\star \{2 \pi_{3, W} - 1\} &= p_{01}^\star \Bigl\{ p_{11}^{\rm (A)} \bigl[P\bigl(Y > Y_\star\ |\ J_{1101}^{\rm (A)}\bigr) - P\bigl(Y < Y_\star\ |\ J_{1101}^{\rm (A)}\bigr)\bigr] - p_{11}^{\rm (B)} \Bigr\}, \\
    %------------------------------
    p_{11} p_{10}^\star \{2 \pi_{4, W} - 1\} &= p_{10}^\star \Bigl\{p_{11}^{\rm (II')} \bigl[P\bigl(X > X_\star\ |\ J_{1110}^{\rm (II')}\bigr) - P\bigl(X < X_\star\ |\ J_{1110}^{\rm (II')}\bigr)\bigr] - p_{11}^{\rm (I')} \Bigr\}, \\
    %------------------------------
    p_{10} p_{11}^\star \{2 \pi_{5, W} - 1\} &= p_{11}^\star \Bigl\{ p_{10}^{\rm (I')} \bigl[ P\bigl(X < X_\star\ |\ J_{1011}^{\rm (I')}\bigr) - P\bigl(X > X_\star\ |\ J_{1011}^{\rm (I')}\bigr)\bigr] - p_{10}^{\rm (II')}\Bigr\}, \\
    %------------------------------
    p_{01} p_{11}^\star \{2 \pi_{6, W} - 1\} &= p_{11}^\star \Bigl\{p_{01}^{\rm (B)} \bigl[P\bigl(Y < Y_\star\ |\ J_{0111}^{\rm (B)}\bigr) - P\bigl(Y > Y_\star\ |\ J_{0111}^{\rm (B)}\bigr)\bigr] - p_{01}^{\rm (A)}\Bigr\}.
\end{align*}
Here we recognize several probabilities that have the same characteristic in the inequality sense, but are conditioned on the variables being in different regions. By similar arguments as for the upper Fr\'echet-Hoeffding copula, we can simplify these and find that
\begin{align*}
    &p_{11}^\star \Bigl\{p_{11}^{\rm (I')} \bigl[P\bigl(X < X_\star\ |\ J_{1111}^{\rm (I')}\bigr) - P\bigl(X > X_\star\ |\ J_{1111}^{\rm (I')}\bigr)\bigr] \\
    &\qquad + p_{10}^{\rm (I')} \bigl[P\bigl(X < X_\star\ |\ J_{1011}^{\rm (I')}\bigr) - P\bigl(X > X_\star\ |\ J_{1011}^{\rm (I')}\bigr)\bigr] \Bigr\} = 0, \\
    %------------------------------
    &p_{11}^\star \Bigl\{p_{11}^{\rm (B)} \bigl[ P\bigl(Y < Y_\star\ |\ J_{1111}^{\rm (B)}\bigr) - P\bigl(Y > Y_\star\ |\ J_{1111}^{\rm (B)}\bigr)\bigr] \\
    &\qquad + p_{01}^{\rm (B)} \bigl[ P\bigl(Y < Y_\star\ |\ J_{0111}^{\rm (B)}\bigr) - P\bigl(Y > Y_\star\ |\ J_{0111}^{\rm (B)}\bigr) \bigr] \Bigr\} = 0.
\end{align*}
Now we are in the position to combine all of these expressions. Moreover, using $p_{11}^\star = 1 - p_1 - p_2$, $p_{10}^\star = p_2$, $p_{01}^\star = p_1$, $p_{00}^\star = 0$, $p_{11} = 1 - p_1 - p_2 - p_{00}$, $p_{11}^{\rm (A)} + p_{01}^{\rm (A)} = p_1$ and $p_{11}^{\rm (II')} + p_{10}^{\rm (II')} = p_2$, it follows that
\begin{align*}
    Q(J, W) &= p_{11}^\star \Bigl\{4 p_{11}^{\rm (BI')} P\bigl(X > X_\star, Y > Y_\star\ |\ J_{1111}^{\rm (BI')}\bigr) + 4 p_{11}^{\rm (AII')} - p_{11} - p_{11}^{\rm (II')} - p_{11}^{\rm (A)} - p_{01}^{\rm (A)} - p_{10}^{\rm (II')} \\
    &\qquad\qquad + 4 p_{11}^{\rm (AI')} P\bigl(X > X_\star\ |\ J_{1111}^{\rm (AI')}\bigr) + 4 p_{11}^{\rm (BII')} P\bigl(Y > Y_\star\ |\ J_{1111}^{\rm (BII')}\bigr) \Bigr\} \\
    &\quad + p_{01}^\star \Bigl\{p_{11}^{\rm (A)} \bigl[P\bigl(Y > Y_\star\ |\ J_{1101}^{\rm (A)}\bigr) - P\bigl(Y < Y_\star\ |\ J_{1101}^{\rm (A)}\bigr)\bigr] - p_{11}^{\rm (B)} \Bigr\} \\
    &\quad + p_{10}^\star \Bigl\{ p_{11}^{\rm (II')} \bigl[P\bigl(X > X_\star\ |\ J_{1110}^{\rm (II')}\bigr) - P\bigl(X < X_\star\ |\ J_{1110}^{\rm (II')}\bigr)\bigr] - p_{11}^{\rm (I')} \Bigr\} \\
    &\quad + p_{00} p_{11}^\star + p_{11} p_{00}^\star - p_{01} p_{10}^\star - p_{10} p_{01}^\star \\
    %------------------------------
    &= (1 - p_1 - p_2) \Bigl\{4 p_{11}^{\rm (BI')} P\bigl(X > X_\star, Y > Y_\star\ |\ J_{1111}^{\rm (BI')}\bigr) + 4 p_{11}^{\rm (AII')} - (1 - p_1 - p_2 + p_{00}) \\
    &\qquad\qquad\qquad\qquad - p_1 - p_2 + 4 p_{11}^{\rm (AI')} P\bigl(X > X_\star\ |\ J_{1111}^{\rm (AI')}\bigr) + 4 p_{11}^{\rm (BII')} P\bigl(Y > Y_\star\ |\ J_{1111}^{\rm (BII')}\bigr) \Bigr\} \\
    &\quad + p_1 \Bigl\{p_{11}^{\rm (A)} \bigl[2 \pi_{3, W}^{\rm (A)} - 1\bigr] - p_{11}^{\rm (B)} \Bigr\} \\
    &\quad + p_2 \Bigl\{ p_{11}^{\rm (II')} \bigl[2 \pi_{4, W}^{\rm (II')} - 1\bigr] - p_{11}^{\rm (I')} \Bigr\} \\
    &\quad + p_{00} (1 - p_1 - p_2) - p_{01} p_2 - p_{10} p_1 \\
    %------------------------------
    &= (1 - p_1 - p_2) \Bigl\{4 p_{11}^{\rm (BI')} P\bigl(X > X_\star, Y > Y_\star\ |\ J_{1111}^{\rm (BI')}\bigr) - 1 + 4 p_{11}^{\rm (AII')} \\
    &\qquad\qquad\qquad\qquad + 4 p_{11}^{\rm (AI')} P\bigl(X > X_\star\ |\ J_{1111}^{\rm (AI')}\bigr) + 4 p_{11}^{\rm (BII')} P\bigl(Y > Y_\star\ |\ J_{1111}^{\rm (BII')}\bigr) \Bigr\} \\
    &\quad + p_1 \Bigl\{ 2 p_{11}^{\rm (A)} \pi_{3, W}^{\rm (A)} - (1 - p_1) \Bigr\} + p_2 \Bigl\{ 2 p_{11}^{\rm (II')} \pi_{4, W}^{\rm (II')} - (1 - p_2) \Bigr\}.
\end{align*}

Finally, we consider the case where $p_1 + p_2 > 1$. Under these conditions, $p_{11}^\star = 0$, and therefore we do not have to consider more refined regions as before. Instead, simply substituting $p_{10}^\star = 1 - p_1$, $p_{01}^\star = 1 - p_2$ and $p_{00}^\star = p_1 + p_2 - 1$ directly yields
\begin{align*}
    Q(J, W) &= p_{11} (1 - p_2) \{2 \pi_{3, W} - 1\} + p_{11} (1 - p_1) \{2 \pi_{4, W} - 1\} \\
    &\quad + p_{11} (p_1 + p_2 - 1) - p_{01} (1 - p_1) - p_{10} (1 - p_2).
\end{align*}

\subsubsection{Proof of Theorem~\ref{thm:rho}}
Finally, we consider Spearman's rho, i.e., the case where $H_2$
coincides with independence. In particular, we have
\begin{align*}
    Q(J, \Pi) &= P((X_1 - X_2)(Y_1 - Y_3) > 0) - P((X_1 - X_2)(Y_1 - Y_3) < 0)
\end{align*}
where $(X_1, Y_1), (X_2, Y_2)$ and $(X_2, Y_3)$ are independent copies of $(X, Y)$. We again proceed with the general formulation for $Q(J, J_\star)$ as derived in Eq.~\eqref{eq:conc_zid_general}. We now distinguish the cases where $X_3$ and $Y_2$ are zero or strictly positive as well. First, we rewrite each of the terms in Eq.~\eqref{eq:conc_zid_general} as
\begin{align*}
    p_{11} p_{11}^\star Q(J, \Pi)_{11} &= p_{11}^3 Q(J, \Pi)_{11} + p_{11}^2 p_{10} Q(J, \Pi)_{10} + p_{11}^2 p_{01} Q(J, \Pi)_{01} + p_{11} p_{10} p_{01} Q(J, \Pi)_{00}
\end{align*}
and
\begin{align*}
    p_{10} p_{11}^\star P(X_1 < X_2\ |\ J_{1011}) &= p_{10} p_{11} (1 - p_2) P(X_1 < X_2\ |\ J_{1011}, Y_2 > 0) \\
    &\quad + p_{10}^2 (1 - p_2) P(X_1 < X_2\ |\ J_{1011}, Y_2 = 0) \\
    &= p_{10} p_{11} (1 - p_2) \pi_{4, J} + \frac{1}{2} p_{10}^2 (1 - p_2) \\
    %------------------------------
    p_{01} p_{11}^\star P(Y_1 < Y_3\ |\ J_{0111}) &= p_{01} p_{11} (1 - p_1) P(Y_1 < Y_3\ |\ J_{0111}, X_3 > 0) \\
    &\quad + p_{01}^2 (1 - p_1) P(Y_1 < Y_3\ |\ J_{0111}, X_3 = 0) \\
    &= p_{01} p_{11} (1 - p_1) \pi_{3, J} + \frac{1}{2} p_{01}^2 (1 - p_1) \\
    %------------------------------
    p_{11} p_{10}^\star P(X_1 > X_2\ |\ J_{1110}) &= p_{11}^2 p_2 P(X_1 > X_2\ |\ J_{1110}, Y_2 > 0) \\
    &\quad + p_{11} p_{10} p_2 P(X_1 > X_2\ |\ J_{1110}, Y_2 = 0) \\
    &= \frac{1}{2} p_{11}^2 p_2 + p_{11} p_{10} p_2 \pi_{4, J} \\
    %------------------------------
    p_{11} p_{01}^\star P(Y_1 > Y_2\ |\ J_{1101}) &= p_{11}^2 p_1 P(Y_1 > Y_3\ |\ J_{1101}, X_3 > 0) \\
    &\quad + p_{11} p_{01} p_1 P(Y_1 > Y_3\ |\ J_{1101}, X_3 > 0) \\
    &= \frac{1}{2} p_{11}^2 p_1 + p_{11} p_{01} p_1 \pi_{3, J},
\end{align*}
with similar derivations for the other terms. We now obtain
\begin{align*}
    Q(J, \Pi) &= p_{11}^3 Q(J, \Pi)_{11} + p_{11}^2 p_{10} Q(J, \Pi)_{10} \\
    &\quad + p_{11}^2 p_{01} Q(J, \Pi)_{01} + p_{11} p_{10} p_{01} Q(J, \Pi)_{00} \\
    &\quad +  \left(p_{11} p_{10} (1 - p_2) \pi_{4, J} + \frac{1}{2} p_{10}^2 (1 - p_2)\right) - \left(p_{11} p_{10} (1 - p_2) \overline \pi_{4, J} + \frac{1}{2} p_{10}^2 (1 - p_2)\right) \\
    &\quad + \left(p_{01} p_{11} (1 - p_1) \pi_{3, J} + \frac{1}{2} p_{01}^2 (1 - p_1)\right) - \left(p_{01} p_{11} (1 - p_1) \overline \pi_{3, J} + \frac{1}{2} p_{01}^2 (1 - p_1)\right) \\
    &\quad + \left(\frac{1}{2} p_{11}^2 p_2 + p_{11} p_{10} p_2 \pi_{4, J}\right) - \left(\frac{1}{2} p_{11}^2 p_2 + p_{11} p_{10} p_2 \overline \pi_{4, J}\right) \\
    &\quad + \left(\frac{1}{2} p_{11}^2 p_1 + p_{11} p_{01} p_1 \pi_{3, J}\right) - \left(\frac{1}{2} p_{11}^2 p_1 + p_{11} p_{01} p_1 \overline \pi_{3, J}\right) \\
    &\quad + p_{00} p_{11}^\star + p_{11} p_{00}^\star - p_{01} p_{10}^\star - p_{10} p_{01}^\star.
\end{align*}
Here we define $\overline \pi_{3, J} = 1 - \pi_{3, J}$ and $\overline \pi_{4, J} = 1 - \pi_{4, J}$. Besides, \cite{Pimentel2009kendall} noted that $p_{00} p_{11}^\star + p_{11} p_{00}^\star - p_{01} p_{10}^\star - p_{10} p_{01}^\star = p_{11} p_{00} - p_{10} p_{01}$. Therefore, the expression can be simplified and Spearman's rho can be given by
\begin{align*}
    \rho &= 3 p_{11}^3 Q(J, \Pi)_{11} + 3 p_{11}^2 p_{10} Q(J, \Pi)_{10} + 3 p_{11}^2 p_{01} Q(J, \Pi)_{01} + 3 p_{11} p_{10} p_{01} Q(J, \Pi)_{00} \\
    &\quad + 3 p_{11} \Bigl\{p_{10} (2 \pi_{4, J} - 1) + p_{01} (2 \pi_{3, J} - 1)\Bigr\} + 3 (p_{11} p_{00} - p_{10} p_{01}).
\end{align*}

\subsection{Proof of~\hyperref[thm:boundsgamma]{Theorem~\ref{thm:boundsgamma}}}
In this section, we derive the attainable bounds on Gini's gamma and Spearman's footrule using the expressions that we derived in the previous section. Recall that the bounds are attained under the upper and lower Fr\'echet-Hoeffding copulas. \cite{Denuit2017bounds} derived sharp upper and lower bounds for Kendall's tau by computing $Q(M, M)$ and $Q(W, W)$ explicitly for zero-inflated data. Specifically, they derived that
\begin{align*}
    Q(M, M) &= 1 - \max\{p_1, p_2\}^2, & Q(W, W) &= \max\{0, 1 - p_1 - p_2\}^2 - 2 (1 - p_1)(1 - p_2).
\end{align*}
On the other hand, Theorem~\ref{thm:rho} provides us with expressions for $Q(\Pi, M)$ and $Q(\Pi, W)$. Therefore, the only term that remains to be computed is $Q(W, M)$. We make use of the expression for $Q(W, M)$ as given in Eq.~\eqref{eq:Q.HM}. We consider three case distinctions under the general assumption that $p_1 \leq p_2$.

\begin{itemize}
    \item We first consider the case where $p_1 + p_2 > 1$, then $p_{11} = 0$ and therefore $p_{11}^{\rm (I)} = p_{11}^{\rm (II)} = 0$ as well. Besides, $p_{00} = p_1 + p_2 - 1$ and $p_{01} = 1 - p_2$. Finally, we compute
    \begin{align*}
        p_{10}^{\rm (I)} &= P(X > 0, Y = 0, F(X) \leq p_2) \\
        &= P(F(X) \leq p_2, Y = 0) - P(X = 0, Y = 0) \\
        &= \max\{p_2 + p_2 - 1, 0\} - \max\{p_1 + p_2 - 1, 0\} \\
        &= p_2 - p_1.
    \end{align*}
    Substitution of these probabilities directly into Eq.~\eqref{eq:Q.HM} yields
    \begin{align*}
        Q(W, M) &= (1 - p_2) (p_2 - p_1) + (p_1 + p_2 - 1) (1 - p_2) - (1 - p_2) (p_2 - p_1) \\
        &= (p_2 + p_1 - 1) (1 - \max\{p_1, p_2\}).
    \end{align*}

    \item Secondly, assume that $p_1 + p_2 \leq 1$ and $p_2 \leq \frac{1}{2}$. We again start by computing the individual probability masses. Recall that from \citep{Denuit2017bounds}, we already have $p_{11} = 1 - p_1 - p_2$ and $p_{01} = p_1$. The remaining probabilities can be computed by
    \begin{align*}
        p_{11}^{\rm (I)} &= P(X > 0, Y > 0, F(X) \leq p_2) \\
        &= P(0 < X, F(X) \leq p_2) - P(F(X) \leq p_2, Y = 0) + P(X = 0,  Y = 0) \\
        &= (p_2 - p_1) - \max\{p_2 + p_2 - 1, 0\} + \max\{p_1 + p_2 - 1, 0\} \\
        &= p_2 - p_1, \\
        p_{11}^{\rm (II)} &= p_{11} - p_{11}^{\rm (I)} = 1 - 2 p_2, \\
        p_{10}^{\rm (I)} &= P(X > 0, Y = 0, F(X) \leq p_2) \\
        &= P(F(X) \leq p_2, Y = 0) - P(X = 0, Y = 0) \\
        &= \max\{p_2 + p_2 -1 , 0\} - \max\{p_1 + p_2 - 1, 0\} \\
        &= 0.
    \end{align*}
    We now turn to computing the remaining probabilities that involve inequalities between the pairs $(X, Y)$ and $(X_\star, Y_\star)$. In particular, we will first find an expression for $P \bigl(X > X_\star, Y > Y_\star\ |\ J_{1111}^{\rm (II)}\bigr)$. Given $J_{1111}^{\rm (II)}$, the values of $X$ and $Y$ are restricted such that $0 < X \leq F^{-1}(1 - p_2)$ and $0 < Y \leq G^{-1}(1 - p_2)$. We can derive an expression for the probability in question by conditioning on whether these conditions are also satisfied for $X_2$ and $Y_2$. As a result, we obtain
    \begin{align*}
        &4 p_{11}^\star p_{11}^{\rm (II)} P\bigl(X > X_\star, Y > Y_\star\ \bigm |\ J_{1111}^{\rm (II)}\bigr) \\
        &\quad= 4 (1 - 2 p_2) p_{11}^{\rm (II)} P\bigl( X > X_\star, Y > Y_\star\ \bigm |\ J_{1111}^{\rm (II)}, F(X_\star) \leq 1 - p_2\bigr) \\
        % &\quad\quad + 4 p_2 p_{11}^{\rm (II)} P\bigl(P(X_1 > X_2, Y_1 > Y_2\ \bigm |\ J_{1111}^{\rm (II)}, F(X_2) > 1 - p_2\bigr) \\
        &\quad = (1 - 2 p_2)^2.
    \end{align*}
    Finally, observe that given $J_{1110}^{\rm (I)}$, the random variables $X$ and $X_\star$ have the same continuous distribution, from which it follows that
    \begin{equation*}
        \pi_{4, M}^{\rm (I)} = P\bigl(X > X_\star\ \bigm|\ J_{1110}^{\rm (I)}\bigr) = \frac{1}{2}.
    \end{equation*}
    Substituting these results into Eq.~\eqref{eq:Q.HM} yields
    \begin{align*}
        Q(W, M) &= (1 - 2 p_2)^2 - (1 - p_2) (1 - 2 p_2) \\
        &\quad + (p_2 - p_1) (1 - 2 p_2) + (1 - p_1 - p_2) p_1 - p_1 (p_2 - p_1) \\
        &= 0.
    \end{align*}

    \item Now consider the last case, where $p_1 + p_2 \leq 1$ and $p_2 > \frac{1}{2}$. Under these conditions, we still have $p_{11} = 1 - p_1 - p_2$ and $p_{01}  = p_1$, however
    \begin{align*}
        p_{11}^{\rm (I)} &= P(X > 0, Y > 0, F(X) \leq p_2) \\
        &= P(0 < X, F(X) \leq p_2) - P(F(X) \leq p_2, Y = 0) + P(X = 0, Y = 0) \\
        &= (p_2 - p_1) - \max\{p_2 + p_2 - 1, 0\} + \max\{p_1 + p_2 - 1, 0\} \\
        &= 1 - p_1 - p_2, \\
        p_{11}^{\rm (II)} &= p_{11} - p_{11}^{\rm (I)} = 0, \\
        p_{10}^{\rm (I)} &= P(X > 0, Y = 0, F(X) \leq p_2) \\
        &= P(F(X) \leq p_2, Y = 0) - P(X = 0, Y = 0) \\
        &= \max\{p_2 + p_2 - 1, 0\} - \max\{p_1 + p_2 - 1, 0\} \\
        &= 2 p_2 - 1.
    \end{align*}
    Finally, we compute $\pi_{4, M}^{\rm (I)}$. Given $J_{1110}^{\rm (I)}$, the variable $X_1$ satisfies $0 < F(X) \leq 1 - p_2$, therefore we condition on $X_\star$ satisfying this condition as well. Doing so, we obtain
    \begin{align*}
        p_{11}^\star p_{11}^{\rm (I)} \pi_{4, M}^{\rm (I)} &= (1 - p_1 - p_2) p_{11}^{\rm (I)} P(X > X_\star\ |\ J_{1111}^{\rm (I)}, F(X_\star) \leq 1 - p_2) \\
        &\quad + (2 p_2 - 1) p_{11}^{\rm (I)} P(X > X_\star\ |\ J_{1111}^{\rm (I)}, F(X_\star) > 1 - p_2) \\
        &= \frac{1}{2} (1 - p_1 - p_2)^2.
    \end{align*}
    It now follows that
    \begin{align*}
        Q(W, M) &= (1 - p_2) (2 p_2 - 1) \\
        &\quad + (1 - p_1 - p_2)^2 - (p_2 - p_1) (1 - p_1 - p_2) \\
        &\quad + (1 - p_1 - p_2) p_1 - p_1 (p_2 - p_1) \\
        &= 0.
    \end{align*}
\end{itemize}
Now that we have expressions available for $Q(M, M)$, $Q(W, M)$, $Q(W, W)$, $Q(M, \Pi)$ and $Q(W, \Pi)$, we can formulate the bounds on Gini's gamma and Spearman's footrule. Suppose that $p_1 + p_2 \leq 1$, then
\begin{align*}
    \gamma_{\max} &= Q(M, M) + Q(W, M) - Q(\Pi, M) - Q(\Pi, W) \\
    &= \bigl(1 - \max\{p_1, p_2\}^2\bigr) - \frac{1}{3} \bigl(1 - \max\{p_1, p_2\}^3\bigr) - \frac{1}{3} \bigl(p_1^3 + p_2^3 - 1\bigr) \\
    &= 1 - \max\{p_1, p_2\}^2 - \frac{1}{3} \min\{p_1, p_2\}^3,
\end{align*}
and
\begin{align*}
    \gamma_{\min} &= Q(W, M) + Q(W, W) - Q(\Pi, M) - Q(\Pi, W) \\
    &= \bigl((1 - p_1 - p_2)^2 - 2 (1 - p_1)(1 - p_2)\bigr) - \frac{1}{3} \bigl(1 - \max\{p_1, p_2\}^3\bigr) - \frac{1}{3} \bigl(p_1^3 + p_2^3 - 1\bigr) \\
    &= (1 - p_1 - p_2)^2 - 2 (1 - p_1)(1 - p_2) - \frac{1}{3} \min\{p_1, p_2\}^3
\end{align*}
On the other hand, when $p_1 + p_2 > 1$, we conclude that
\begin{align*}
    \gamma_{\max} &= Q(M, M) + Q(W, M) - Q(\Pi, M) - Q(\Pi, W) \\
    &= \bigl(1 - \max\{p_1, p_2\}^2\bigr) + (p_1 + p_2 - 1) (1 - \max\{p_1, p_2\}) \\
    &\qquad - \frac{1}{3} \bigl(1 - \max\{p_1, p_2\}^3\bigr) + (1 - p_1)(1 - p_2) \\
    &= \frac{2}{3} + \max\{p_1, p_2\} \bigl(1 - 2 \max\{p_1, p_2\} + \frac{1}{3} \max\{p_1, p_2\}^2\bigr),
\end{align*}
and
\begin{align*}
    \gamma_{\min} &= Q(W, M) + Q(W, W) - Q(\Pi, M) - Q(\Pi, W) \\
    &= (p_1 + p_2 - 1)(1 - \max\{p_1, p_2\}) - 2 (1 - p_1)(1 - p_2) \\
    &\qquad - \frac{1}{3} \bigl(1 - \max\{p_1, p_2\}^3\bigr) + (1 - p_1)(1 - p_2) \\
    &= (p_1 + p_2 + \min\{p_1, p_2\} - 2) (1 - \max\{p_1, p_2\}) - \frac{1}{3} \bigl(1 - \max\{p_1, p_2\}^3\bigr).
\end{align*}
Finally, we can compute the bounds on Spearman's footrule directly by noting that
\begin{align*}
    \phi_{\max} &= \frac{3}{2} Q(M, M) - \frac{3}{2} Q(M, \Pi) \\
    &= \frac{3}{2} \bigl(1 - \max\{p_1, p_2\}^2\bigr) - \frac{1}{2} \bigl(1 - \max\{p_1, p_2\}^3\bigr) \\
    &= 1 - \frac{3}{2} \max\{p_1, p_2\}^2 + \frac{1}{2} \max\{p_1, p_2\}^3,
\end{align*}
and
\begin{align*}
    \phi_{\min} &= \frac{3}{2} Q(W, M) - \frac{3}{2} Q(M, \Pi) \\
    &= \frac{3}{2} \max\{0, p_1 + p_2 - 1\} (1 - \max\{p_1, p_2\}) - \frac{1}{2} \bigl(1 - \max\{p_1, p_2\}^3\bigr).
\end{align*}

\subsection{Proof of~\hyperref[thm:sper-rho-bounds]{Theorem~\ref{thm:sper-rho-bounds}}}
In this section, we proof Theorem~\ref{thm:sper-rho-bounds} and compute the sharp upper and lower bounds on Spearman's rho.

We start with the upper bound and recall that it is attained $(X, Y)$ are joined through the upper Fr\'echet-Hoeffding copula. Without loss of generality, assume that $p_1 \leq p_2$, then \cite{Denuit2017bounds} noted that $p_{11} = 1 - p_2$, $p_{10} = p_2 - p_1$, $p_{01} = 0$ and $p_{00} = p_1$. As discussed before, they also derived that $F(\{X\,|\, X > 0, Y = 0\}) \leq p_2$ and $F(\{X\,|\, X > 0, Y > 0\}) > p_2$. In particular, this suggests that $\{X\,|\, X > 0, Y = 0\} < \{X\,|\, X > 0, Y > 0\}$ with probability one, and hence $\pi_{5, J} = 1$. Moreover, this allows us to compute $Q(J, \Pi)_{10}$ as follows. First, note that given $S_{10}$, the random variables $X_1$ and $X_2$ are distributed as $(X\, |\, X > 0, Y > 0)$ and $(X\, |\, X > 0, Y = 0)$ respectively, and therefore $X_2 < X_1$ with probability one. Moreover, $Y_1$ and $Y_3$ have the same continuous distribution, and hence
\begin{align*}
    Q(J, \Pi)_{10} &= P\bigl((X_1 - X_2)(Y_1 - Y_3) > 0\ |\ S_{10}\bigr) - P\bigl((X_1 - X_2)(Y_1 - Y_3) < 0\ |\ S_{10}\bigr) \\
    &= P(Y_1 > Y_3\ |\ S_{10}) - P(Y_1 < Y_3\ |\ S_{10}) \\
    &= 0.
\end{align*}
Now using $3 Q(J, \Pi)_{11} = 1$ since this amounts to Spearman's rho for continuous random variables under the upper Fr\'echet-Hoeffding copula, we conclude that
\begin{align*}
    \rho_{\max} &= (1 - p_2)^3 + 3 (1 - p_2) p_1 + 3 (1 - p_2) (p_2 - p_1) \\
    &= \bigl(1 - 3 p_2 + 3 p_2^2 - p_2^3\bigr) + 3 (1 - p_2) p_2 \\
    &= 1 - p_2^3.
\end{align*}
We will now derive the lower bound in a similar manner. First, consider the case where $p_1 + p_2 \leq 1$, then \cite{Denuit2017bounds} noted that $p_{11} = 1 - p_1 - p_2$, $p_{10} = p_2$, $p_{01} = p_1$ and $p_{00} = 0$. Moreover, given $S_{10}$, the random variables $X_1$ and $X_2$ are distributed as $(X\, |\, X > 0, Y > 0)$ and $(X\, |\, X > 0, Y = 0)$ respectively, and satisfy $X_1 < X_2$ with probability one. It now follows that $\pi_{4, J} = 0$ and
\begin{align*}
    Q(J, \Pi)_{10} &= P\bigl((X_1 - X_2)(Y_1 - Y_3) > 0\ |\ S_{10}\bigr) - P\bigl((X_1 - X_2)(Y_1 - Y_3) < 0\ |\ S_{10}\bigr) \\
    &= P(Y_1 < Y_3\ |\ S_{10}) - P(Y_1 > Y_3\ |\ S_{10}) \\
    &= 0.
\end{align*}
By a similar reasoning, it follows that $Q(J, \Pi)_{01} = 0$. Finally, given $S_{00}$, the random variables $X_1$ and $X_2$ are distributed as given $S_{10}$, however now $Y_3$ has the same distribution as $(Y\,|\, X = 0, Y > 0)$. \cite{Denuit2017bounds} argued that $G(\{Y\ |\ X > 0, Y > 0\}) \geq 1 - p_1$ and $G\{(Y\ |\ X = 0, Y > 0\}) < 1 - p_1$ and so $\pi_{3, J} = 0$. It follows that
\begin{align*}
    Q(J, \Pi)_{00} &= P\bigl((X_1 - X_2)(Y_1 - Y_3) > 0\ |\ S_{00}\bigr) - P\bigl((X_1 - X_2)(Y_1 - Y_3) < 0\ |\ S_{00}\bigr) \\
    &= P(X_1 > X_2, Y_1 > Y_3\ |\ S_{00}) + P(X_1 < X_2, Y_1 < Y_3\ |\ S_{00}) \\
    &\quad - P(X_1 > X_2, Y_1 < Y_3\ |\ S_{00}) - P(X_1 < X_2, Y_1 > Y_3\ |\ S_{00}) \\
    &= 1.
\end{align*}
Now substituting $3 Q(J, \Pi)_{11} = -1$ and the rest yields
\begin{align*}
    \rho_{\min} &= - (1 - p_1 - p_2)^3 + 3 (1 - p_1 - p_2) p_1 p_2 - 3 p_1 p_2 - 3 (1 - p_1 - p_2) (p_1 + p_2) \\
    &= \bigl(p_1^3 + p_2^3 - 3 p_1^2 + 3 p_1^2 p_2 + 3 p_1 - 6 p_1 p_2 - 3 p_2^2 + 3 p_2 + 3 p_1 p_2^2 - 1\bigr) \\
    &\qquad + \bigl(3 p_1 p_2 -3 p_1^2 p_2 - 3 p_1 p_2^2\bigr) - 3 p_1 p_2 - 3 p_1 - 3 p_2 + 3 p_1^2 + 3 p_2^2 + 6 p_1 p_2 \\
    &= p_1^3 + p_2^3 - 1.
\end{align*}
Now consider the case where $p_1 + p_2 > 1$. Then, we have $p_{11} = 0$, $p_{10} = 1 - p_1$, $p_{01} = 1 - p_2$ and $p_{00} = p_1 + p_2 - 1$, and so
\begin{align*}
    \rho_{\min} &= - 3 (1 - p_1)(1 - p_2).
\end{align*}

\section*{Supplementary material: Alternative proofs of the results} \label{sup}% Use section* to prevent numbering
\renewcommand{\thesubsection}{S.\arabic{subsection}} % Customize subsection numbering

\section*{Summary}
In the following, we present alternative proofs of Theorems 3.4 and Theorem 3.5. Specifically, in Section~\ref{sup:mixed}, we present another method of computing $Q(M, W)$, which leads to the main result of Theorem 3.4. Sections~\ref{sup:spm_rho1} and~\ref{sup:spm_rho2} are devoted to alternative proofs of the upper and lower bounds on Spearman's rho. These are based on the general expression of the concordance function from \cite{Mesfioui2005properties} and the general formulation of the bounds by \cite{Mesfioui2022rho} respectively.

\subsection{Alternative proof of Theorem 3.4 }\label{sup:mixed}
This section explores an alternative method to compute $Q(M, W)$, the remainder of the proof of Theorem 3.1 is the same as in the Appendix of the main manuscript. Following the results from ~\cite{Mesfioui2005properties}, we get
		\begin{align}\label{E1}
			Q(M, W)&= \EE\left[\min\{F(X_\star),G(Y_\star)\}\right] +\EE\left[\min\{F(X^-_\star),G(Y^-_\star)\}\right]  \nonumber \\
			&\quad + \EE\left[\min\{F(X^-_\star),G(Y_\star)\}\right] +\EE\left[\min\{F(X_\star),G(Y^-_\star)\}\right] -1,
		\end{align}
		where $X_\star$ and $Y_\star$ have joint distribution function $W$ and are distributed as $X$ and $Y$, respectively. We will compute each of these terms separately in several case distinctions involving $p_1$ and $p_2$. 
		\begin{itemize}
			\item In a first time, let us consider the cases where $p_1+p_2\ge 1$. We have
			\begin{equation*}
				p_{00}=p_1+p_2-1,\quad p_{01}=1-p_2, \quad p_{10}=1-p_1,\quad p_{11}=0.
			\end{equation*}
			Starting with the easy computing one,
			\begin{align}\label{E2}
				\EE\left[\min\{F(X^-_\star),G(Y^-_\star)\}\right]&=\EE\left[\min\{F(X^-_\star),G(Y^-_\star)\}| X_\star>0,Y_\star>0\right]p_{11} \nonumber  \\
				&=0. 
			\end{align}
			Similarly,  $\EE\left[\min\{F(X^-_\star),G(Y_\star)\}\right]$ is computed by conditioning upon $[X_\star>0,Y_\star=0]$ or $[X_\star>0,Y_\star>0]$. It follows that
			\begin{align}\label{E3}
				\EE\left[\min\{F(X^-_\star),G(Y_\star)\}\right]&=\EE\left[\min\{U,p_2\}|U>p_1\right](1-p_1)+0. 
			\end{align}
			Likewise, $\EE\left[\min\{F(X_\star),G(Y^-_\star)\}\right]$ is computed by conditioning upon $[X_\star=0,Y_\star>0]$ or $[X_\star>0,Y_\star>0]$, so that
			\begin{align} \label{E4}
				\EE\left[\min\{F(X_\star),G(Y^-_\star)\}\right]&=\EE\left[\min\{p_1, U\}|U>p_2\right](1-p_2)+0.
			\end{align}
			Finally, the last term $\EE\left[\min\{F(X_\star),G(Y_\star)\}\right]$ is calculated under the cases $[X_\star=0,Y_\star=0]$ or $[X_\star=0,Y_\star>0]$ or $[X_\star>0,Y_\star=0]$ or $[X_\star>0,Y_\star>0]$, which leads to 
			\begin{align} \label{E5}
				\EE\left[\min\{F(X_\star),G(Y_\star)\}\right]&= \min\{p_1,p_2\}(p_1+p_2-1)  \nonumber \\
				&\quad+\EE\left[\min\{p_1,U\}|U>p_2\right](1-p_2)  \nonumber \\
				&\quad+ \EE\left[\min\{U,p_2\}|U>p_1\right](1-p_1)+0.
			\end{align}
Further, one has
			\begin{align*}
				\EE\left[\min\{p_1,U\}|U>p_2\right](1-p_2)=\begin{cases}
					p_1(1-p_2),\quad & p_1\le p_2,\\
					\frac{2p_1-p_1^2-p_2^2}{2},\quad &p_1 > p_2,
				\end{cases}
			\end{align*}
			and
			\begin{align*}
				\EE\left[\min\{U,p_2\}|U>p_1\right](1-p_1)=\begin{cases}
					\frac{2p_2-p_2^2-p_1^2}{2} ,\quad & p_1\le p_2,\\
					p_2(1-p_1),\quad &p_1 > p_2.
				\end{cases}
			\end{align*}
			
			Consequently, 
			\begin{equation*}
				Q(M, W)=
				-p_1p_2+\min\{p_1, p_2\}-\max\{p_1^2, p_2^2\}+2\max\{p_1, p_2\}-1
			\end{equation*}
			and
			\item  In a second time, let us consider the cases where $p_1+p_2 < 1$. Hence, 
			\begin{equation*}
				p_{00}=0,\quad p_{01}=p_1, \quad p_{10}=p_2,\quad p_{11}=1-p_1-p_2.
			\end{equation*}
			Starting with the easy computing one,
			\begin{align*}
				\EE\left[\min\{F(X^-_\star),G(Y^-_\star)\}\right]&=\EE\left[\min\{F(X^-_\star),G(Y^-_\star)\}| X_\star>0,Y_\star>0\right]p_{11} \\
				&=\EE\left[\min\{U,1-U\}|p_1 < U < 1-p_2\right](1-p_1-p_2).
			\end{align*}
			Similarly,  $\EE\left[\min\{F(X^-_\star),G(Y_\star)\}\right]$ is computed by conditioning upon $[X_\star>0,Y_\star=0]$ or $[X_\star>0,Y_\star>0]$. It follows that
			\begin{align}\label{E3}
				\EE\left[\min\{F(X^-_\star),G(Y_\star)\}\right]&=\EE\left[\min\{U,p_2\}|U>1-p_2\right]p_2\nonumber\\
				&\quad +\EE\left[\min\{U,1-U\}|p_1 < U < 1-p_2\right](1-p_1-p_2). 
			\end{align}
			Likewise, $\EE\left[\min\{F(X_\star),G(Y^-_\star)\}\right]$ is computed by conditioning upon $[X_\star=0,Y_\star>0]$ or $[X_\star>0,Y_\star>0]$, so that
			\begin{align} \label{E4}
				\EE\left[\min\{F(X_\star),G(Y^-_\star)\}\right]&=\EE\left[\min\{p_1,1-U\}|U\le p_1\right]p_1\nonumber \\
				&\quad + \EE\left[\min\{U,1-U\}|p_1 < U < 1-p_2\right](1-p_1-p_2). 
			\end{align}
			Finally, the last term $\EE\left[\min\{F(X_\star),G(Y_\star)\}\right]$ is calculated under the cases $[X_\star=0,Y_\star=0]$ or $[X_\star=0,Y_\star>0]$ or $[X_\star>0,Y_\star=0]$ or $[X_\star>0,Y_\star>0]$, which leads to 
			\begin{align} \label{E5}
				\EE\left[\min\{F(X_\star),G(Y_\star)\}\right]&=\EE\left[\min\{p_1,1-U\}|U\le p_1\right]p_1  \nonumber \\
				&\quad+ \EE\left[\min\{U,p_2\}|U>1-p_2\right]p_2\nonumber \\
				&\quad +\EE\left[\min\{U,1-U\}|p_1 < U < 1-p_2\right](1-p_1-p_2). 
			\end{align}
	Further, one has
			\begin{align*}
				\EE\left[\min\{p_1,1-U\}|U\le p_1\right]p_1=\begin{cases}
					p_1(1-p_1)+(2p_1-1)-\frac{p_1^2-(1-p_1)^2}{2},\quad & p_1 \ge 1/2,\\
					p_1^2, \quad & p_1< 1/2,
				\end{cases}
			\end{align*}
			and
			\begin{align*}
				\EE\left[\min\{U,p_2\}|U>1-p_2\right]p_2=\begin{cases}
					\frac{p_2^2-(1-p_2)^2}{2}+p_2(1-p_2) ,\quad & p_2\ge 1/2,\\
					p_2^2,\quad & p_2 < 1/2.
				\end{cases}
			\end{align*}
			Moreover,
			\begin{align*}
				\EE\left[\min\{U,1-U\}|p_1 < U < 1-p_2\right](1-p_1-p_2)= \begin{cases}
					\frac{(1-p_2)^2-p_1^2}{2}, \quad & p_2\ge 1/2,\\
					\frac{3-2(p_1^2+1+p_2^2)}{4},\quad & p_2 < 1/2, p_1 < 1/2,\\
					\frac{(1-p_1)^2-p_2^2}{2}, \quad & p_2 < 1/2, p_1 \ge 1/2.
				\end{cases} 
			\end{align*}
			Now since $p_1 + p_2 \leq 1$, the combination $p_1, p_2 \geq 1/2$ is not possible, and consequently, $Q(M, W) = 0$. \qed
		\end{itemize}

\subsection{Alternative proof of Theorem 3.5}\label{sup:spm_rho1}
Here, we present another proof of Theorem 3.5 revolving around the general expression of the concordance function as noted by \cite{Mesfioui2005properties}. Specifically, we get
\begin{align*}
			Q(M, \Pi)&= \EE\left[\min\{F(X_\star),G(Y_\star)\}\right] +\EE\left[\min\{F(X^-_\star),G(Y^-_\star)\}\right]  \nonumber \\
			&\quad + \EE\left[\min\{F(X^-_\star),G(Y_\star)\}\right] +\EE\left[\min\{F(X_\star),G(Y^-_\star)\}\right] -1,
\end{align*}
where $X_\star$ and $Y_\star$ are independent and distributed as $X$ and $Y$, respectively.  Starting with the easy computing one:
\begin{eqnarray}\label{eq:2025-06-27, 10:38AM}
\EE\left[\min\{F(X^-_\star),G(Y^-_\star)\}\right]&=&\EE\left[\min\{F(X^-_\star),G(Y^-_\star)\}| X_\star>0,Y_\star>0\right](1-p_1)(1-p_2) \nonumber  \\
&=&\EE\left[\min\{p_1+(1-p_1)U,p_2+(1-p_2)V\}\right](1-p_1)(1-p_2), 
\end{eqnarray}
where the random $U$ and $V$ are independent and follow uniform distribution on [0, 1]. Similarly,  $\EE\left[\min\{F(X^-_\star),G(Y^-_\star)\}\right]$ is computed by conditioning upon $[X_\star>0,Y_\star=0]$ or $[X_\star>0,Y_\star>0]$. It follows that:
\begin{eqnarray*}
\EE\left[\min\{F(X^-_\star),G(Y_\star)\}\right]&=&\EE\left[\min\{p_1+(1-p_1)U,p_2\}\right](1-p_1)p_2 \nonumber \\
&&+\EE\left[\min\{p_1+(1-p_1)U,p_2+(1-p_2)V\}\right](1-p_1)(1-p_2). 
\end{eqnarray*}
Likewise, $\EE\left[\min\{F(X^-_\star),G(Y^-_\star)\}\right]$ is computing by conditioning upon $[X_\star=0,Y_\star>0]$ or $[X_\star>0,Y_\star>0]$. So that,
\begin{eqnarray} \label{eq:2025-06-27, 10:41AM}
\EE\left[\min\{F(X_\star),G(Y^-_\star)\}\right]&=&\EE\left[\min\{p_1,p_2+(1-p_2)V\}\right]p_1(1-p_2)  \nonumber \\
&&+\EE\left[\min\{p_1+(1-p_1)U,p_2+(1-p_2)V\}\right](1-p_1)(1-p_2).  
\end{eqnarray}
Finally, the last term $\EE\left[\min\{F(X_\star),G(Y_\star)\}\right]$ is computing under the fourth cases $[X_\star=0,Y_\star=0]$ or $[X_\star=0,Y_\star>0]$ or $[X_\star>0,Y_\star=0]$ or $[X_\star>0,Y_\star>0]$, which leads to 
\begin{eqnarray} \label{eq:2025-06-27, 10:40AM}
\EE\left[\min\{F(X_\star),G(Y_\star)\}\right]&=& \min\{p_1,p_2\}p_1p_2  \nonumber \\
&&+\EE\left[\min\{p_1,p_2+(1-p_2)V\}\right]p_1(1-p_2)  \nonumber \\
&&+ \EE\left[\min\{p_1+(1-p_1)U,p_2\}\right](1-p_1)p_2  \nonumber \\
&&+\EE\left[\min\{p_1+(1-p_1)U,p_2+(1-p_2)V\}\right](1-p_1)(1-p_2). 
\end{eqnarray}
Futrher, one has 
\begin{eqnarray*}
\EE\left[\min\{p_1,p_2+(1-p_2)V\}\right]p_1(1-p_2)=\begin{cases}
	p_1^2(1-p_2),\quad & p_1\le p_2,\\
\frac{2p_1^2-p_1^3-p_1p_2^2}{2},\quad &p_1 > p_2.
\end{cases}
\end{eqnarray*}
\begin{eqnarray*}
\EE\left[\min\{p_1+(1-p_1)U,p_2\}\right](1-p_1)p_2=\begin{cases}
		\frac{2p_2^2-p_2^3-p_1^2p_2}{2} ,\quad & p_1\le p_2,\\
p_2^2(1-p_1),\quad &p_1 > p_2.
\end{cases}
\end{eqnarray*}
\begin{align*}
&\EE\left[\min\{p_1+(1-p_1)U,p_2+(1-p_2)V\}\right](1-p_1)(1-p_2)\\
&=\begin{cases}
p_1(1-p_1)(1-p_2) + (1-p_2)\left[(p_2-p_1)-\frac{p_2^2-p_1^2}{2}+\frac{(1-p_2)^2}{3}\right], \quad  &p_1\le p_2,\\[10pt] 
p_2(1-p_1)(1-p_2) + (1-p_1)\left[(p_1-p_2)-\frac{p_1^2-p_2^2}{2}+\frac{(1-p_1)^2}{3}\right], \quad  &p_1> p_2,
\end{cases}
\end{align*}
Inserting the above results into equations \eqref{eq:2025-06-27, 10:38AM}-\eqref{eq:2025-06-27, 10:40AM}. We obtain that
\begin{equation*}
	\rho_{\max}= 3Q(M, \Pi)= 1- \max\{p_1^3, p_2^3\}.
\end{equation*}
Similarly, for the lower bound, one has 
\begin{align}\label{eq:2025-01-13, 6:41PM}
	Q(W, \Pi)&=\EE\left[\max\{F(X_\star)+G(Y_\star)-1, 0\}\right] +\EE\left[\max\{F(X^-_\star)+G(Y^-_\star)-1, 0\}\right] \nonumber \\
	&\quad + \EE\left[\max\{F(X^-_\star)+G(Y_\star)-1, 0\}\right] +\EE\left[\max\{F(X_\star)+G(Y^-_\star)-1, 0\}\right]-1.
\end{align}
We obtain that, for the four terms,
\begin{align*}
&\EE\left[\max\{F(X_\star)+G(Y_\star)-1, 0\}\right]\\
 &\quad  =  \max\{p_1+p_2-1, 0\}p_1p_2\\
&\qquad +\EE\left[ \max\{p_1+p_2+(1-p_2)V-1, 0\}\right]p_1(1-p_2)\\
&\qquad +\EE\left[\max\{p_1+(1-p_1)U+p_2-1, 0\}\right](1-p_1)p_2\\
&\qquad +\EE\left[\max\{p_1+(1-p_1)U+p_2(1-p_2)V-1, 0\}\right](1-p_1)(1-p_2).
\end{align*}
And
\begin{align*}
& \EE\left[\max\{F(X^-_\star)+G(Y^-_\star)-1, 0\}\right]\\
 &\quad =\EE\left[\max\{p_1+(1-p_1)U+p_2+(1-p_2)V-1, 0\}\right](1-p_1)(1-p_2).
\end{align*}
And
\begin{align*}
\EE\left[\max\{F(X^-_\star)+G(Y_\star)-1, 0\}\right]&  = \EE\left[\max\{p_1+(1-p_1)U+p_2-1, 0\}\right] (1-p_1)p_2\\
	& \quad + \EE\left[\max\{p_1+(1-p_1)U+p_2+(1-p_2)V-1, 0\}\right](1-p_1)(1-p_2).
\end{align*}
And
\begin{equation*}\begin{split}
	\EE\left[\max\{F(X_\star)+G(Y^-_\star)-1, 0\}\right]  &= \EE\left[\max\{p_1+p_2+(1-p_2)V-1, 0\}\right] p_1(1-p_2)\\
	& \quad + \EE\left[\max\{p_1+(1-p_1)U+p_2+(1-p_2)V-1, 0\}\right](1-p_1)(1-p_2).
\end{split}\end{equation*}
Further, one has 
\begin{align*}
&\EE\left[ \max\{p_1+p_2+(1-p_2)V-1, 0\}\right]p_1(1-p_2)\\
 &\quad = \begin{cases}
 	\frac{p_1^3}{2}, \quad &p_1+p_2\le 1,\\
 	\frac{p_1(1-p_2)(2p_1+p_2-1)}{2},\quad &p_1+p_2> 1
 \end{cases}
\end{align*}

\begin{align*}
&\EE\left[\max\{p_1+(1-p_1)U+p_2-1, 0\}\right] (1-p_1)p_2 \\
 &\quad = \begin{cases}
 	\frac{p_2^3}{2}, \quad &p_1+p_2\le 1,\\
 	\frac{(1-p_1)p_2(p_1+2p_2-1)}{2},\quad &p_1+p_2> 1
 \end{cases}
\end{align*}

\begin{align*}
& \EE\left[\max\{p_1+(1-p_1)U+p_2+(1-p_2)V-1, 0\}\right](1-p_1)(1-p_2) \\
 &\quad =  \begin{cases}
 	\frac{(p_1+p_2)(1-p_1)(1-p_2)}{2}+\frac{(1-p_1-p_2)^3}{6}, \quad &p_1+p_2\le 1,\\
 	\frac{(1-p_1)(1-p_2)(p_1+p_2)}{2},\quad &p_1+p_2> 1
 \end{cases}
\end{align*}
Finally, we have
\begin{equation}\label{eq:2025-01-14, 9:40AM}
	\rho_{\min}= 3Q(W, \Pi)=\begin{cases}
		p_1^3+p_2^3-1,\quad &p_1+p_2 \le 1,\\
		-3(1-p_1)(1-p_2), \quad &p_1+p_2> 1.
	\end{cases}
\end{equation}
This completes the proof.\qed

\subsection{Another alternative proof of Theorem 3.5}\label{sup:spm_rho2}
In this section, we present another proof of Theorem 3.5. According to \cite{Mesfioui2022rhodisc}, for two discrete random variables $X$ and $Y$ with distribution functions $F$ and $G$ respectively, the upper bound on Spearman's rho can be given by
\begin{align}\label{eq:spm_disc_max}
    \rho_{\max} = 9 &- 3 \bigl( \mathbb E\bigl[G(Y) \left(F(\phi(Y)) + F(\phi(Y)-)\right)\bigr] + \mathbb E\bigl[G(Y-) \left(F(\phi(Y-)) + F(\phi(Y-)-)\right)\bigr]\bigr) \notag\\
    & - 3 \bigl( \mathbb E\bigl[F(X) \left(G(\psi(X)) + G(\psi(X)-)\right)\bigr] + \mathbb E \bigl[F(X-) \left(G(\psi(X-)) + F(\psi(X-)-)\right)\bigr]\bigr).
\end{align}
Here, we define the functions
\begin{align*}
    \phi(y) = \min\{x \geq 0\,:\, F(x) \geq G(y)\}, \quad y \geq 0, \\
    \psi(x) = \min\{y \geq 0\,:\, G(y) > F(x)\}, \quad x \geq 0,
\end{align*}
where $\psi(x) = +\infty$ if $F(x) = 1$. Without loss of generality, assume that $p_1 \leq p_2$. We first compute $\phi(y)$ for any $y \geq 0$. Substituting the distribution functions of $X$ and $Y$ for $y \geq 0$, this becomes
\begin{align*}
    \phi(y) &= \min\{x \geq 0\,:\, p_1 + (1 - p_1) \tilde F(x) \geq p_2 + (1 - p_2) \tilde G(y)\} \\
    &= \min\{x \geq 0\,:\, \tilde F(x) \geq \frac{1}{1 - p_1} \left(p_2 + (1 - p_2) \tilde G(y)\right)\} \\
    &= \tilde F^{-1}\left(\frac{1}{1 - p_1} \left(p_2 - p_1 + (1 - p_2) \tilde G(y)\right)\right).
\end{align*}
The latter follows from $p_2 \geq p_1$ and $\tilde F$ being continuous. We can now evaluate $F(\phi(y))$ as
\begin{align*}
    F(\phi(y)) &= p_1 + (1 - p_1) \tilde F\left( \tilde F^{-1} \left(\frac{1}{1 - p_1} \left(p_2 - p_1 + (1 - p_2) \tilde G(y)\right)\right)\right) \\
    &= p_1 + p_2 - p_1 + (1 - p_2) \tilde G(y) \\
    &= G(y).
\end{align*}
Therefore, it now follows that
\begin{align*}
    &\mathbb E[G(Y) F(\phi(Y))] = \mathbb E[G^2(Y)] \\
    &\qquad= p_2 G^2(0) + (1 - p_2) \int_0^\infty G^2(y) \diff \tilde G(y) \\
    &\qquad= p_2^3 + (1 - p_2) \int_0^\infty (p_2 + (1 - p_2) \tilde G(y))^2 \diff \tilde G(y) \\
    &\qquad= p_2^3 + (1 - p_2) p_2^2 \int_0^\infty 1 \diff \tilde G(y) + 2 (1 - p_2) p_2 \int_0^\infty \tilde G(y) \diff \tilde G(y) + (1 - p_2)^3 \int_0^\infty \tilde G^2(y) \diff \tilde G(y) \\
    &\qquad= p_2^3 + (1 - p_2) p_2^2 + (1 - p_2) p_2 + \frac{1}{3} (1 - p_2)^3 \\
    &\qquad= \frac{1}{3} + \frac{2}{3} p_2^3 
\end{align*}
Following the same derivations, we find that
\begin{align*}
    &\mathbb E[G(Y-) F(\phi(Y-))] = (1 - p_2) \int_0^\infty G(y-) F(\phi(y-)) \diff \tilde G(y) \\
    &\qquad= (1 - p_2) \int_0^\infty G^2(y) \diff \tilde G(y) \\
    &\qquad= (1 - p_2) p_2 + \frac{1}{3} (1 - p_2)^3 \\
    &\qquad= \frac{1}{3} - \frac{1}{3} p_2^3.
\end{align*}
Since $\phi(y-) = \phi(y)$ and $F(\phi(y)-) = F(\phi(y))$ for all $y \geq 0$, we now find that $\mathbb E[G(Y) F(\phi(Y-))] = \mathbb E[G(Y) F(\phi(Y))]$ and $\mathbb E[G(Y-) F(\phi(Y-)-)] = \mathbb E[G(Y-) F(\phi(Y-))]$.

We proceed similarly to compute $\mathbb E[F(X) G(\psi(X))]$ and the remaining terms. First, we derive the values of $\psi(x)$ by
\begin{align*}
    \psi(x) &= \min\{y \geq 0\,:\, G(y) \geq F(x)\} \\
    &= \min\{y \geq 0\,:\, p_2 + (1 - p_2) \tilde G(y) > p_1 + (1 - p_1) \tilde F(x)\} \\
    &= \min\bigl\{y \geq 0\,:\, \tilde G(y) > \frac{1}{1 - p_2} \left(p_1 - p_2 + (1 - p_1) \tilde F(x)\right)\bigr\}.
\end{align*}
Now let $\tilde x' \geq 0$ such that $p_1 - p_2 + (1 - p_1) \tilde F(x') = 0$, then
\begin{align*}
    \psi(x) &= \begin{cases}
        0 & \text{if } x \leq \tilde x', \\
        \tilde G^{-1}\left(\frac{1}{1 - p_2} \left(p_1 - p_2 + (1 - p_1) \tilde F(x)\right)\right) & \text{if } x > \tilde x.
    \end{cases}
\end{align*}
Based on this expression, we first consider the simplification
\begin{align*}
    p_2 + (1 - p_2) \tilde G\left(\frac{1}{1 - p-2} \left(p_1 - p_2 + (1 - p_1) \tilde F(x)\right)\right) = p_1 + (1 - p_1) \tilde F(x) = F(x)
\end{align*}
such that
\begin{align*}
    G(\psi(x)) = \begin{cases}
        p_2 & \text{if } x \leq \tilde x', \\
        F(x) & \text{if } x > \tilde x'
    \end{cases} \quad \text{and} \quad G(\psi(x)-) = \begin{cases}
        0 & \text{if } x \leq \tilde x', \\
        F(x) & \text{if } x > \tilde x'.
    \end{cases}
\end{align*}
Let us first compute the following integrals. Using $(1 - p_1)\tilde F(x') = p_2 - p_1$, one may find that
\begin{align*}
    &(1 - p_1) \int_0^{x'} F(x) \diff \tilde F(x) \\
    &\qquad= (1 - p_1) p_1 \int_0^{x'} 1 \diff \tilde F(x) + (1 - p_1)^2 \int_0^{x'} \tilde F(x) \diff \tilde F(x) \\
    &\qquad= (1 - p_1) p_1 \tilde F(x') + \frac{1}{2} (1 - p_1)^2 \tilde F^2(x') \\
    &\qquad= p_1 (p_2 - p_1) + \frac{1}{2} (p_2 - p_1)^2, \\
    %----------------------------------------
    &(1 - p_1) \int_{x'}^\infty F^2(x) \diff \tilde F(x) \\
    &\qquad= (1 - p_1) p_1^2 \int_{x'}^\infty 1 \diff \tilde F(x) + 2 (1 - p_1)^2 p_1 \int_{x'}^\infty \tilde F(x) \diff \tilde F(x) + (1 - p_1)^3 \int_{x'}^\infty \tilde F^2(x) \diff \tilde F(x) \\
    &\qquad= (1 - p_1) p_1^2 (1 - \tilde F(x')) + (1 - p_1)^2 p_1 (1 - \tilde F^2(x')) + \frac{1}{3} (1 - p_1)^3 (1 - \tilde F^3(x')) \\
    &\qquad= p_1^2 \bigl((1 - p_1) - (p_2 - p_1)\bigr) + p_1 \bigl((1 - p_1)^2 - (p_2 - p_1)^2\bigr) + \frac{1}{3} \bigl((1 - p_1)^3 - (p_2 - p_1)^3\bigr) \\
    &\qquad= \frac{1}{3} - \frac{1}{3} p_2^3.
\end{align*}
It now follows that
\begin{align*}
    &\mathbb E[F(X) G(\psi(X))] \\
    &\qquad= p_1^2 p_2 + (1 - p_1) \int_0^{x'} F(x) p_2 \diff \tilde F(x) + (1 - p_1) \int_{x'}^\infty F^2(x) \diff \tilde F(x) \\
    &\qquad= p_1^2 p_2 + p_1 p_2 (p_2 - p_1) + \frac{1}{2} p_2 (p_2 - p_1)^2 + \frac{1}{3} - \frac{1}{3} p_2^3 \\
    &\qquad= \frac{1}{3} + \frac{1}{2} p_1^2 p_2 + \frac{1}{6} p_2^3, \\
    &\mathbb E[F(X) G(\psi(X)-)] = (1 - p_1) \int_{x'}^\infty F^2(x) \diff \tilde F(x) \\
    &\qquad= (1 - p_1) \int_{x'}^\infty \left(p_1 + (1 - p_1) \tilde F(x)\right)^2 \diff \tilde F(x) \\
    &\qquad= \frac{1}{3} - \frac{1}{3} p_2^3.
\end{align*}
Since $\phi(x-) = \phi(x)$ and $F(0-) = 0$, it follows that
\begin{align*}
    &\mathbb E[F(X-) G(\psi(X-))] \\
    &\qquad= (1 - p_1) \int_0^{x'} F(x) p_2 \diff \tilde F(x) + (1 - p_1) \int_{x'}^\infty F^2(x) \diff \tilde F(x) \\
    &\qquad= p_1 p_2 (p_2 - p_1) + \frac{1}{2} p_2 (p_2 - p_1)^2 + \frac{1}{3} - \frac{1}{3} p_2^3.
\end{align*}
Now since $\psi(x-) = \psi(x) = 0$ and so $G(\psi(x-)-) = 0$ for $x \leq x'$, we find that
\begin{align*}
    \mathbb E[F(X-) G(\psi(X-)-)] = (1 - p_1) \int_{x'}^\infty F^2(x) \diff \tilde F(x) = \frac{1}{3} - \frac{1}{3} p_2^3.
\end{align*}
We combine all of these results and insert them into Eq.~\eqref{eq:spm_disc_max} to conclude
\begin{align*}
    \rho_{\max} &= 1 - p_2^3.
\end{align*}

We now turn to the lower bound on $\rho$. According to \cite{Mesfioui2022rhodisc}, this can be given by
\begin{align}\label{eq:spm_disc_min}
    \rho_{\min} &= 3 \bigl( \mathbb E\bigl[G(Y) \left(\bar F(\tilde \phi(Y)) + \bar F(\tilde \phi(Y)-)\right)\bigr] + \mathbb E\bigl[G(Y-) \left(\bar F(\tilde \phi(Y-)) + \bar F(\tilde \phi(Y-)-)\right)\bigr] \bigr) \notag\\
    &\qquad - 3 \bigl( \mathbb E\bigl[\bar F(X) \left(\bar G(\tilde \psi(X)) + \bar G(\tilde \psi(X)-)\right)\bigr] + \mathbb E\bigl[\bar F(X-) \left(\bar G(\tilde \psi(X-)) + \bar G(\tilde \psi(X-)-)\right)\bigr] \bigr) \\
    &\qquad + 3 \bigl( \mathbb E\bigl[\bar G(\tilde \psi(X-)-)] - \mathbb E[\bar F(\tilde \psi(Y))] - 1 \bigr), \notag
\end{align}
where we define the functions
\begin{align*}
    \tilde \phi(y) = \min\{x \geq 0\,:\, F(x) \geq \bar G(y)\}, \quad y \geq 0, \\
    \tilde \psi(x) = \min\{y \geq 0\,:\, G(y) \geq \bar F(x)\}, \quad x \geq 0,
\end{align*}
and $\bar F(x) = 1 - F(x)$ and $\bar G(y) = 1 - G(y)$ for all $x, y \geq 0$. We make a case distinction, and first consider the case where $p_1 + p_2 > 1$. Then, for any $x, y \geq 0$, we have $(1 - p_1) \tilde F(x) \geq 1 - p_1 - p_2 - (1 - p_2) \tilde G(y)$ and so
\begin{align*}
    \tilde \phi(y) &= \min\{x \geq 0\,:\, F(x) \geq 1 - G(y)\} \\
    &= \min\{x \geq 0\,:\, (1 - p_1) \tilde F(x) \geq 1 - p_1 - p_2 - (1 - p_2) \tilde G(y)\} \\
    &= 0,
\end{align*}
for any $y \geq 0$, and therefore $\tilde \phi(y-) = \tilde \phi(y)$ for $y > 0$ as well. Besides, $G(0-) = 0$ and hence $\tilde \phi(0-) = \infty$. Now, recalling that $\bar F(x) = 1 - F(x)$, we have $\bar F(\tilde \phi(y)) = (1 - p_1) \mathbbm 1(y \geq 0)$ and $\bar F(\tilde \phi(y)-) = \mathbbm 1(y \geq 0)$, where $\mathbbm 1$ is the indicator function. Before computing the terms in Eq.~\eqref{eq:spm_disc_min} involving $\tilde \phi$, we first evaluate the integral
\begin{align*}
    \int_0^\infty G(y) \diff \tilde G(y) &= \int_0^\infty p_2 + (1 - p_2) \tilde G(y) \diff \tilde G(y) \\
    &= p_2 + (1 - p_2) \int_0^\infty \tilde G(y) \diff \tilde G(y) \\
    &= p_2 + \frac{1}{2} (1 - p_2) \\
    &= \frac{1}{2} (1 + p_2).
\end{align*}
This can be used to derive expressions for the terms involving $\tilde \phi$ as
\begin{align*}
    &\mathbb E[G(Y) (\bar F(\tilde \phi(Y)) + \bar F(\tilde \phi(Y)-))] \\
    &\qquad= p_2 G(0) \bigl(\bar F(\tilde \phi(0)) + \bar F(\tilde \phi(0)-)\bigr) + (1 - p_2) \int_0^\infty G(y) \bigl(\bar F(\tilde \phi(y)) + \bar F(\tilde \phi(y)-)\bigr) \diff \tilde G(y) \\
    &\qquad= p_2^2 (1 - p_1 + 1) + (1 - p_2) \int_0^\infty G(y) (1 - p_1 + 1) \diff \tilde G(y) \\
    &\qquad= p_2^2 (2 - p_1) + \frac{1}{2} (1 - p_2) (2 - p_1) (1 + p_2), \\
    %------------------------------
    &\mathbb E[G(Y-) (\bar F(\tilde \phi(Y-)) + \bar F(\tilde \phi(Y-)-))] \\
    &\qquad= p_2 G(0-) \bigl(\bar F(\tilde \phi(0-)) + \bar F(\tilde \phi(0-)-)\bigr) + (1 - p_2) \int_0^\infty G(y-) \bigl(\bar F(\tilde \phi(y-)) + \bar F(\tilde \phi(y-)-)\bigr) \diff \tilde G(y) \\
    &\qquad= (1 - p_2) \int_0^\infty G(y) (1 - p_1 + 1) \diff \tilde G(y) \\
    &\qquad= \frac{1}{2} (1 - p_2) (2 - p_1) (1 + p_2), \\
    %------------------------------
    &\mathbb E[\bar F(\tilde \phi(Y))] \\
    &\qquad= p_2 \bar F(\tilde \phi(0)) + (1 - p_2) \int_0^\infty \bar F(\tilde \phi(y)) \diff \tilde G(y) \\
    &\qquad= p_2 (1 - p_1) + (1 - p_2) (1 - p_1) \\
    &\qquad= 1 - p_1.
\end{align*}
By the same argument, it follows that $\tilde \psi(x) = 0$ for $x \geq 0$ and $\tilde \psi(x) = \infty$ for $x < 0$. Therefore, $\bar G(\tilde \psi(x)) = (1 - p_2) \mathbbm 1(x \geq 0)$ and $\bar G(\tilde \psi(x)-) = \mathbbm 1(x \geq 0)$. Following the same computations as before, we first find that
\begin{align*}
    \int_0^\infty \bar F(x) \diff \tilde F(x) &= \int_0^\infty (1 - F(x)) \diff \tilde F(x) \\
    &= 1 - \frac{1}{2} (1 + p_1) \\
    &= \frac{1}{2} (1 - p_1).
\end{align*}
Now we can evaluate the terms involving $\tilde \psi$ as
\begin{align*}
    &\mathbb E[\bar F(X) (\bar G(\tilde \psi(X)) + \bar G(\tilde \psi(X)-))] \\
    &\qquad= p_1 F(0) (\bar G(\tilde \psi(0)) + \bar G(\tilde \psi(0)-)) + (1 - p_1) \int_0^\infty \bar F(x) (\bar G(\tilde \psi(x)) + \bar G(\tilde \psi(x)-))\diff \tilde F(x) \\
    &\qquad= (1 - p_1) p_1 (2 - p_2) + \frac{1}{2} (1 - p_1)^2 (2 - p_2) \\
    &\qquad= \frac{1}{2} (1 - p_1)(2 - p_2)(1 + p_1), \\
    %------------------------------
    &\mathbb E[\bar F(X-) (\bar G(\tilde \psi(X-)) + \bar G(\tilde \psi(X-)-))] \\
    &\qquad= p_1 F(0-) (\bar G(\tilde \psi(0-)) + \bar G(\tilde \psi(0-)-)) + (1 - p_1) \int_0^\infty \bar F(x-) (\bar G(\tilde \psi(x-)) + \bar G(\tilde \psi(x-)-)) \diff \tilde F(x) \\
    &\qquad= (1 - p_1) \int_0^\infty \bar F(x) (2 - p_2) \diff \tilde F(x) \\
    &\qquad= \frac{1}{2} (1 - p_1)^2 (2 - p_2), \\
    %------------------------------
    &\mathbb E[\bar G(\tilde \psi(X-)-)] \\
    &\qquad= p_1 \bar G(\tilde \psi(0-)-) + (1 - p_1) \int_0^\infty \bar G(\tilde \psi(X-)-) \diff \tilde F(x) \\
    &\qquad= (1 - p_1) \int_0^\infty 1 \diff \tilde F(x) \\
    &\qquad= (1 - p_1).
\end{align*}
We insert all of these results into Eq.~\eqref{eq:spm_disc_min} to find that
\begin{align*}
    \rho_{\min} &= 3 \left[\left(p_2^2 (2 - p_1) + \frac{1}{2} (1 - p_2)(2 - p_1)(1 + p_2)\right) + \left(\frac{1}{2} (1 - p_2)(2 - p_1) (1 + p_2)\right)\right] \\
    &\qquad - 3 \left[\left(\frac{1}{2} (1 - p_1)(2 - p_2)(1 + p_1)\right) + \left(\frac{1}{2} (1 - p_1)^2(2 - p_2)\right)\right] \\
    &\qquad + 3 [(1 - p_1) - (1 - p_1) - 1\bigr] \\
    % &= - 3 (1 - p_1 - p_2 + p_1 p_2) \\
    &= -3 (1 - p_1)(1 - p_2).
\end{align*}
Now we consider the case where $p_1 + p_2 \leq 1$. As before, we will first compute $\tilde \phi(y)$ for any $y \geq 0$. Specifically, using the definition of zero-inflated distributed random variables, we have
\begin{align*}
    \tilde \phi(y) &= \min\{x \geq 0\,:\, F(x) \geq 1 - G(y)\} \\
    &= \min\{x \geq 0\,:\, (1 - p_1) \tilde F(x) \geq 1 - p_1 - p_2 - (1 - p_2) \tilde G(y)\}
\end{align*}
for $y \geq 0$. Now let $\tilde y \geq 0$ be the point such that $(1 - p_2) \tilde G(y) = 1 - p_1 - p_2$, then
\begin{align*}
    \tilde \phi(y) = \begin{cases}
        \infty & \text{if } y < 0, \\
        \tilde F^{-1}\bigl(\frac{1}{1 - p_1} (1 - p_1 - p_2 - (1 - p_2) \tilde G(y)) \bigr) & \text{if } 0 \leq y \leq \tilde y, \\
        0 & \text{if } y > \tilde y.
    \end{cases}
\end{align*}
We will simplify this expression and evaluate $\bar F$ at $\tilde \phi(y)$ by first rewriting
\begin{align*}
    F(\tilde \phi(y)) &= p_1 + (1 - p_1) \tilde F\bigl( \tilde F^{-1}\bigl(\tfrac{1}{1 - p_1} (1 - p_1 - p_2 - (1 - p_2) \tilde G(y)\bigr)\bigr) \\
    &= 1 - p_2 - (1 - p_2) \tilde G(y) \\
    &= \bar G(y),
\end{align*}
for $0 \leq y \leq \tilde y$, and hence
\begin{align*}
    \bar F(\tilde \phi(y)) &= \begin{cases}
        0 & \text{if } y < 0, \\
        G(y) & \text{if } 0 \leq y < \tilde y, \\
        1 - p_1 & \text{if } \tilde y \leq y,
    \end{cases}, \quad \text{and} \quad \bar F(\tilde \phi(y)-) = \begin{cases}
        0 & \text{if } y < 0, \\
        G(y) & \text{if } 0 \leq y < \tilde y, \\
        1 & \text{if } \tilde y \leq y.
    \end{cases}
\end{align*}
Moreover, by continuity of $\tilde F$ and $\tilde G$, it follows that
\begin{align*}
    \bar F(\tilde \phi(y-)) &= \begin{cases}
        0 & \text{if } y \leq 0, \\
        G(y) & \text{if } 0 < y < \tilde y, \\
        1 - p_1 & \text{if } \tilde y \leq y,
    \end{cases}, \quad \text{and} \quad \bar F(\tilde \phi(y-)-) = \begin{cases}
        0 & \text{if } y \leq 0, \\
        G(y) & \text{if } 0 < y < \tilde y, \\
        1 & \text{if } \tilde y \leq y.
    \end{cases}
\end{align*}
Before we compute the terms involving $\tilde \phi$, we first consider several integrals and expectations that we will frequently employ later. First note that
\begin{align*}
    \int_a^b \tilde G^d(y) \diff \tilde G(y) = \int_{\tilde G(a)}^{\tilde G(b)} u^d du = \frac{\tilde G^{d + 1}(b) - \tilde G^{d + 1}(a)}{d + 1},
\end{align*}
for all $a < b$ and $d \geq 1$. Now, as a consequence, we have
\begin{align*}
    &(1 - p_2) \int_0^{\tilde y} G^2(y) \diff \tilde G(y) \\
    &\qquad= (1 - p_2) \int_0^{\tilde y} (p_2 + (1 - p_2) \tilde G(y))^2 \diff \tilde G(y) \\
    &\qquad= (1 - p_2) p_2^2 \int_0^{\tilde y} 1 \diff \tilde G(y) + 2 (1 - p_2)^2 p_2 \int_0^{\tilde y} \tilde G(y) \diff \tilde G(y) + (1 - p_2)^3 \int_0^{\tilde y} \tilde G^2(y) \diff \tilde G(y) \\
    &\qquad= (1 - p_2) p_2^2 \tilde G(\tilde y) + (1 - p_2)^2 p_2 \tilde G^2(\tilde y) + \frac{1}{3} (1 - p_2)^3 \tilde G^2(\tilde y) \\
    &\qquad= (1 - p_1 - p_2) p_2^2 + (1 - p_1 - p_2)^2 p_2 + \frac{1}{3} (1 - p_1 - p_2)^3 \\
    &\qquad = \frac{1}{3} (1 - p_1^3 - p_2^3) + p_1^2 - p_1, \\
    %------------------------------
    &(1 - p_2) \int_{\tilde y}^\infty G(y) \diff \tilde G(y) \\
    &\qquad= (1 - p_2) \int_{\tilde y}^\infty p_2 + (1 - p_2) \tilde G(y) \diff \tilde G(y) \\
    &\qquad= (1 - p_2) p_2 \int_{\tilde y}^\infty 1 \diff \tilde G(y) + (1 - p_2)^2 \int_{\tilde y}^\infty \tilde G(y) \diff \tilde G(y) \\
    &\qquad= (1 - p_2) p_2 (1 - \tilde G(\tilde y)) + \frac{1}{2} (1 - p_2)^2 (1 - \tilde G^2(\tilde y)) \\
    &\qquad= (1 - p_2) p_2 - p_2 (1 - p_1 - p_2) + \frac{1}{2} (1 - p_2)^2 - \frac{1}{2} (1 - p_1 - p_2)^2 \\
    &\qquad= p_1 p_2 + \frac{1}{2} (1 - 2 p_2 + p_2^2) - \frac{1}{2} (1 - 2 p_1 - 2 p_2 + 2 p_1 p_2 + p_1^2 + p_2^2) \\
    &\qquad = p_1 - \frac{1}{2} p_1^2.
\end{align*}
Finally, combining these results yields
\begin{align*}
    &\mathbb E[G(Y) (\bar F(\tilde \phi(Y)) + \bar F(\tilde \phi(Y)-))] \\
    &\qquad= p_2 G(0) (\bar F(\tilde \phi(0)) + \bar F(\tilde \phi(0)-)) + (1 - p_2) \int_0^{\tilde y} G(y) (\bar F(\tilde \phi(y)) + \bar F(\tilde \phi(y)-)) \diff \tilde G(y) \\
    &\qquad\qquad + (1 - p_2) \int_{\tilde y}^\infty G(y) (\bar F(\tilde \phi(y)) + \bar F(\tilde \phi(y)-)) \diff \tilde G(y) \\
    &\qquad= 2 p_2^3 + 2 (1 - p_2) \int_0^{\tilde y} G^2(y) \diff \tilde G(y) + (1 - p_2) (2 - p_1) \int_{\tilde y}^\infty G(y) \diff \tilde G(y) \\
    &\qquad= 2 p_2^3 + \frac{2}{3} (1 - p_1^3 - p_2^3) + 2 p_1^2 - 2 p_1 + (2 - p_1) \bigl(p_1 - \frac{1}{2} p_1^2\bigr) \\
    &\qquad= \frac{2}{3} - \frac{1}{6} p_1^3 + \frac{4}{3} p_2^3, \\
    %------------------------------
    &\mathbb E[G(Y-) (\bar F(\tilde \phi(Y-)) + \bar F(\tilde \phi(Y-)-))] \\
    &\qquad= p_2 G(0-) (\bar F(\tilde \phi(0-)) + \bar F(\tilde \phi(0-)-)) + (1 - p_2) \int_0^{\tilde y} G(y-) (\bar F(\tilde \phi(y-)) + \bar F(\tilde \phi(y-)-)) \diff \tilde G(y) \\
    &\qquad\qquad + (1 - p_2) \int_{\tilde y}^\infty G(y-) (\bar F(\tilde \phi(y-)) + \bar F(\tilde \phi(y-)-)) \diff \tilde G(y) \\
    &\qquad= 2 (1 - p_2) \int_0^{\tilde y} G^2(y) \diff \tilde G(y) + (1 - p_2) (2 - p_1) \int_{\tilde y}^\infty G(y) \diff \tilde G(y) \\
    &\qquad= \frac{2}{3} (1 - p_1^3 - p_2^3) + 2 p_1^2 - 2 p_1 + (2 - p_1) \bigl(p_1 - \frac{1}{2} p_1^2\bigr), \\
    %------------------------------
    &\mathbb E[\bar F(\tilde \phi(y))] \\
    &\qquad= p_2 \bar F(\tilde \phi(0)) + (1 - p_2) \int_0^{\tilde y} \bar F(\tilde \phi(y)) \diff \tilde G(y) + (1 - p_2) \int_{\tilde y}^\infty \bar F(\tilde \phi(y)) \diff \tilde G(y) \\
    &\qquad= p_2^2 + (1 - p_2) \int_0^{\tilde y} G(y) \diff \tilde G(y) + (1 - p_2) \int_{\tilde y}^\infty (1 - p_1) \diff \tilde G(y) \\
    &\qquad= p_2^2 + (1 - p_2) p_2 \tilde G(\tilde y) + \frac{1}{2} (1 - p_2)^2 \tilde G^2(\tilde y) + (1 - p_1) (1 - p_2) (1 - \tilde G(\tilde y)) \\
    &\qquad= p_2^2 + p_2 (1 - p_1 - p_2) + \frac{1}{2} (1 - p_1 - p_2)^2 + (1 - p_1)(1 - p_2 - (1 - p_1 - p_2)) \\
    &\qquad= (p_1 + p_2) (1 - p_1) + \frac{1}{2} (1 - p_1 - p_2)^2.
\end{align*}
We now turn to the computation of the terms involving $\tilde \psi$. We can follow the same argument as for $\tilde \phi$ to find that
\begin{align*}
    \bar G(\tilde \psi(x)) &= \begin{cases}
        0 & \text{if } x < 0, \\
        F(x) & \text{if } 0 \leq x \leq \tilde x, \\
        1 - p_2 & \text{if } \tilde x < x
    \end{cases}, \quad \text{and} \quad \bar G(\tilde \psi(x)-) = \begin{cases}
        0 & \text{if } x < 0, \\
        F(x) & \text{if } 0 \leq x \leq \tilde x, \\
        1 & \text{if } \tilde x < x,
    \end{cases} \\
    \bar G(\tilde \psi(x-)) &= \begin{cases}
        0 & \text{if } x \leq 0, \\
        F(x) & \text{if } 0 \leq x \leq \tilde x, \\
        1 - p_2 & \text{if } \tilde x < x,
    \end{cases} \quad \text{and} \quad \bar G(\tilde \psi(x-)-) = \begin{cases}
        0 & \text{if } x \leq 0, \\
        F(x) & \text{if } 0 < x \leq \tilde x, \\
        1 & \text{if } \tilde x < x,
    \end{cases}
\end{align*}
where $\tilde x \geq 0$ is the point such that $(1 - p_1) \tilde F(\tilde x) = 1 - p_1 - p_2$. Again, we first consider several general integrals that are relevant for the computation of the terms involving $\tilde \psi$. We find that
\begin{align*}
    &(1 - p_1) \int_0^{\tilde x} \bar F(x) F(x) \diff \tilde F(x)  \\
    &\qquad= (1 - p_1) \int_0^{\tilde x} (1 - p_1 - (1 - p_1) \tilde F(x)) (p_1 + (1 - p_1) \tilde F(x)) \diff \tilde F(x) \\
    &\qquad= (1 - p_1)^2 p_1 \int_0^{\tilde x} 1 \diff \tilde F(x) + (1 - p_1)^2 (1 - 2 p_1) \int_0^{\tilde x} \tilde F(x) \diff \tilde F(x) - (1 - p_1)^3 \int_0^{\tilde x} \tilde F^2(x) \diff \tilde F(x) \\
    &\qquad= (1 - p_1)^2 p_1 \tilde F(\tilde x) + \frac{1}{2} (1 - p_1)^2 (1 - 2p_1) \tilde F^2(\tilde x) - \frac{1}{3} (1 - p_1)^3 \tilde F^3(\tilde x) \\
    &\qquad= (1 - p_1) p_1 (1 - p_1 - p_2) + \frac{1}{2} (1 - 2 p_1) (1 - p_1 - p_2)^2 - \frac{1}{3} (1 - p_1 - p_2)^3, \\
    %------------------------------
    &(1 - p_1) \int_{\tilde x}^\infty \bar F(x) \diff \tilde F(x) \\
    &\qquad= (1 - p_1) \int_{\tilde x}^\infty (1 - p_1 - (1 - p_1) \tilde F(x)) \diff \tilde F(x) \\
    &\qquad= (1 - p_1)^2 \int_{\tilde x}^\infty 1 \diff \tilde F(x) - (1 - p_1)^2 \int_{\tilde x}^\infty \tilde F(x) \diff \tilde F(x) \\
    &\qquad= (1 - p_1)^2 (1 - \tilde F(\tilde x)) - \frac{1}{2} (1 - p_1)^2 \bigl(1 - \tilde F^2(\tilde x)\bigr) \\
    &\qquad= (1 - p_1) p_2 + \frac{1}{2} (1 - 2 p_1 - 2 p_2 + 2 p_1 p_2 + p_1^2 + p_2^2 - 1 + 2 p_1 - p_1^2) \\
    &\qquad= \frac{1}{2} p_2^2
\end{align*}
Now we can compute the relevant terms as
\begin{align*}
    &\mathbb E[\bar F(X) (\bar G(\tilde \psi(X)) + \bar G(\tilde \psi(X)-))] \\
    &\qquad= p_1 \bar F(0) (\bar G(\tilde \psi(0)) + \bar G(\tilde \psi(0)-)) + (1 - p_1) \int_0^{\tilde x} \bar F(x) (\bar G(\tilde \psi(x)) + \bar G(\tilde \psi(x)-)) \diff \tilde F(x) \\
    &\qquad\qquad + (1 - p_1) \int_{\tilde x}^\infty \bar F(x) (\bar G(\tilde \psi(x)) + \bar G(\tilde \psi(x)-)) \diff \tilde F(x) \\
    &\qquad= 2 p_1^2 (1 - p_1) + 2 (1 - p_1) \int_0^{\tilde x} \bar F(x) F(x) \diff \tilde F(x) + (1 - p_1) (2 - p_2) \int_{\tilde x}^\infty \bar F(x) \diff \tilde F(x) \\
    &\qquad= 2 p_1^2 (1 - p_1) + 2 (1 - p_1) p_1 (1 - p_1 - p_2) \\
    &\qquad\qquad + (1 - 2 p_1)(1 - p_1 - p_2)^2 - \frac{2}{3} (1 - p_1 - p_2)^3 + \frac{1}{2} p_2^2 (2 - p_2) \\
    &\qquad= 2 p_1 (1 - p_1) (1 - p_2) + (1 - 2 p_1) (1 - p_1 - p_2)^2 - \frac{2}{3} (1 - p_1 - p_2)^3 + \frac{1}{2} p_2^2 (2 - p_2), \\
    %------------------------------
    &\mathbb E[\bar F(X-) (\bar G(\tilde \psi(X-)) + \bar G(\tilde \psi(X-)-)] \\
    &\qquad= p_1 \bar F(0-) (\bar G(\tilde \psi(0-)) + \bar G(\tilde \psi(0-)-)) + (1 - p_1) \int_0^{\tilde x} \bar F(x-) (\bar G(\tilde \psi(x-)) + \bar G(\tilde \psi(x-)-)) \diff \tilde F(x) \\
    &\qquad\qquad + (1 - p_1) \int_{\tilde x}^\infty \bar F(x-) (\bar G(\tilde \psi(x-)) + \bar G(\tilde \psi(x-)-)) \diff \tilde F(x) \\
    &\qquad= 2 (1 - p_1) \int_0^{\tilde x} \bar F(x) F(x) \diff \tilde F(x) + (1 - p_1) (2 - p_2) \int_{\tilde x}^\infty \bar F(x) \diff \tilde F(x) \\
    &\qquad= 2 (1 - p_1) p_1 (1 - p_1 - p_2) + (1 -  2p_1) (1 - p_1 - p_2)^2 - \frac{2}{3} (1 - p_1 - p_2)^3 + \frac{1}{2} (2 - p_2)  p_2^2, \\
    %------------------------------
    &\mathbb E[\bar G(\tilde \psi(X-)-)] \\
    &\qquad= p_1 \bar G(\tilde \psi(0-)-) + (1 - p_1) \int_0^{\tilde x} \bar G(\tilde \psi(x-)-) \diff \tilde F(x) + (1 - p_1) \int_{\tilde x}^\infty \bar G(\tilde \psi(x-)-) \diff \tilde F(x) \\
    &\qquad= (1 - p_1) \int_0^{\tilde x} F(x) \diff \tilde F(x) + (1 - p_1) \int_{\tilde x}^\infty 1 \diff \tilde F(x) \\
    &\qquad= (1 - p_1) p_1 \int_0^{\tilde x} 1 \diff \tilde F(x) + (1 - p_1)^2 \int_0^{\tilde x} \tilde F(x) \diff \tilde F(x) + (1 - p_1) (1 - \tilde F(\tilde x)) \\
    &\qquad= (1 - p_1) p_1 \tilde F(\tilde x) + \frac{1}{2} (1 - p_1)^2 \tilde F^2(\tilde x) + (1 - p_1) - (1 - p_1) \tilde F(\tilde x) \\
    &\qquad= p_1 (1 - p_1 - p_2) + \frac{1}{2} (1 - p_1 - p_2)^2 + (1 - p_1) - (1 - p_1 - p_2) \\
    &\qquad= p_1 (1 - p_1 - p_2) + \frac{1}{2} (1 - p_1 - p_2)^2 + p_2.
\end{align*}
Now substituting all of these results into Eq.~\eqref{eq:spm_disc_min} yields
\begin{align*}
    \rho_{\min} &= p_1^3 + p_2^3 - 1.
\end{align*}

\end{document}